\documentclass[12pt]{article}
\usepackage{amsmath}
\usepackage{graphicx}
\usepackage{enumerate}
\usepackage{natbib}
\usepackage{enumitem}
\usepackage{mathrsfs}
\usepackage{microtype}
\usepackage[sectionbib]{bibunits}
\defaultbibliographystyle{apalike} 
\defaultbibliography{bibliography2}
\usepackage[title]{appendix}
\input{commands.tex}
\usepackage{url} 
\newcommand*{\RRd}{\mathbb{R}^d}
\newcommand*{\Ll}{\mathscr{L}}

\newcommand{\blind}{1}

\begin{document}

\def\spacingset#1{\renewcommand{\baselinestretch}%
{#1}\small\normalsize} \spacingset{1}


\if1\blind
{
  \title{\bf On model-based clustering with entropic optimal transport}
  \author{Gonzalo Mena\hspace{.2cm}\\
    Department of Statistics and Data Science, Carnegie Mellon University\\
    }
  \maketitle
} \fi

\if0\blind
{
  \bigskip
  \bigskip
  \bigskip
  \begin{center}
    {\LARGE\bf On model-based clustering with entropic optimal transport}
\end{center}
  \medskip
} \fi

\bigskip
\begin{abstract}
We develop a new methodology for model-based clustering. Optimizing the log-likelihood provides a principled statistical framework for clustering, with solutions found via the EM algorithm. However, because the log-likelihood is nonconvex, convergence to stationary points is all that can be guaranteed, and practitioners often use multiple starting points in the hope that one will converge to the global solution. We consider a new loss function based on entropic optimal transport with the same global optimum as the log-likelihood, but a much better-behaved landscape so that spurious local optima configurations known to be pervasive for the log-likelihood are avoided. Similar to the EM algorithm for the log-likelihood, this new loss can be optimized by the Sinkhorn-EM algorithm, which we show converges at a rate comparable to that of EM. By analyzing extensive numerical experiments and two real-world applications in image segmentation in C. elegans microscopy and clustering in spatial transcriptomics, we show that this new loss outperforms log-likelihood optimization, indicating that it represents a valuable clustering methodology for practitioners. \end{abstract}

\noindent%
{\it Keywords:}  Optimal Transport, Entropic Regularization, EM Algorithm, Gaussian Mixture Models,
\vfill

\newpage
\spacingset{1.1} 
\begin{bibunit}
\section{Introduction}


Cluster analysis is one of the most ubiquitous tasks in statistics and data science, as the need to determine latent group structure arises in myriad applications, including bioinformatics \citep{cheng2000biclustering} and social network analysis \citep{handcock2007model}, among many others. The mainstream statistical approach to clustering is the so-called model-based approach \citep{mclachlan19829,bouveyron2019model}, which hinges on the inference of an underlying probabilistic generative model that typically takes the form of a finite \textit{mixture} of distributions, where each component represents an observation model for samples within a cluster.


The primary inferential approach for model-based clustering is to perform maximum likelihood estimation on the mixture model, and the Expectation Maximization (EM) algorithm  \citep{Dempster1977} is the most widely used computational tool for such inference, as it exploits the inherent latent structure in mixture models. However, the log-likelihood in mixture models is typically a non-convex function, and the EM algorithm often converges to spurious stationary points \citep{biernacki2003choosing}. To address this issue, practitioners resort to heuristics such as choosing sensible initializations (e.g., k-means++) or running the algorithm with several seeds and keeping the one with the highest likelihood \citep{biernacki2003choosing}. While practical, these heuristics are not guaranteed to work and may lead to the systematic selection of suboptimal solutions \citep{steinley2003local,franti2019much}.


We propose a new methodology for model-based clustering by incorporating tools from Optimal Transport  and show that they mitigate the risk of being trapped in suboptimal solutions. Optimal transport (OT) \citep{Villani2008} is a mathematical framework that provides a rich measure of the distance between distributions and has attracted considerable attention in statistics, computer vision, machine learning, and related fields (see, e.g., \citet{Peyre2019} for a survey). We focus on \textit{entropic optimal transport}, a variant of the original optimal transport problem that incorporates an entropic penalty term. This formulation possesses computational \citep{Cuturi2013} and statistical advantages over its unregularized version \citep{MenaNilesWeed2019}.


This paper is organized as follows: in Section \ref{sec:framework}, we develop a parameter-estimation framework for mixture models by optimizing an entropic optimal-transport-derived loss and describe some basic properties. In Section \ref{sec:em}, we describe an algorithm for optimization of this loss, paralleling the EM algorithm for the log-likelihood and establishing elementary convergence properties. In Section \ref{sec:local}, we show that our methodology can avoid bad stationary points in cases where the log-likelihood cannot. In Section \ref{sec:twogaussians}, we give an in-depth convergence and convergence analysis in the elementary mixture of two Gaussians case. In Section \ref{sec:experiments}, we illustrate the benefits of our method on simulated data. Then, in Sections \ref{sec:celegans} and \ref{sec:coclustering}, we describe two real-world examples. Finally, in Section \ref{sec:disc}, we outline future directions and discuss the limitations of our methodology.

\subsection{Related Work}


Our work contributes to scholarship on model-based clustering methodology \citep{mclachlan19829,bouveyron2019model}. Although not the main focus, our theoretical results connect to recent theoretical analysis of convergence for the EM algorithm seeking to obtain rigorous convergence guarantees in simplified settings~\citep{Xu2018,BalWaiYu17,Dwivedi2018,wu2021randomly}. Our local minima analysis is heavily inspired by the recent discussion of the structure of spurious local optima for the log-likelihood \citep{jin2016local,chen2020likelihood}. 


The use of optimal transport-based losses for parameter estimation has been recently advocated in \cite{Bernton2019}. Closer to our work is \cite{Rigollet2018}, who show that under some conditions, maximizing the log-likelihood on a deconvolution model (e.g., a mixture) yields \textit{the same} solution as solving an entropic optimal transport problem. This observation motivates the question of which scenarios we may expect to benefit from solving the entropic optimal transport problem rather than maximizing the likelihood, and our methodology provides a concrete example of such a scenario.


Several papers have advocated the use of optimal transport for clustering \citep{cuturi2014fast,kolouri2018sliced,Genevay2019,nejatbakhsh2020probabilistic}, demonstrating its benefits over likelihood-based approaches. In particular, \cite{nejatbakhsh2020probabilistic} first introduced Sinkhorn-EM, which we study in detail here. Some of these papers have suggested that the benefits of optimal transport stem from avoiding spurious local optima \citep{kolouri2018sliced}, and empirical results in related work \citep{yan2023learning} support this perspective, yet the question has not been explored in mathematical detail. Our work provides a coherent, unifying perspective on these phenomena, complementing and substantially extending the preliminary results of \cite{mena2020sinkhorn}, in which the Sinkhorn-EM algorithm was originally introduced.


\subsection{Preliminaries}
Suppose that an i.i.d. sequence $Y_1,\ldots, Y_n\in\mathbb{R}^d$ is generated from an arbitrary $Q_\ast$. To model these data we consider a parametric family on $\RRd$ given by a mixture of $K$ components defined as follows: for each component we specify a location parameter $\theta_k\in\mathbb{R}^d$ and additional parameters $\nu_k$ (such as a variance or scale), so that the density of each component writes as $q_{\theta_k,\nu_k}(y)=q^{\nu_k}(y-\theta_k)$, and $q^\nu(\cdot)$ is a template density. Then, for a vector of weights $\alpha$ in the probability simplex, we complete the specification of our model using the distribution $Q_\theta$ with density
\begin{equation}\label{eq:model} dQ_\theta(y) = \sum_{k=1}^K \alpha_k q_{\theta_k}(y)dy,\end{equation}
where $\alpha_k\in (0,1)$ are mixture weights assumed to add up to one. With slight abuse of notation, we suppressed $\nu$ above and encapsulated all parameters into a single $\theta_k$. We will denote $\Theta$ as the parameter space, and $\alpha$ may or may not be treated as a parameter. To indicate that $\alpha$ is fixed, we will write this set as $\Theta_\alpha$. Note that if $P_\theta$ is the measure concentrated on the location parameters $\theta_k$'s,\begin{equation}\label{eq:ptheta}P_\theta=\sum_{k=1}^K \alpha_k\delta(\theta_k).\end{equation}
Then, $Q_\theta$ is obtained by marginalizing over $x$ the joint distribution $Q^{X,Y}_{\theta}$ with density
\begin{equation}\label{eq:joint}
dQ^{X,Y}_{\theta}(x,y)= q_x(y)dP_\theta(x) \dd y=q(y-x)dP_\theta(x) \dd y. \end{equation}
In the Appendix \ref{sec:preliminaries} we state technical assumptions on the the densities $q_\theta$ and $Q_\ast$ and the parameter space. We emphasize that although some our specialized theoretical results focus on the Gaussian Mixture Model (GMM), i.e., $q_{\theta_k,\nu_k}(y)=\mathcal{N}(y;\theta_k,\Sigma_k)$ (where $\nu_k=\Sigma_k$ can be treated as fixed or as a parameter itself), our methodology is general enough to accommodate the usual multi-dimensional mixtures of location-scale families, Student's $t$ distributions, etc, as long as we impose lower bound on scale parameters to avoid degeneracies. 
To make explicit the dependence on the weights, we will write $\ell(\theta,\alpha)$. In some cases it will be convenient to see $P_\theta$ as a measure $p_\theta$ on the set $[K]:=\{1,\ldots,K\}$ such that $p_\theta(k) = P(X=\theta_k) =\alpha_k$. Likewise, we may see $Q^{X,Y}_{\theta}$ as a measure on $[K]\times\mathbb{R}^d$.

Given the above parametric model and data-generating mechanism, we define the population negative log-likelihood as
\begin{equation} \label{eq:ell} \ell(\theta)=-\mathbb{E}\left(\log \frac{d}{dy} Q_\theta(Y)\right). \end{equation}

In the next section, we will define a surrogate $\Ll(\theta)$ for $\ell(\theta)$ based on entropic optimal transport. Finally, we note that we can consider empirical versions of $\Ll(\theta)$ and $\ell(\theta)$ by taking expectations with respect to the empirical measure $Q_N$ rather than $Q_\ast$. While most of the elementary results grounding our methodology hold in both the population and sample cases (e.g., most of Proposition \ref{prop:L} and Theorem \ref{teo:em}), some others (e.g, Theorems \ref{theo:local}, \ref{teo:mixtureloss}, and \ref{teo:mixture}) rely heavily on the analysis of the \textit{population} quantity.

\section{The entropic optimal transport loss $\Ll(\theta)$}

\label{sec:framework}
To define a new loss function $\Ll(\theta)$ for estimation in the model \eqref{eq:model} based on entropic Optimal Transport, we must first provide some elementary definitions. We refer the reader to, e.g., \cite{Peyre2019,chewi2024statistical} for comprehensive treatments of the topic.

Let $P$ and $Q$ be two probability measures on $\RRd$. Given a cost function $c: \RRd \times \RRd\rightarrow \RR$, we define the entropy-regularized optimal transport loss between $P$ and $Q$ as
\begin{equation}\label{eq:sinkdef}
S(P,Q) := \inf_{\pi\in \Pi(P,Q)} \Big[\int_{\RRd \times \RRd}  c(x,y)\,\dd\pi(x,y) +  H(\pi\lvert P\otimes Q)\Big].
 \end{equation}
where $\Pi(P,Q)$ is the set of all joint distributions with marginals equal to $P$ and $Q$, respectively, 
and $H(\alpha\lvert \beta)$ denotes the relative entropy between measures $\alpha$ and $\beta$ defined as $\int \log \frac{\dd\alpha}{\dd\beta}(x)\dd\alpha(x)$ if $\alpha \ll \beta$ and $+ \infty$ otherwise. A useful alternative representation of $S(P,Q)$ is in terms of a reference Gibbs kernel with density proportional to $\exp(-c(x,y))\dd P(x)\dd y$, that holds whenever $Q$ has a density with respect to $\mathcal{L}$ (see \cite{MenaNilesWeed2019} for details):
\begin{equation}\label{eq:sinkdef2}
S(P,Q) = \inf_{\pi\in \Pi(P,Q)}  H(\pi\lvert e^{-c} P\otimes \mathcal{L})-H(Q\lvert \mathcal{L}).
 \end{equation}
We define the entropic OT loss function for parameter estimation in model \eqref{eq:model} as $\Ll(\theta):=S(P_\theta,Q_{\ast})$. To make the correspondence precise with model \eqref{eq:model} we must restrict to cost functions of the form $c(x,y)=-\log q_x(y)=-\log q(y-x)$ so that we recover the joint densities $Q^{X,Y}_\theta$ in \eqref{eq:joint} as the set of Gibbs kernels in \eqref{eq:sinkdef2}, i.e.,
\begin{equation}\label{eq:L}
    \Ll(\theta):=S(P_\theta, Q_{\ast})=\inf_{\pi\in \Pi(P_\theta,Q_{\ast})}  H\left(\pi\lvert Q^{X,Y}_\theta\right)-H( Q_\ast\lvert \mathcal{L}).
\end{equation}
 For example, in the simplest Gaussian case we let $c(x,y)=\frac{||x-y||^2}{2\sigma^2}$, and for a mixture of Laplace distributions we make $c(x,y)=\frac{|x-y|}{b}$ (up to additive constants). As the last term in \eqref{eq:L} doesn't depend on $\theta$, $\Ll(\theta)$ is indeed a discrepancy between the model \eqref{eq:joint} and data. Under this definition, the dependence on the location parameters is fully encapsulated in the base measure $P_\theta$, and by varying the cost function, we can also represent dependencies on other parameters, such as variances and scales.

Note that $\Ll$ is intimately related to $\ell$. First, in the well-specified case, we have that $\Ll(\theta^\ast)=\ell(\theta^\ast)$ at the true $\theta^\ast$. To see this, we note that if $\theta=\theta^\ast$ the coupling $\pi$ achieving the infimum is exactly the joint $Q^{X,Y}_{\theta^\ast}$ in \eqref{eq:joint}, so the relative entropy term is zero. Second, we have that $\Ll(\theta)\geq \ell (\theta)$. This follows from another useful alternative representation of the entropic OT loss, the so-called semi-dual formulation \cite{Peyre2019}:
\begin{equation}\label{eq:semidual0}S(P,Q)=\sup_{\omega\in L_1(P)} \int \omega(x)dP(x) -\int \log \left( \int e^{\omega(x)-c(x,y)}dP(x)\right)dQ(y),\end{equation}
where the semi-dual potential $\omega$ is a bounded function. In our context $\omega$ is a $K$-dimensional vector so
\begin{equation}\label{eq:semidual} \Ll(\theta)=\sup_{\omega\in \mathbb{R}^K}\Bigg[ \sum_{k=1}^K \alpha_k\omega_k -\mathbb{E} \left(\log \left( \sum_{k=1}^K \alpha_k e^{\omega_k}q_{\theta_k}(y)\right)\right)\Bigg], \end{equation}
and since optimization over $\omega$ includes $\omega=0$ it follows that $\Ll(\theta)\geq \ell(\theta)$. 

The semi-dual potentials can be interpreted as a way of \textit{tilting} the original weights $\alpha$: for each $\theta$ denote $\omega(\theta)$ the one maximizing \eqref{eq:semidual}  for each $\theta$ denote $\omega(\theta)$ the one maximizing \eqref{eq:semidual} (this $\omega(\theta)$ is uniquely defined once fixing one coordinate, as shown in Lemma \ref{lemma:alphaprop} in the Appendix) and define the vector $\alpha(\theta)\in\mathbb{R}^K$ as
\begin{equation}\label{eq:alphatheta} \alpha_k(\theta)=\frac{\alpha_{k}e^{\omega_k(\theta)}}{\sum_{k'=1}^K \alpha_{k'}e^{\omega_{k'}(\theta)}}.\end{equation}
Then, as we show in Appendix \ref{sub:semidual},
\begin{equation}\label{eq:semidual2} \Ll(\theta)= H\left(\alpha \lvert \alpha(\theta)\right)  -\mathbb{E} \left(\log \left( \sum_{k=1}^K \alpha_k(\theta) q_{\theta_k}(Y)\right)\right). \end{equation}
Therefore, the computation of $\Ll(\theta)$ amounts to the computation of $\ell(\theta)$ for a \textit{tilted} model with weights $\alpha(\theta)$, plus a relative entropy term. In particular, since this relative entropy doesn't depend on $\theta$, we have that 
\begin{equation}\label{eq:gradientL}\nabla_\theta \Ll(\theta)=\nabla_\theta \ell(\theta,\alpha(\theta)).
\end{equation}
While the analysis of the second derivatives of $\Ll$ is more complicated, it is possible to show that $\Ll$ has strictly more curvature. We summarize all this discussion in the following proposition. A complete proof appears in Appendix \ref{app:propL}.

\begin{proposition}\label{prop:L}
Let $\ell$ and $\Ll$ be as in \eqref{eq:ell} and \eqref{eq:L}, respectively. The following statements are true
\begin{itemize}
\item[(a)] $\Ll(\theta)$ and $\alpha(\theta)$ are twice continuously differentiable for any $\theta \in int(\Theta)$.
\item[(b)] $\Ll(\theta)\geq \ell(\theta)$ for any $\theta\in \Theta$
\item[(c)] $\Ll(\theta^\ast) = \ell(\theta^\ast)$ if $Q_\ast=Q_{\theta^\ast}$ and $\theta^\ast \in int(\Theta)$
\item[(d)] For any $\theta \in int(\Theta_\alpha)$, $\nabla_\theta \Ll(\theta) = \nabla_\theta \ell(\theta,\alpha(\theta))$,
\item[(e)] $\Ll$ has more curvature than $\ell$ around $\theta^*$ if $Q_\ast=Q_{\theta^\ast}$. Specifically, if $\theta^\ast \in int(\Theta_\alpha)$, $$\nabla^2_{\theta\theta} \Ll(\theta^*) = \nabla^2_{\theta\theta} \ell(\theta^*) +B^\top(\theta^\ast)A^{-1}(\theta^\ast)B(\theta^\ast),$$
where $A(\theta^\ast)$ is a $K-1\times K-1$ definite positive matrix. Explicit expressions for $A(\theta^\ast)$, $B(\theta^\ast)$ are given in Appendix \ref{app:propL}.
\item[(f)] $\Ll$ is a coercive function, i.e., $\Ll(\theta)\to \infty$ whenever $\theta\to\infty$.
\end{itemize}
(a), (b) and (d) and (f) are valid both in the population and sample case, while (c) and (d) are valid only in the population versions, in the well-specified setup ($Q_\ast=Q_{\theta^\ast}$).
\end{proposition}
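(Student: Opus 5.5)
The whole argument runs through the semi-dual representation \eqref{eq:semidual}. Write $F(\theta,\omega)=\sum_k\alpha_k\omega_k-\mathbb{E}\log\big(\sum_k\alpha_k e^{\omega_k}q_{\theta_k}(Y)\big)$, so that $\Ll(\theta)=\sup_\omega F(\theta,\omega)$. $F$ is invariant under $\omega\mapsto\omega+c\mathbf{1}$, so we fix $\omega_K=0$ and regard $\omega\in\mathbb{R}^{K-1}$. For fixed $\theta$, $F$ is strictly concave in these $K-1$ coordinates. Its negative Hessian is $A(\theta)=\mathbb{E}[\mathrm{diag}(r)-rr^\top]$ restricted to the first $K-1$ coordinates, where $r_k(Y)=\alpha_k e^{\omega_k}q_{\theta_k}(Y)/\sum_j\alpha_j e^{\omega_j}q_{\theta_j}(Y)$ are the tilted responsibilities. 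This matrix is positive definite because $r$ is not a.s. constant when the $q_{\theta_k}$ are distinct. Existence and uniqueness of the maximizer $\omega(\theta)$ is Lemma \ref{lemma:alphaprop}.

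For (a), the stationarity condition reads $\alpha_k=\mathbb{E}\,r_k(Y)$ for $k<K$. Under the regularity assumptions of Appendix \ref{sec:preliminaries}, which allow differentiation under $\mathbb{E}$ twice, $F$ is $C^2$, and in fact $C^3$ in $\omega$. The implicit function theorem, applied with invertible Jacobian $A(\theta)$, then gives $\omega(\theta)\in C^1$. It follows that $\alpha(\theta)$ is $C^1$ as well, and $C^2$ if we assume one more derivative of $q$. By the envelope theorem, $\Ll(\theta)=F(\theta,\omega(\theta))$ has gradient $\nabla_\theta\Ll=\partial_\theta F(\theta,\omega(\theta))$, which is $C^1$, so $\Ll\in C^2$.

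The remaining parts follow from this setup.

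(b) Taking $\omega=0$ gives $F(\theta,0)=\ell(\theta)$, so $\Ll(\theta)\ge\ell(\theta)$.

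(c) When $Q_\ast=Q_{\theta^\ast}$, the responsibilities at $\omega=0$ satisfy $\mathbb{E}\,r_k=\int\alpha_k q_{\theta^\ast_k}=\alpha_k$. So $\omega=0$ is stationary, and by concavity it is the maximizer. Hence $\Ll(\theta^\ast)=\ell(\theta^\ast)$ and $\alpha(\theta^\ast)=\alpha$.

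(d) We have $\partial_\theta F(\theta,\omega)=-\mathbb{E}\nabla_\theta\log\sum_k\alpha_k e^{\omega_k}q_{\theta_k}$, which is exactly $\nabla_\theta\ell(\theta,\alpha')$ with tilted weights $\alpha'_k\propto\alpha_k e^{\omega_k}$. The normalization constant does not depend on $\theta$. Evaluating at $\omega(\theta)$ and using \eqref{eq:alphatheta} gives the claim.

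(e) Differentiate $\nabla_\theta\Ll=\partial_\theta F(\theta,\omega(\theta))$ once more:
\[
\nabla^2\Ll=\partial^2_{\theta\theta}F+\partial^2_{\theta\omega}F\,\nabla_\theta\omega .
\]
Implicit differentiation gives $\nabla_\theta\omega=A^{-1}\partial^2_{\omega\theta}F$. Setting $B=\partial^2_{\omega\theta}F(\theta^\ast,0)$ and using $\omega(\theta^\ast)=0$ from (c), we get $\partial^2_{\theta\theta}F(\theta^\ast,0)=\nabla^2\ell(\theta^\ast)$. The correction term is $B^\top A^{-1}B$, which is positive semidefinite. The explicit form of $B$ is $B_{k\cdot}=-\mathbb{E}[\nabla_\theta r_k]$, computed at $\omega=0$.

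(f) For coercivity, $\ell$ alone may fail, since sending one $\theta_k\to\infty$ only kills that component. The extra ingredient is that $\Ll$ forces mass $\alpha_k$ to be transported to $\theta_k$. Use the primal form \eqref{eq:sinkdef}: any coupling sends mass $\alpha_k$ from $\theta_k$. The cost $c(x,y)=-\log q(y-x)$ tends to $\infty$ as $|y-x|\to\infty$, assuming the template decays, e.g.\ tails at least as light as $e^{-|\cdot|}$. Also, $Q_\ast$ has tight mass, so a fraction at least $\alpha_k/2$ of that transported mass lands within a bounded ball, and its cost is at least $\inf_{|y|\le R}c(\theta_k,y)\to\infty$. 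The entropy term $H(\pi\mid P\otimes Q)\ge0$ cannot offset this. Hence $\Ll\ge \tfrac{\alpha_k}{2}\inf_{|y|\le R}c(\theta_k,y)-C\to\infty$ whenever $|\theta_k|\to\infty$, and in the sample case the same argument works with $R=\max_i|Y_i|$. If scale parameters are included, the lower bound on scales handles degeneracy, and large scales inflate $c$ through the normalizing constant.

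I expect the main obstacles to be the regularity bookkeeping in (a)/(e), namely justifying differentiation under the integral and making $A$ uniformly invertible, and the tail assumption needed for (f).
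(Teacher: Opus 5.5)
Your proof is correct, with one faulty justification noted at the end. Parts (a), (b), (d) and (e) follow the paper. Your $F$ is its $\tilde{\Ll}$, and the implicit function theorem together with the first-order condition $\nabla_\omega\tilde{\Ll}(\theta,\omega(\theta))=0$ gives the gradient identity and the Schur-complement formula. Two small differences help you. In (c) you check that $\omega=0$ is dual-stationary, rather than the paper's primal check that $Q^{X,Y}_{\theta^\ast}\in\Pi(P_{\theta^\ast},Q_\ast)$, and this gives you $\omega(\theta^\ast)=0$ for (e) directly. The envelope theorem also gets $\Ll\in C^2$ from $\omega\in C^1$ alone. The extra derivative you hedge on for $\alpha(\theta)\in C^2$ is already assumed in (C2)/(C4); the paper uses it to show $\tilde{\Ll}$ is three times continuously differentiable.

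The genuinely different part is (f). The paper works on the dual side. From $\Ll(\theta)=H(\alpha|\alpha(\theta))+\ell(\theta,\alpha(\theta))$ and the marginal constraint $\mathbb{E}\,\Psi_k=\alpha_k$, it shows by reverse Fatou that the tilted weights of non-escaping components vanish. The tilted mixture density then tends to $0$ pointwise, and the envelope (C6) forces $\ell(\theta^n,\alpha(\theta^n))\to\infty$. You instead lower-bound the primal objective over every coupling, which buys several things:
\begin{itemize}
\item it is quantitative;
\item it works verbatim for $Q_N$;
\item it handles every sequence with $\|\theta\|\to\infty$ at once: fix $R$ with $Q_\ast(B_R^c)\le\min_k\alpha_k/2$ and take the maximum over $k$;
\item it avoids the paper's exchange of $\limsup$ with a ratio and its implicit subsequence extraction.
\end{itemize}
The paper's route stays inside the tilted-likelihood picture and shows the mechanism, namely that tilting pushes weight onto the escaping components. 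Be precise about what your bound needs. You need $c$ bounded below, i.e.\ $\sup q<\infty$; this is what your ``$-C$'' uses to control the cost of the rest of the plan. You also need $q(z)\to0$ as $\|z\|\to\infty$, which is (C5) for location families. No exponential tail is required, so Student-$t$ mixtures are covered.

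One justification is wrong and should be replaced. $A(\theta)$ is not positive definite ``because $r$ is not a.s.\ constant when the $q_{\theta_k}$ are distinct.'' For each $y$, $z^\top(\mathrm{diag}(r)-rr^\top)z=\sum_k r_kz_k^2-(\sum_k r_kz_k)^2$ is the variance of $z$ under the probability vector $r(y)$. This vanishes only if $z$ is constant on $[K]$, because every $r_k(y)>0$ by (C3). Padding with $z_K=0$ then forces $z=0$; this is the paper's Laplacian argument. Variability of $r$ in $Y$ plays no role; if anything it decreases the matrix, since $\mathbb{E}[rr^\top]\succeq\bar r\bar r^\top$. Relying on distinctness would also exclude parameters with coinciding components, which (a) covers. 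The implicit function theorem in (a), uniqueness in (c) and invertibility in (e) all rest on this step, so use the correct argument.
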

Fig. \ref{fig:global}A illustrates some aspects of Proposition \ref{prop:L}. As a consequence of (b) and (c), in the population limit, $\Ll$ is also minimized at $\theta^\ast$, and the curvature statement (e) and coercivity statement (f) suggest that $\Ll$ might be a better optimization objective than $\ell$. For example, $\ell$ fails to be coercive even in the case of a mixture of two Gaussians with fixed variances and different location parameters $\theta_1,\theta_2$. Still, since both $\Ll$ and $\ell$ are typically non-convex, many local optima may exist for both functions, so not much can be deduced from Proposition \ref{prop:L} alone.  In Section \ref{sec:local}, we show that for a mixture of Gaussians, $\Ll$ will typically avoid spurious stationary point configurations that are pervasive for $\ell$, making a much stronger case in favor of $\Ll$ as an optimization objective. Before that, in the following section, we describe an algorithm for optimizing $\Ll$ and make a parallel with the usual EM algorithm.

\begin{figure*}
\includegraphics[width=1.0\textwidth]{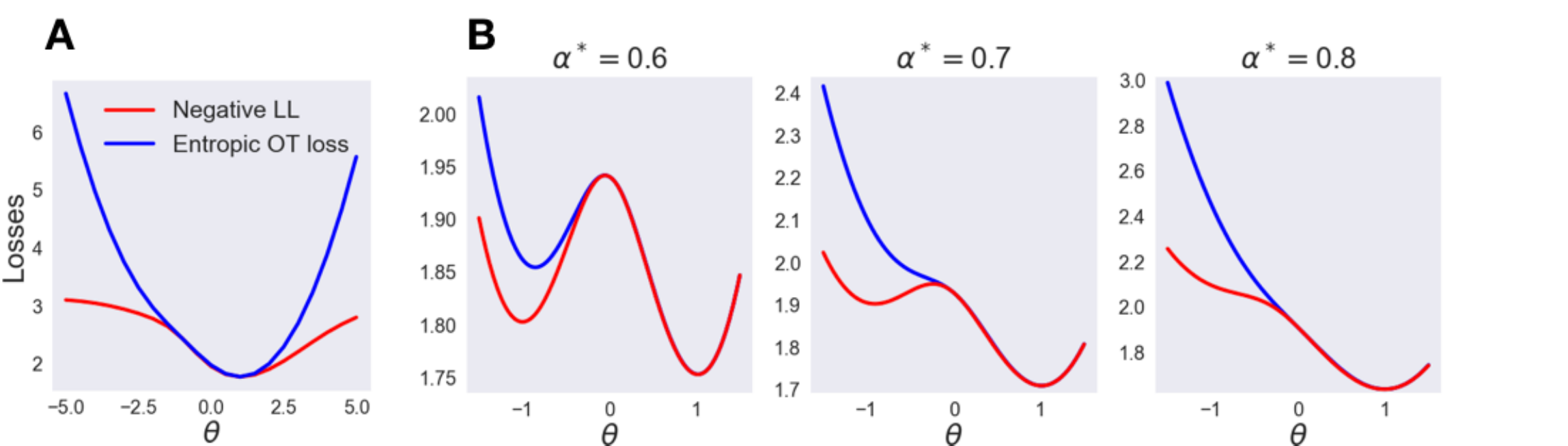}
\caption{Qualitative comparison between the log-likelihood and entropic OT loss. \textbf{A} By Proposition \ref{prop:L}, the entropic OT loss dominates the negative log-likelihood but has a larger curvature around the minimum. \textbf{B} By Theorem \ref{teo:mixture}, in the model \eqref{eq:mixtureg} $\Ll$ has fewer spurious local minima than $\ell$ for some values of $\alpha^\ast$.}
\label{fig:global}
\end{figure*}

\vspace{-1cm}
\section{Algorithmic aspects: Sinkhorn-EM}\label{sec:em}

In principle, we could consider any first-order method to find stationary points of $\Ll$ and $\ell$. In the case of $\ell$ practitioners typically appeal to the EM algorithm (an instance of a first-order method \cite{Xu199}). This algorithm exploits the underlying latent structure of the mixture model \eqref{eq:model}, and its appeal is ultimately justified by a rich body of work establishing its theoretical guarantees (e.g., \cite{Redner1984,BalWaiYu17}). Pivoting on the relationship between $\ell$ and $\Ll$ from the previous section, we describe an EM-type algorithm for optimizing $\Ll$, which we name Sinkhorn-EM, and establish some elementary convergence guarantees. \\
The main observation is that, as a consequence of the well-known variational representation of the log-likelihood $\mathbb{E}\left(\log \frac{d}{dy}Q_\theta(Y)\right)$, we can write: (see Lemma 1 in \cite{Neal1998} for a similar formulation)
\begin{equation}\label{eq:ellvariational}
    \ell(\theta)=\inf_{\pi\in \Pi(\cdot,Q_{\ast})}  H\left(\pi\lvert Q^{X,Y}_\theta\right)-H\left(Q_\ast\lvert \mathcal{L}\right),
\end{equation}
where the set $\Pi(\cdot,Q_{\ast})$ is the set of joint distributions with arbitrary first marginal and second marginal $Q_{\ast}$. By disintegration, this set can be represented by $Q_{\ast}$ along with a set of conditionals $\pi(\cdot|y)$. 
The EM algorithm exploits this variational representation, as it can be understood as coordinate descent on $\theta$ and $\pi$ to minimize $\ell$ in \eqref{eq:ellvariational} \citep{Neal1998,CsiszarT1984}. 

By comparing \eqref{eq:L} and \eqref{eq:ellvariational}, we see that $\Ll$ and $\ell$ only differ in that the variational representation of the former has an additional constraint. This observation motivates the definition of Sinkhorn-EM (SEM) as the algorithm performing coordinate ascent on $\theta$ and $\pi$ to minimize the variational representation \eqref{eq:L}.

Sinkhorn-EM only differs from EM in the E-step. For the EM algorithm, this step corresponds to the computation of the so-called \textit{responsibilities}, the set of conditionals of $X$ given $y$ at the current $\theta$ in the joint model \eqref{eq:joint}, i.e., the set of measures 
\begin{equation}\label{eq:resp}  \Psi_k\left(y,\theta,\alpha\right):= \dd Q^{X,Y}_{\theta,\alpha}(k|y)=\frac{\alpha_k q_{\theta_k}(y)}{\sum_{k=1}^K\alpha_k q_{\theta_k}(y)}.\end{equation}
Instead, Sinkhorn-EM solves an entropic Optimal Transport problem on the E-step. If $\pi_\theta$ is the optimal coupling to \eqref{eq:L} we have
\begin{equation}\label{eq:respsinkhorn}  \Psi_k\left(y,\theta,\alpha(\theta)\right)=\pi_\theta(k|y)= \dd Q^{X,Y}_{\theta,\alpha(\theta)}(k|y)=\frac{\alpha_k(\theta) q_{\theta_k}(y)}{\sum_{k=1}^K\alpha_k(\theta) q_{\theta_k}(y)}.\end{equation}

where $\alpha(\theta)$ is the \textit{tilted} version of $\alpha$ defined in \eqref{eq:alphatheta} so that the constraint on the first marginal expresses as
\begin{equation}\label{eq:respsinkhorn}\mathbb{E}\left(\Psi_k\left(Y,\theta,\alpha(\theta)\right)\right)=\alpha_k.\end{equation} 

Therefore, the familiar understanding of the E-step as the computation of \textit{responsibilities} is retained, although in a somewhat different sense. This signifies an extra computational burden compared to \eqref{eq:resp}: while each E-step of the standard EM algorithm takes time $O(n \cdot K)$, where $n$  is the sample size, the E-step of Sinkhorn-EM involves a convex optimization problem and can be solved by an efficient, celebrated algorithm due to~\citet{Sinkhorn1967} which converges in near-linear time, i.e. $\tilde O(n \cdot K)$ \citep{Altschuler2017}. However, this complexity bound hides unfavorable dependency on parameters such as the infinity norm of the underlying cost matrix and the error tolerance. In practice, this implies substantially slower execution times for each E-step of SEM. However, as we will see in Sections \ref{sec:local}, \ref{sec:experiments}, \ref{sec:celegans}, and \ref{sec:coclustering}, this overhead will be compensated by the fact that SEM will, in practice, reach better solutions compared to EM.

We can now state the Sinkhorn-EM algorithm and establish an elementary convergence property. A proof appears in Appendix \ref{app:em}.
\begin{theorem}\label{teo:em}
Suppose that weights are fixed, so that $\Theta=\Theta_\alpha$. Let Sinkhorn-EM be the algorithm defined as the sequence $(\pi^t,\theta^t)$ from the $E$ and $M$ steps below, with $\theta^0\in int(\Theta_\alpha)$:
\begin{subequations}
\begin{align}
\label{eq:estep}&\textbf{E-step}:\quad & \pi^{t+1}& =\pi_{\theta^t}= \argmin_{\pi\in \Pi\left(P_{\theta^{t}},Q_{\ast}\right)}  H\left(\pi\lvert Q^{X,Y}_{\theta^{t}}\right)\\
\label{eq:mstep} &\textbf{M-step}:\quad   &  \theta^{t+1} & = \argmax_{\theta\in\Theta_\alpha} \mathbb{E}\left(\sum_{k=1}^K\pi^{t+1}(k|Y) \log \left(\alpha_k q_{\theta_k}(Y)\right)\right)\\
&  & & \nonumber =\argmax_{\theta\in\Theta_\alpha} \mathbb{E}\left(\sum_{k=1}^K\Psi_{k}(Y,\theta^t,\alpha(\theta^t)) \log \left(\alpha_k q_{\theta_k}(Y)\right)\right),
\end{align}
\end{subequations}
where $\theta^0$ is arbitrary. Then, the sequence $\{\Ll(\theta^{t})\}$ is non-increasing and $\Ll(\theta^{t+1})<\Ll(\theta^t)$ if $\theta^t$ is not a stationary point of $\Ll$. Moreover, $\theta^t$ converges to a stationary point of $\Ll$.
\end{theorem}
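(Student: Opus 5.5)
We treat Sinkhorn-EM as block coordinate descent on the joint objective
$$F(\pi,\theta):=H\left(\pi\lvert Q^{X,Y}_\theta\right)-H(Q_\ast\lvert\mathcal{L}),$$
with $\pi$ restricted to $\Pi(P_\theta,Q_\ast)$. Since $\alpha$ is fixed, $P_\theta$ has fixed masses $\alpha_k$ on the labels $[K]$. So, viewing couplings as measures on $[K]\times\mathbb{R}^d$, the constraint set $\Pi_\alpha:=\{\pi:\pi(k,\cdot)\text{ has mass }\alpha_k,\ \pi_Y=Q_\ast\}$ does \emph{not} depend on $\theta$. This decoupling is the key observation. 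By \eqref{eq:L}, $\Ll(\theta)=\min_{\pi\in\Pi_\alpha}F(\pi,\theta)$.

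The E-step \eqref{eq:estep} then gives $F(\pi^{t+1},\theta^t)=\Ll(\theta^t)$. Expanding $H(\pi\lvert Q^{X,Y}_\theta)$ for fixed $\pi$, the only $\theta$-dependent part is $-\mathbb{E}\sum_k\pi(k|Y)\log(\alpha_kq_{\theta_k}(Y))$, so the M-step \eqref{eq:mstep} is exactly $\theta^{t+1}=\argmin_\theta F(\pi^{t+1},\theta)$. Existence of this minimizer needs the standing assumptions of Appendix \ref{sec:preliminaries}; for Gaussians it is the weighted mean. Combining the two steps with feasibility of $\pi^{t+1}\in\Pi_\alpha$ gives
$$\Ll(\theta^{t+1})\le F(\pi^{t+1},\theta^{t+1})\le F(\pi^{t+1},\theta^t)=\Ll(\theta^t),$$
which is monotonicity.

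For strict decrease, suppose $\Ll(\theta^{t+1})=\Ll(\theta^t)$. Then equality holds in the middle inequality, so $\theta^t$ also minimizes $F(\pi^{t+1},\cdot)$ and $\nabla_\theta F(\pi^{t+1},\theta^t)=0$ by interiority. Because $\pi^{t+1}(k|y)=\Psi_k(y,\theta^t,\alpha(\theta^t))$ by \eqref{eq:respsinkhorn}, this gradient is exactly $\nabla_\theta\ell(\theta^t,\alpha(\theta^t))$ (Fisher's identity for the tilted model). By Proposition \ref{prop:L}(d) it equals $\nabla\Ll(\theta^t)$. Hence $\theta^t$ is stationary, which proves the contrapositive.

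For convergence, we argue as follows.
\begin{itemize}
\item By Proposition \ref{prop:L}(f), $\Ll$ is coercive. Since $\Ll(\theta^t)\le\Ll(\theta^0)$, the iterates stay in a compact sublevel set, so limit points exist and $\Ll(\theta^t)$ converges.
\item Define the map $M(\theta):=\argmin_{\theta'}F(\pi_\theta,\theta')$. It is continuous, using Proposition \ref{prop:L}(a) (smoothness of $\alpha(\theta)$, hence of the responsibilities) and uniqueness of the M-step maximizer, which strict concavity gives in the location-family case.
\item Apply Zangwill's global convergence theorem, or the Wu (1983) argument for EM: every limit point $\bar\theta$ satisfies $\Ll(M(\bar\theta))=\Ll(\bar\theta)$, so it is stationary by the previous paragraph.
\end{itemize}

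The main obstacle is upgrading ``all limit points are stationary'' to convergence of the whole sequence. We would first try the standard route: show $\|\theta^{t+1}-\theta^t\|\to0$ from the sufficient decrease $F(\pi^{t+1},\theta^t)-F(\pi^{t+1},\theta^{t+1})\ge c\|\theta^t-\theta^{t+1}\|^2$. This inequality follows from strong concavity of the M-step objective, given the lower bound on scales. Then the limit set is connected, and if stationary points in the sublevel set are isolated, the sequence converges. Failing isolatedness, we would invoke a Kurdyka–Łojasiewicz argument, using that $\Ll$ is real-analytic in the Gaussian case, to obtain convergence of the full sequence.
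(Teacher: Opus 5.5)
Your monotonicity and strict-decrease arguments are the paper's. Its $m(\theta,\pi)$ plus the $\theta$-free entropy term is your $F$. It runs the same chain $\Ll(\theta^{t+1})\le F(\pi^{t+1},\theta^{t+1})\le F(\pi^{t+1},\theta^{t})=\Ll(\theta^t)$, using the same remark that for fixed $\alpha$ the coupling set does not move with $\theta$. It gets strictness from $\nabla\Ll(\theta)=\nabla_\theta m(\theta,\pi_\theta)$, which is the identity you obtain through Proposition \ref{prop:L}(d).

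You genuinely depart on the last claim. The paper records coercivity (Proposition \ref{prop:L}(f)) and defers to Wu (1983). But Wu's general theorem gives exactly what your Zangwill step gives: $\Ll(\theta^t)$ converges and every limit point is stationary. His results on convergence of the whole sequence need extra hypotheses: the stationary set at the limiting level is a singleton, or it is discrete and $\|\theta^{t+1}-\theta^t\|\to0$. Your sufficient-decrease step supplies the missing ingredient, and it is particularly clean for SEM. With fixed variances the M-step objective has Hessian $-(\mathbb{E}\,\pi^{t+1}(k|Y)/\sigma_k^2)I_d=-(\alpha_k/\sigma_k^2)I_d$ in $\theta_k$. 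So the marginal constraint makes the modulus uniform in $t$, which is not guaranteed for EM, where the modulus is $\mathbb{E}\,\Psi_k(Y,\theta^t,\alpha)/\sigma_k^2$.

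Several points in that extra step need adjusting.
\begin{itemize}
\item The limit-point step does not need uniqueness of the M-step maximizer. Zangwill and Wu only need closedness of the argmax map, which follows from joint continuity of the M-step objective.
\item When scales are estimated, the objective is not concave in $(\theta_k,\sigma_k^2)$. You should derive sufficient decrease in natural parameters on the compact sublevel set instead.
\item The KL route also needs a relative-error bound, which you do have. Since $\nabla_\theta F(\pi_{\theta^t},\theta^{t+1})=0$, you get $\nabla\Ll(\theta^{t+1})=\nabla_\theta F(\pi_{\theta^{t+1}},\theta^{t+1})-\nabla_\theta F(\pi_{\theta^{t}},\theta^{t+1})=O(\|\theta^{t+1}-\theta^t\|)$, by the regularity of $\theta\mapsto\pi_\theta$ (Lemma \ref{lemma:alphader}).
\item Your justification that $\Ll$ is real-analytic in the Gaussian case holds for the empirical loss, but not in general for the population loss. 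In the model \eqref{eq:mixtureg} with $\alpha^\ast=1/2$ one has $\Ll(\theta)=\ell(\theta)=\theta^2/2-\mathbb{E}\log\cosh(\theta Y)+\mathrm{const}$ (Lemma \ref{lemma:alpha12}). Since $\log\cosh$ has radius of convergence $\pi/2$ while $\mathbb{E}\,Y^{2n}\ge(2n-1)!!$, the Taylor series at $\theta=0$ has radius zero. So in the population case the KL property needs another argument at configurations where fitted means coincide.
\end{itemize}

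Finally, isolated stationary points, analyticity and strong concavity of the M-step do not follow from the standing assumptions of Appendix \ref{sec:preliminaries}; the last fails, e.g., for Student-$t$ components. So in full generality neither your argument nor the paper's proves convergence of the whole sequence. Yours at least makes the needed extra structure explicit.
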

 Theorem \ref{teo:em} extends the classical convergence result for EM \citep{Wu1983}, which assumes a stringent coercivity condition on $\ell$. Thanks to our coercivity statement Proposition \ref{prop:L}(f), we are able to derive this result without assuming this property on $\ell$.


\subsection{Weights update}\label{sec:alpha}

Sinkhorn-EM can, in principle, only be deployed in fixed-weights setups: as the E-step imposes a marginal constraint on weights $\alpha$, it is bound to remain on those initial weights. However, nothing prevents us from considering $\Ll$ as a function of $\alpha$ as well, as it is customary for mixture models with unknown weights. To enable weight inference, we consider simple exponentiated gradient  \cite{kivinen1997exponentiated} updates for $\alpha$. The gradient of $\Ll$ with respect to $\alpha$ is given by ($1_K$ is $K$-dimensional all-ones vector)
\begin{equation}
\label{eq:gradLalpha}\nabla_{\alpha} \Ll(\theta,\alpha)=\omega(\theta,\alpha)-1_K.
\end{equation}
Then, for a step-size $\eta>0$, current $\alpha$ and fixed $\theta$, an update for $\alpha$ reads:
\begin{equation}\label{eq:updatealpha}
\alpha^{new}_k = \frac{\alpha_k\exp\left(-\eta \nabla_\alpha \Ll(\theta,\alpha)_k\right)}{\sum_{k=1}^K \alpha_k\exp\left(-\eta \nabla_\alpha \Ll(\theta,\alpha)_k\right)}=\frac{\alpha_k\exp\left(-\eta \omega_k(\theta,\alpha)\right)}{\sum_{k=1}^K \alpha_k\exp\left(-\eta  \omega_k(\theta,\alpha)\right)}.
\end{equation}
Below we show establish the convexity of $\Ll(\theta,\cdot)$ and, under a technical spectral condition, the convergence of i) the iterates \eqref{eq:updatealpha} to the minimizer of $\Ll(\theta,\cdot)$ and ii) convergence of a coordinate descent algorithm over $\theta$ and $\alpha$ to a stationary point of $\Ll$.
\begin{proposition}\label{prop:convalpha}
For each $\theta\in int(\Theta_\alpha)$, $\Ll(\theta,\cdot)$ is convex. Also, if 
 \begin{equation} \label{eq:condkappa} \sup_{\alpha \in(0,1)^K,\sum_{k=1}^K \alpha_k=1} \lambda_2\left(\tilde{D}^{-1/2}(\alpha)\mathbb{E}\left(\Psi(Y,\theta,\alpha(\theta))\Psi^\top (Y,\theta,\alpha(\theta))\right)\tilde{D}^{-1/2}(\alpha)\right) <1, \end{equation}
 where $\tilde{D}(\alpha)$ is the diagonal matrix with entries $\alpha_k$ and $\lambda_2(A)$ denotes the second largest eigenvalue of $A$, 
 then the updates \eqref{eq:updatealpha}  converge to the minimizer of $\Ll(\theta,\alpha)$ if $\eta$ is sufficiently small. Additionally, the following block-coordinate descent algorithm converges to a stationary point of $\Ll$: from $(\theta^0,\alpha^0)\in int(\Theta)$ iteratively, i) finds a stationary point $\theta^{t+1}$ of $\Ll(\cdot,\alpha^t)$ by Sinkhorn-EM ii) finds $\alpha^{t+1}$ by minimizing $\Ll(\theta^{t+1},\alpha)$ using updates. \eqref{eq:updatealpha}
\end{proposition}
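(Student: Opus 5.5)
Convexity in $\alpha$ comes first. For fixed $\theta$, the semi-dual representation \eqref{eq:semidual} writes $\Ll(\theta,\alpha)$ as
\[
\sup_{\omega\in\mathbb{R}^K}\Big[\textstyle\sum_k\alpha_k\omega_k - G(\omega,\alpha)\Big],
\qquad
G(\omega,\alpha)=\mathbb{E}\log\Big(\textstyle\sum_k\alpha_k e^{\omega_k}q_{\theta_k}(Y)\Big).
\]
Convexity is not immediate, because $\alpha$ enters inside the logarithm as well. To remove this, reparametrize by $\beta_k=\omega_k+\log\alpha_k$. The objective becomes
\[
\textstyle\sum_k\alpha_k\beta_k-\sum_k\alpha_k\log\alpha_k-\mathbb{E}\log\big(\sum_k e^{\beta_k}q_{\theta_k}(Y)\big),
\]
and the supremum over $\omega$ is a supremum over $\beta\in\mathbb{R}^K$.

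For each fixed $\beta$, this expression is linear in $\alpha$ plus the convex function $\sum_k\alpha_k\log\alpha_k$, so it is convex in $\alpha$. A pointwise supremum of convex functions is convex. Hence $\Ll(\theta,\cdot)=H(\alpha)^{\ast}$-type conjugate is convex on the simplex. Equivalently, $\Ll(\theta,\cdot)$ is the sum of the negative entropy of $\alpha$ and the convex conjugate of $\beta\mapsto\mathbb{E}\log\sum_k e^{\beta_k}q_{\theta_k}$. The same computation, via the envelope theorem, recovers the gradient formula \eqref{eq:gradLalpha} up to the normalization $\sum_k\alpha_k=1$.

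Convergence of the exponentiated-gradient updates \eqref{eq:updatealpha} is the next step. I view them as mirror descent with entropic mirror map on the simplex. The standard result I would use is: if $f$ is $L$-smooth relative to the negative entropy, then mirror descent with $\eta\le 1/L$ decreases $f$ monotonically and converges to the minimizer, with rate $O(1/t)$ by convexity. So the task is to compute the Hessian of $\Ll(\theta,\cdot)$ and check $\nabla^2_{\alpha\alpha}\Ll\preceq L\,\tilde D^{-1}(\alpha)$ on the tangent space $\{v:\sum v_k=0\}$.

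To compute it, differentiate the first-order condition $\mathbb{E}\Psi_k(Y,\theta,\alpha(\theta))=\alpha_k$ implicitly in $\alpha$. Writing $M=\mathbb{E}(\Psi\Psi^\top)$, the Jacobian of $\omega$ with respect to $\alpha$ is governed by $(\tilde D-M)^{-1}$ on the tangent space, because $\partial_\omega\mathbb{E}\Psi=\tilde D-M$ after the normalization. The Hessian of $\Ll$ in $\alpha$ is then $(\tilde D-M)^{+}$ restricted to the tangent space. In normalized form this is
\[
\tilde D^{-1/2}\big(I-\tilde D^{-1/2}M\tilde D^{-1/2}\big)^{+}\tilde D^{-1/2}.
\]
The matrix $\tilde D^{-1/2}M\tilde D^{-1/2}$ has top eigenvalue $1$ with eigenvector $\sqrt{\alpha}$, which corresponds exactly to the direction orthogonal to the simplex. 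Condition \eqref{eq:condkappa}, namely $\lambda_2\le 1-\kappa$ uniformly in $\alpha$, therefore gives $\nabla^2\Ll\preceq \kappa^{-1}\tilde D^{-1}$ on the tangent space. This is relative smoothness with $L=1/\kappa$, so $\eta<\kappa$ suffices.

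The main obstacle is this Hessian identification. It requires bookkeeping the gauge freedom in $\omega$: it is unique only up to constants, as in Lemma \ref{lemma:alphaprop}. It also requires checking that the uniform bound near the boundary of the simplex survives, possibly by showing that the iterates stay in a compact subset of the interior through the monotone decrease and coercivity.

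For the block-coordinate scheme, each step does not increase $\Ll$. Step i does not increase it by Theorem \ref{teo:em}, and step ii does not increase it by the monotone mirror-descent decrease just established. Since $\Ll$ is bounded below by (b) and coercive by Proposition \ref{prop:L}(f), the iterates $(\theta^t,\alpha^t)$ lie in a compact sublevel set. Any limit point $(\bar\theta,\bar\alpha)$ satisfies $\nabla_\theta\Ll(\bar\theta,\bar\alpha)=0$, by continuity of $\nabla\Ll$ (Proposition \ref{prop:L}(a)) and the fact that each $\theta^{t+1}$ is stationary for $\Ll(\cdot,\alpha^t)$. It also satisfies $\bar\alpha=\arg\min\Ll(\bar\theta,\cdot)$, which uses strict convexity in $\alpha$, itself a consequence of the $\kappa>0$ bound, so that the minimizer map is continuous. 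Together these give stationarity of $\Ll$ at the limit.
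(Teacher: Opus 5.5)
Your convexity argument has a sign error that breaks it. After the substitution $\beta_k=\omega_k+\log\alpha_k$ the objective is $\sum_k\alpha_k\beta_k-\sum_k\alpha_k\log\alpha_k-\Phi(\beta)$, with $\Phi(\beta)=\mathbb{E}\log\sum_k e^{\beta_k}q_{\theta_k}(Y)$. The nonlinear $\alpha$-term $-\sum_k\alpha_k\log\alpha_k$ is the Shannon entropy, which is \emph{concave}. So for fixed $\beta$ the expression is concave in $\alpha$, not convex. What you have actually shown is $\Ll(\theta,\alpha)=\Phi^{\ast}(\alpha)-\sum_k\alpha_k\log\alpha_k$ (with $\Phi^\ast$ the convex conjugate), which is convex plus concave, and convexity does not follow. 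The reparametrization is what hides the convexity. In the original coordinates, for each fixed $\omega$ the map $\alpha\mapsto\sum_k\alpha_k\omega_k-\mathbb{E}\log\bigl(\sum_k\alpha_k e^{\omega_k}q_{\theta_k}(Y)\bigr)$ is linear plus $-\log$ of a positive linear form, hence convex. The supremum over $\omega$ is therefore convex. This is what the paper checks at second order: $\nabla^2_{\alpha\alpha}\tilde{\Ll}=\tilde D^{-1}\mathbb{E}(\Psi\Psi^\top)\tilde D^{-1}\succeq 0$, and the Schur-complement correction $-\nabla^2_{\alpha\omega}\tilde{\Ll}\,(\nabla^2_{\omega\omega}\tilde{\Ll})^{\dagger}\,\nabla^2_{\omega\alpha}\tilde{\Ll}$ is positive semidefinite because $\nabla^2_{\omega\omega}\tilde{\Ll}\preceq 0$ (Lemma~\ref{lemma:hessian}).

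The same slip enters your Hessian. $\mathbb{E}\Psi$ depends on $\alpha$ through $\omega+\log\alpha$, not only through $\omega$. Differentiating $\nabla_\alpha\Ll=\omega(\theta,\alpha)-1_K$ therefore gives, on $\{v:\sum_k v_k=0\}$, $\nabla^2_{\alpha\alpha}\Ll=(\tilde D-M)^{\dagger}-\tilde D^{-1}$, not $(\tilde D-M)^{\dagger}$. As a check: if all components coincide, then $\Psi\equiv\alpha$ and $\Ll(\theta,\cdot)$ is constant, whereas $(\tilde D-\alpha\alpha^\top)^{\dagger}\neq 0$ on the tangent space. The omitted term is $\preceq 0$, so your bound $\nabla^2_{\alpha\alpha}\Ll\preceq(1-\lambda_2)^{-1}\tilde D^{-1}$ stays valid and the relative-smoothness/mirror-descent step survives. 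In normalized coordinates the exact tangent-space eigenvalues are $\lambda_k/(1-\lambda_k)=\lambda_k+\lambda_k^2/(1-\lambda_k)$ for $k\ge 2$. These are precisely the two pieces the paper bounds separately, using the cruder $K\tilde D^{-1}$ for the first, so your constant is in fact sharper.

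However, the correct formula also shows that strict convexity requires $\lambda_K>0$; it does \emph{not} follow from $\lambda_2<1$. The coincident-components example satisfies \eqref{eq:condkappa} with $\lambda_2=0$, yet $\Ll(\theta,\cdot)$ is constant. So drop uniqueness of the $\alpha$-minimizer from your block-coordinate argument. You also need to fix an index mismatch there: stationarity in $\theta$ holds at the pairs $(\theta^{t+1},\alpha^t)$, not at $(\theta^{t+1},\alpha^{t+1})$. Take a subsequential limit $(\hat\theta,\hat\alpha)$ of $(\theta^{t+1},\alpha^t)$ and a further limit $\tilde\alpha$ of $\alpha^{t+1}$. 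Passing to the limit in $\Ll(\theta^{t+1},\alpha^{t+1})\le\Ll(\theta^{t+1},\alpha)$ shows that $\tilde\alpha$ minimizes $\Ll(\hat\theta,\cdot)$. The interleaved values $\Ll(\theta^t,\alpha^t)\ge\Ll(\theta^{t+1},\alpha^t)\ge\Ll(\theta^{t+1},\alpha^{t+1})$ are monotone and converge, which gives $\Ll(\hat\theta,\hat\alpha)=\Ll(\hat\theta,\tilde\alpha)$. Hence $\hat\alpha$ is also a minimizer and $(\hat\theta,\hat\alpha)$ is stationary. The paper instead appeals to the BSUM theorem of Razaviyayn et al.
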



\subsection{The Gaussian Mixture Model}
Our more specialized theoretical results in Sections \ref{sec:local} and \ref{sec:twogaussians}, as well as our empirical results in Sections \ref{sec:experiments} and  \ref{sec:celegans} focus on the following GMM:
\begin{equation}\label{eq:gmm} Q_\theta=\sum_{k=1}^K \alpha_k\mathcal{N}(\theta_k,I_d\sigma_k^2)\quad \alpha\in(0,1)^K, \sum_{k=1}^K\alpha_k=1,\sigma_k^2\geq \sigma_0^2>0. \end{equation}
 In that case, the M-step \eqref{eq:mstep} reads
$\theta^{t+1}= F(\theta^t,\alpha(\theta^t))$,
with
\begin{equation}
F(\theta,\alpha)_k:= \frac{\mathbb{E}\left(Y\left(\Psi_k (Y,\theta,\alpha \right)\right)}{\mathbb{E}\left(\Psi_k \left(Y,\theta,\alpha\right)\right)}=\frac{\mathbb{E}\left(Y\left(\Psi_k (Y,\theta,\alpha \right)\right)}{\alpha_k}.
\end{equation}
Note that for the usual EM algorithm, we would otherwise consider the simpler $\theta^{t+1}=F(\theta,\alpha)$ iterations.
\vspace{-1cm}
\section{On stationary points}\label{sec:local}
Theorem \ref{teo:em} in the previous section provides us with machinery to efficiently optimize $\Ll$, a function that in Proposition \ref{prop:L} was shown to have better curvature than $\ell$ around the true $\theta^\ast$. However, since both EM and SEM can only be shown to converge to stationary points of the respective $\ell$ and $\Ll$, and since these stationary points may abound, this local property says little about the success or failure of these algorithms in practice.
Throughout this section, we assume that data was generated by the distribution \eqref{eq:gmm} with equal variances $\sigma^2$, weights $\alpha_k=1/K$, and arbitrary distinct means $\theta^\ast_k$, that we fit to the same family $Q_\theta$ in \eqref{eq:gmm} with varying locations $\theta_k$.
In Theorem \ref{theo:local} below, we make a much stronger case favoring $\Ll$ as optimization objective: when $Q_\theta$ is a spherical mixture of Gaussians with equal weights and variances $\sigma^2$, then $\Ll$ won't possess a type of pathological stationary point, the so-called \textit{many-fit-one} configurations that we show to be pervasive for the log-likelihood. We will rely on the notion of $\delta$-covering of a finite set by another:
\begin{definition}Let $\mathcal{A},\mathcal{B}$ be two subsets of $\mathbb{R}^d$ with $0<|\mathcal{B}|\leq |\mathcal{A}|<\infty$. For $\delta>0$, we say that $\mathcal{A}$ is a $\delta$-covering of $\mathcal{B}$ if we can partition $\mathcal{B}$ into $|\mathcal{A}|$ disjoint nonempty subsets $\{\mathcal{B}_x\}_{x\in\mathcal{A}}$, such that for any $x\in \mathcal{A}$ and $y\in\mathcal{B}_x$, $\lVert x-y\rVert \leq \delta$.
\end{definition}
Unlike the usual covering notions, the above definition requires that both sets be exhausted by one covering the other. In our setup, the sets will be formed by selecting arbitrary components from  $\theta = (\theta_1, \ldots, \theta_K)$. Specifically, for a set of indices $G\subseteq [K]$ and $\theta\in\mathbb{R}^{d\times K}$ we define $G(\theta):=\{\theta_k,k\in G\}$.
To state our result, we assume $d=1$ so we can arrange the values $\theta_k$ in increasing order and denote them $\theta_{(k)}$. In the general case $d>1$, we can apply Theorem \ref{theo:local} to each coordinate, as shown in the proof (Appendix \ref{sec:suplocal}).

\begin{theorem}\label{theo:local}
and for arbitrary $1\leq k_1\leq K-1$  define $G_1$ as the indices of the $k_1$ smallest $\theta^\ast_k$'s. Let $\Delta$ be the distance between the two groups, i.e.
$$\Delta_{\theta^\ast,k_1} := \inf_{g_1\in G_1,g_2\in G_1^c}|\theta^\ast_{g_1}-\theta^\ast_{g_2}|=\theta^\ast_{(k_1+1)}-\theta^\ast_{(k_1)}>0.$$

For an arbitrary $\theta\in \Theta$ and set of indices $I\subseteq [K]$ with size $|I|>k_1$, $I(\theta)$ is a $\delta$-covering of $G_1(\theta^\ast)$. 
 Then, for any $\gamma$ with $0<\gamma<1-k_1/|I|$, if the following separation condition holds
 \begin{equation}\label{eq:snr}\Delta_{\theta^\ast,k_1}\geq \frac{K\sigma}{\sqrt{2\pi}\left((1-\gamma)|I|-k_1\right)},\end{equation}
then $\theta$ cannot be a stationary point of $\Ll$ if $\delta<\gamma \Delta_{\theta^\ast,k_1}$.
\end{theorem}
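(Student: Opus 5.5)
Suppose, for contradiction, that $\theta$ is a stationary point of $\Ll$. The plan is to combine the stationarity equations, which are fixed-point equations for Sinkhorn-EM, with the marginal constraint built into the E-step, and show the two are incompatible when too many fitted centers sit near the left group $G_1(\theta^\ast)$.

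\textbf{Step 1: stationarity as a barycenter condition.} By Proposition~\ref{prop:L}(d), $\nabla_\theta\Ll(\theta)=\nabla_\theta\ell(\theta,\alpha(\theta))$. For the equal-variance Gaussian model this gives
$$\theta_k=F(\theta,\alpha(\theta))_k=\frac{\mathbb{E}\big(Y\,\Psi_k(Y,\theta,\alpha(\theta))\big)}{\alpha_k}\quad\text{for each }k.$$
The denominator is $\alpha_k=1/K$ by the marginal constraint \eqref{eq:respsinkhorn}. Hence each $\theta_k$ is the mean of $Y$ under a reweighting of $Q_\ast$ whose total mass is exactly $1/K$. Summing over $k\in I$ gives
$$\sum_{k\in I}\theta_k=K\,\mathbb{E}\big(Y\,w(Y)\big),\qquad w(y):=\sum_{k\in I}\Psi_k(y,\theta,\alpha(\theta))\in[0,1],\qquad \mathbb{E}\,w(Y)=|I|/K.$$

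\textbf{Step 2: the covering forces the means to the left.} Every element of $I(\theta)$ is within $\delta<\gamma\Delta$ of some point of $G_1(\theta^\ast)$. So, with $t:=\theta^\ast_{(k_1)}+\gamma\Delta$ as the threshold, $\theta_k\le t$ for all $k\in I$, and therefore $\mathbb{E}(Y w(Y))\le |I|\,t/K$.

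\textbf{Step 3: the marginal constraint forces mass to the right.} On the other side, $w$ must carry mass $|I|/K$, while $Q_\ast$ places only mass $k_1/K$ on the components in $G_1$. Decompose $Q_\ast=\frac1K\sum_j\mathcal{N}(\theta^\ast_j,\sigma^2)$. The idea is that $w$ must absorb at least $(|I|-k_1)/K$ from the right-group components, whose means are at least $\theta^\ast_{(k_1)}+\Delta$.

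To lower-bound $\mathbb{E}(Y w(Y))$, first shift coordinates so that $\theta^\ast_{(k_1)}=0$. Then take the worst case over $w\in[0,1]$ with the prescribed mass. This is a bathtub / rearrangement argument: the extremal $w$ is the indicator of a left half-line containing mass $|I|/K$. Compute $\mathbb{E}\big(Y\,\mathbf 1\{Y\le s\}\big)$ componentwise. Each component contributes
$$\tfrac1K\Big(\theta^\ast_j\,\Phi\big(\tfrac{s-\theta^\ast_j}{\sigma}\big)-\sigma\,\phi\big(\tfrac{s-\theta^\ast_j}{\sigma}\big)\Big),$$
and the Gaussian-density terms total at most $K\cdot\sigma/\sqrt{2\pi}$ in magnitude after multiplying by $K$, since $\phi\le 1/\sqrt{2\pi}$. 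This is the origin of the constant $K\sigma/\sqrt{2\pi}$ in \eqref{eq:snr}.

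For the location terms, the left group contributes at least $\min_{G_1}\theta^\ast$ weighted appropriately. The right group contributes at least $\Delta$ times its absorbed mass, which is at least $(|I|-k_1)/K$. The left-group locations lie below $0$, so they need care: I would bound them using the fact that Step 2 compares against the covering points themselves. Concretely, the quantity to compare is $\sum_{k\in I}(\theta_k-\text{matched }\theta^\ast)$ rather than raw positions, so the left-group means cancel up to $\delta$ per index. This yields, roughly,
$$|I|\,\gamma\Delta\;\ge\;\sum_{k\in I}\big(\theta_k-\theta^\ast_{m(k)}\big)\;\ge\;(|I|-k_1)\Delta-\frac{K\sigma}{\sqrt{2\pi}},$$
which rearranges to $\Delta\big((1-\gamma)|I|-k_1\big)<K\sigma/\sqrt{2\pi}$, contradicting \eqref{eq:snr}. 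For $d>1$, I would apply this coordinatewise as indicated.

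\textbf{Main obstacle.} The hardest step is making this last comparison rigorous. The weights $\Psi_k$ are tilted responsibilities with unknown $\alpha(\theta)$, so I can use only $0\le w\le1$ and the exact mass constraint, not the specific form of $\Psi$. I must then show that the worst-case rearrangement still loses at most $K\sigma/\sqrt{2\pi}$ relative to the idealized picture in which the left-group mass sits exactly at the $\theta^\ast_j$. I would first try to bound, for each Gaussian component separately, $\mathbb{E}_j\big((Y-\theta^\ast_j)\,w(Y)\big)\ge-\sigma/\sqrt{2\pi}$; this holds because $\mathbb{E}_j\big((Y-\theta^\ast_j)^-\big)=\sigma/\sqrt{2\pi}$. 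Summing over $j$ then gives the $K\sigma/\sqrt{2\pi}$ slack directly, and the remaining term $\sum_j\theta^\ast_j\,\mathbb{E}_j w$ is handled by the mass-allocation argument.
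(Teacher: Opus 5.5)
Your argument is correct. It reaches the same key inequality as the paper, $|I|\,\delta\ge(|I|-k_1)\Delta-K\sigma/\sqrt{2\pi}$, but organizes the middle step differently.

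\textbf{The paper's route.} The paper proves the mirror statement, with $I(\theta)$ covering the right group, and then flips signs. It never bounds the covering group directly. Instead it:
\begin{itemize}
\item sums $\theta_k=K\,\mathbb{E}(Y\Psi_k)$ over all $k$ to obtain the balance identity $\sum_k\theta_k=\sum_k\theta^\ast_k$;
\item bounds the \emph{complementary} group $I^c$ using only $0\le f\le 1$, via the pointwise truncation $Yf(Y)\ge Y\mathbf{1}_{Y\le 0}$ (origin at $\theta^\ast_{(k_1)}$) and the truncated-Gaussian mean formula;
\item transfers this bound to $I$ through the balance identity.
\end{itemize}

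\textbf{Your route.} You bound $\sum_{k\in I}\theta_k=K\,\mathbb{E}(Yw)$ directly. The centered per-component estimate $\mathbb{E}_j\big((Y-\theta^\ast_j)w\big)\ge-\sigma/\sqrt{2\pi}$ isolates the Gaussian slack. The marginal constraint then appears explicitly as $\sum_j m_j=|I|$ with $m_j=\mathbb{E}_j w\in[0,1]$. The linear program over the $m_j$ therefore forces at least $|I|-k_1$ units of mass onto right-group components.

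\textbf{What each buys.} Your version makes clear that the whole mechanism is the excess responsibility mass $(|I|-k_1)/K$. It also shows the global balance identity is not essential. Indeed, the paper's complement step is equivalent to a pointwise truncation bound applied to $w$ itself, with the marginal constraint entering only through the common denominators $1/K$. The paper's version is shorter and matches the ``balance condition'' intuition given in the main text. Both coarsen the slack to the same constant $K\sigma/\sqrt{2\pi}$.

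\textbf{Two points to tidy.} First, Step 2 and the half-line rearrangement are never used in your final argument, so you can drop them.

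Second, and more importantly, the ``cancellation'' in your last display depends on the exhaustion part of the covering. The matching $m:I\to G_1$ must be onto, so that $\sum_{k\in I}\theta^\ast_{m(k)}$ equals $\sum_{j\in G_1}\theta^\ast_j$ plus $|I|-k_1$ repeated terms, each at most $\theta^\ast_{(k_1)}=0$. The one-sided condition you invoked in Step 2 (each $\theta_k$ is $\delta$-close to some point of $G_1(\theta^\ast)$) would not suffice. An unmatched true center far to the left would make $\sum_{j\in G_1}\theta^\ast_j$ arbitrarily smaller than $\sum_{k\in I}\theta^\ast_{m(k)}$, and the inequality would collapse. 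You should state this use of surjectivity explicitly. The paper's definition of $\delta$-covering is garbled as written, but this onto-assignment reading is the one both proofs rely on.
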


Let us parse the main implication of Theorem \ref{theo:local}: as long as we can identify two groups of sufficiently separated true mixture components, \textit{many-fit-one} stationary points will be excluded for $\Ll$. These configurations are those in which we can cover $k_1$ true components with a strictly larger number of fitted components, $I(\theta)$. Intuitively, our result follows from the fact that minimizing $\Ll$ imposes an additional constraint on how mass is split for stationary points, that $\sum_{k=1}^K\theta_k=\sum_{k=1}^K\theta^\ast_k$. This balance condition, which doesn't hold for stationary points of the log-likelihood, cannot occur if many more fitted components are placed near a lower number of true components $\theta^\ast$.

This result is complemented by Proposition \ref{prop:bad}, showing an explicit construction of a \textit{many-fit-one} configuration that turns out to be a spurious local optimum for $\ell$.

\begin{proposition}\label{prop:bad}
Consider a mixture of three Gaussians in $\mathbb{R}^2$ with equal weights and variances $\sigma^2=1$ and true locations $\theta^\ast_1=(0,D)$ ,$ \theta^\ast_2=(0,-D)$ and $\theta^\ast_3=(R,0)$. For $\varepsilon>0$ define the region $\mathcal{R}^\varepsilon$ in $\mathbb{R}^6$ 
$$\mathcal{R}^\varepsilon:=\biggl\{\theta=(\theta_1^1,\theta_1^2,\theta_2^1,\theta_2^2,\theta_3^1,\theta_3^2): \theta_1^1< \frac{R}{3},\theta_2^1> \frac{2R}{3}, \theta_3^1> \frac{2R}{3},||\theta^2||<\varepsilon\biggl\}.$$
If $R$ is large enough, then $\ell$ has a spurious local minima in $\theta\in\mathcal{R}^\varepsilon$. 
\end{proposition}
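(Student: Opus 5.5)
The plan is to exhibit a compact-ish subregion of $\mathcal{R}^\varepsilon$ on whose boundary $\ell$ is strictly larger than at some interior point. Then $\ell$ attains a minimum over the closure of $\mathcal{R}^\varepsilon$ in its interior, which is a local minimum of $\ell$. Spuriousness is checked separately by comparing with $\ell(\theta^\ast)$.

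\textbf{Step 1 (decoupling).} Split $Q_\ast=\tfrac13\mathcal{N}(\theta^\ast_1,I)+\tfrac13\mathcal{N}(\theta^\ast_2,I)+\tfrac13\mathcal{N}(\theta^\ast_3,I)$ into a left part (the first two components) and a right part (the third). On $\mathcal{R}^\varepsilon$, component $1$ lies at horizontal distance at least $R/3$ from $\theta^\ast_3$, and components $2,3$ lie at horizontal distance at least $2R/3$ from $\theta^\ast_{1},\theta^\ast_2$. Splitting $\mathbb{E}\log$ according to which true component generated $Y$, I would show
\[
\ell(\theta)=\ell_L(\theta_1)+\ell_R(\theta_2,\theta_3)+r(\theta),
\]
with
\[
\ell_L(\theta_1)=-\tfrac23\,\mathbb{E}_{L}\log\bigl(\tfrac13\phi(Y-\theta_1)\bigr),\qquad
\ell_R=-\tfrac13\,\mathbb{E}_{R}\log\bigl(\tfrac13\phi(Y-\theta_2)+\tfrac13\phi(Y-\theta_3)\bigr).
\]
The remainder satisfies $r\ge 0$ and $r=O(\mathrm{poly}(R)e^{-cR^2})$ uniformly on $\mathcal{R}^\varepsilon$. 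The sign $r\geq0$ is immediate, because dropping terms inside the log only increases $-\log$. The upper bound comes from the Gaussian tail of the true component $Y$ crossing a horizontal line at distance of order $R$. Here $D$ is fixed and $R\to\infty$. This one-sided structure means I only need the upper bound on $r$ at the interior candidate point.

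\textbf{Step 2 (the decoupled minimizer).} The function $\ell_L$ is a strictly convex quadratic in $\theta_1$, minimized at the left mean $(0,0)$, with Hessian $\tfrac23 I$. Since $-\mathbb{E}_R\log(\cdot)$ is a cross entropy, $\ell_R$ is minimized exactly when $\tfrac12(\phi(\cdot-\theta_2)+\phi(\cdot-\theta_3))=\phi(\cdot-\theta_3^\ast)$. By identifiability of Gaussian mixtures this forces $\theta_2=\theta_3=(R,0)$. The candidate $\bar\theta=((0,0),(R,0),(R,0))$ lies in the interior of $\mathcal{R}^\varepsilon$, and $\ell(\bar\theta)\le \ell_L^\ast+\ell_R^\ast+O(e^{-cR^2})$.

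\textbf{Step 3 (boundary gap), the main obstacle.} I need $\inf_{\partial\mathcal{R}^\varepsilon}(\ell_L+\ell_R)\ge \ell_L^\ast+\ell_R^\ast+c_0$ with $c_0=c_0(\varepsilon,D)>0$ independent of $R$. The three boundary pieces are handled as follows.
\begin{itemize}
\item On $\theta_1^1=R/3$, the gap is at least $\tfrac13(R/3)^2$.
\item On $\theta_2^1=2R/3$ (or $\theta_3^1=2R/3$), that component contributes essentially nothing to the right log-likelihood. Hence $\ell_R\ge -\tfrac13\mathbb{E}_R\log(\tfrac13\phi(Y-\theta_3))-o(1)\ge \ell_R^\ast+\tfrac13\log 2-o(1)$.
\item On $\|\theta^2\|=\varepsilon$, the gap comes from the strong convexity of $\ell_L$ in $\theta_1^2$, or from the strict positivity of the KL divergence $\mathrm{KL}\bigl(\mathcal{N}(\theta^\ast_3,I)\,\|\,\tfrac12\mathcal{N}(\theta_2,I)+\tfrac12\mathcal{N}(\theta_3,I)\bigr)$ when $\theta_2^2$ or $\theta_3^2$ is bounded away from $0$.
\end{itemize}
The $\varepsilon$ piece is the delicate one, since $\ell_R$ is translation invariant in $R$. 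I would recentre at $(R,0)$ so that the problem no longer depends on $R$, and then use continuity and compactness to get a uniform positive lower bound. The unboundedness of the region ($\theta_2^1,\theta_3^1\to\infty$) is treated the same way: a component escaping to infinity again costs at least $\tfrac13\log2-o(1)$, so a minimizing sequence stays in a compact set. Taking $R$ large enough that $O(e^{-cR^2})<c_0/2$ then gives an interior minimum of $\ell$ over $\mathcal{R}^\varepsilon$, hence a local minimum of $\ell$.

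\textbf{Step 4 (spuriousness).} At this local minimum, $\ell\ge \ell_L^\ast+\ell_R^\ast$. Moreover $\ell_L^\ast$ exceeds the left part of $\ell(\theta^\ast)$ by the KL divergence between the two-component left mixture and its best single-Gaussian fit. This gap is a positive constant depending only on $D>0$, whereas $\ell(\theta^\ast)$ differs from its decoupled version by only $O(e^{-cR^2})$. Hence, for $R$ large, $\ell$ at the local minimum strictly exceeds $\ell(\theta^\ast)$, and the local minimum is spurious.
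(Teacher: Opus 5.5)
Your strategy is the same mechanism as the paper's proof: exhibit an interior point whose value lies strictly below the infimum of $\ell$ on the boundary, with the gaps read off from an $R$-independent decoupled problem. (The paper, following Jin et al., only computes $R\to\infty$ limits of $\ell$ at $((0,0),(R,0),(R,0))$ and of its infima on the faces $\theta_1^1=R/3$, $\theta_2^1=2R/3$, $\theta_3^1=2R/3$.) Your treatment of the cylinder $\|\theta^2\|=\varepsilon$, of escape to infinity and of spuriousness covers points the paper leaves implicit.

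However, Step 1 has the sign of the remainder backwards, and the sign is load-bearing. Both $\ell_L$ and $\ell_R$ are obtained by deleting summands inside the logarithm, and deleting summands \emph{increases} $-\log$. So $\ell_L+\ell_R\ge\ell$, i.e.\ $r\le 0$. The inequality you get for free is therefore the one at the candidate, $\ell(\bar\theta)\le\ell_L^\ast+\ell_R^\ast$. Everything in the other direction needs a \emph{lower} bound on $r$: the boundary gaps and the escape argument in Step 3, and $\ell\ge\ell_L^\ast+\ell_R^\ast$ at the minimizer in Step 4. Your backup claim that $|r|=O(\mathrm{poly}(R)e^{-cR^2})$ uniformly on $\mathcal{R}^\varepsilon$ is false, because $\mathcal{R}^\varepsilon$ is unbounded. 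Take $\theta_2=\theta_3=(R,0)$ and $\theta_1=(-M,0)$. For $M\gg R$ the left data are much closer to $\theta_2,\theta_3$ than to $\theta_1$, so $r\to-\infty$ as $M\to\infty$. The same happens on the right when both $\theta_2,\theta_3$ are far from $(R,0)$.

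These regimes are harmless, but they must be handled on $\ell$ directly. For instance, $-\log\bigl(\tfrac13\sum_k\phi(Y-\theta_k)\bigr)\ge\log(2\pi)+\tfrac12\min_k\|Y-\theta_k\|^2$ shows that, for $R\gg D$, $\ell$ is of order $R^2$ on $\mathcal{R}^\varepsilon$ whenever $\|\theta_1\|\ge R/6$ or $\min_{k=2,3}\|\theta_k-(R,0)\|\ge R/6$. This also disposes of the face $\theta_1^1=R/3$. On the complementary set each data cluster is dominated by its own components, so $|r|$ really is uniformly $O(\mathrm{poly}(R)e^{-cR^2})$. Your remaining estimates (the face $\theta_2^1=2R/3$, the cylinder, a single component escaping) then go through there. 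Moreover, since $\ell(\hat\theta)\le\ell(\bar\theta)$, the interior minimizer $\hat\theta$ lies in this set, which is exactly what Step 4 needs.

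Finally, the bookkeeping in Step 4 drops the mixing weights. With $Q_L=\tfrac12\mathcal{N}(\theta_1^\ast,I)+\tfrac12\mathcal{N}(\theta_2^\ast,I)$ one has $\ell_L^\ast+\ell_R^\ast-\ell(\theta^\ast)=\tfrac13\log 2+\tfrac23\,\mathrm{KL}\bigl(Q_L\,\|\,\mathcal{N}(0,I)\bigr)+o(1)$. The left cluster pays $\tfrac23\log 2$ for being served by a single component of weight $\tfrac13$, and the right cluster gains $\tfrac13\log 2$ from being served by two. The conclusion survives, but not for the reason you state.
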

Informally, the spurious optimum $\theta$ lies on a neighborhood of the configuration $\theta_1\approx\frac{1}{2}\left(\theta^\ast_1+\theta^\ast_2\right)=(0,0)$ and both $\theta_2\approx \theta_3\approx \theta^\ast_3=(R,0)$. This construction extends the one given \cite{jin2016local} on a one-dimensional setup, and illustrates the potentially catastrophic effect of spurious local minima for the negative log-likelihood: regardless of how strong a signal can be on the $y$ axis (as indicated by $D$), there is a local optimum that essentially destroys this information, as it averages out $\theta_1^\ast$ and $\theta_2^\ast$. Consequently, only the signal in the $x$ axis is recovered. (see Fig \ref{fig:bad}A).  Contrarily, Theorem \ref{theo:local} allows us to rule out stationary points for $\Ll$ over $\mathcal{R}^\varepsilon$ if $R$ and $D$ are large enough, as shown in Fig. \ref{fig:bad}B. This is stated as follows (a proof appears in Appendix \ref{app:statio})
\begin{corollary}\label{cor:statio}
If $R\geq 3D$ and $D\geq 2$ then $\Ll$ cannot have a stationary point in  $\mathcal{R}^\varepsilon$.
\end{corollary}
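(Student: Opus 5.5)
We will apply Theorem \ref{theo:local} to the first coordinate ($x$-axis), as the proof of that theorem indicates for $d>1$. The true locations projected on the first coordinate are $\theta^{\ast,1}_1=\theta^{\ast,1}_2=0$ and $\theta^{\ast,1}_3=R$. We take $k_1=1$, with $G_1$ the index of one of the components projected to $0$. It would be natural to group both points at $0$. However, the theorem as stated orders the $\theta^\ast_{(k)}$ and needs $\Delta>0$, so we treat the projected multiset carefully. The cleaner choice is the reverse grouping: the upper group is the single component at $R$, so the $k_1$ "smallest" are replaced by the largest one, using the symmetry $y\mapsto -y$. The lower group $\{0,0\}$ has gap $\Delta=R$ to it. Concretely, reflect the first coordinate so that $\theta^\ast_3$ becomes the smallest, with $G_1=\{3\}$, $k_1=1$ and $\Delta_{\theta^\ast,1}=R$.

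Next we take $\theta\in\mathcal{R}^\varepsilon$ and $I=\{2,3\}$, so $|I|=2>k_1$. The set $I(\theta)$, projected, consists of $\theta^1_2,\theta^1_3$, both exceeding $2R/3$. For a $\delta$-covering of $G_1(\theta^\ast)=\{R\}$, both must lie within $\delta$ of $R$. The region only gives a lower bound $>2R/3$. I would first try to argue that stationarity forces $\theta^1_2,\theta^1_3\le R+$ (something small), e.g. from the M-step fixed-point characterization $\theta=F(\theta,\alpha(\theta))$, which makes each $\theta_k$ a weighted average of data. Failing that, we use that $\Ll$ is coercive and restrict to the bounded part of the region. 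The cleanest route is the balance identity $\sum_k\theta_k=\sum_k\theta^\ast_k$ at stationary points, noted after Theorem \ref{theo:local}. In first coordinates this gives $\theta^1_1+\theta^1_2+\theta^1_3=R$. With $\theta^1_2,\theta^1_3>2R/3$, this yields $\theta^1_1<-R/3$. I would then use the fixed-point equation for component $1$, $\theta_1^1=\mathbb{E}(Y^1\Psi_1)/\alpha_1$, to derive a contradiction directly, since this is a weighted average of $Y^1$ under a reweighting with marginal mass $1/3$. This might bypass Theorem \ref{theo:local} altogether, but it needs bounds on Gaussian tails.

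Sticking to the theorem, we want $\delta<\gamma R$ and the separation \eqref{eq:snr}: $R\ge 3/(\sqrt{2\pi}(2(1-\gamma)-1))$, with $0<\gamma<1/2$. The distances satisfy $|\theta^1_j-R|<R/3$ if $\theta^1_j\in(2R/3,4R/3)$, so we choose $\gamma$ slightly above $1/3$, for example $\gamma\in(1/3,1/2)$. The condition then becomes $R\ge 3/(\sqrt{2\pi}(1-2\gamma))$. With $\gamma$ close to $1/3$, this is $R\ge 9/\sqrt{2\pi}\approx3.6$, which holds since $R\ge 3D\ge 6$. The upper bound $\theta^1_j<4R/3$ should come from the balance identity combined with $\theta^1_1$ not being too negative. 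Here the second coordinate and $D$ enter: the constraint $\|\theta^2\|<\varepsilon$ and $D\ge2$ are used to control $\theta_1^1$ from below via its fixed-point equation.

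The main obstacle is precisely this a priori control of the first coordinates (the upper bound on $\theta^1_2,\theta^1_3$ and the lower bound on $\theta^1_1$) at a putative stationary point. I expect it to require combining the balance identity with the weighted-average representation of the M-step map.
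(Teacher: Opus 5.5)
There is a genuine gap: your proposal never reaches a contradiction. Both routes stop at the ``a priori control'' of the first coordinates, which you leave open, and the ways you suggest to obtain it would not work. Coercivity of $\Ll$ says nothing about where stationary points lie, so it cannot be used to discard the unbounded part of $\mathcal{R}^\varepsilon$. Also, neither $D$ nor the constraint $\|\theta^2\|<\varepsilon$ is what bounds $\theta^1_1$ from below.

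The missing step is one line and uses only $0\le\Psi_1\le 1$. At a stationary point, $\mathbb{E}(\Psi_1)=1/3$ and $\mathbb{E}((Y-\theta_1)\Psi_1)=0$ (Proposition \ref{prop:L}(d)). Moreover $Y^1\Psi_1\ge Y^1\mathbf{1}_{\{Y^1\le 0\}}$ pointwise, so
\[
\tfrac13\,\theta^1_1=\mathbb{E}(Y^1\Psi_1)\ \ge\ \mathbb{E}\bigl(Y^1\mathbf{1}_{\{Y^1\le 0\}}\bigr)\ \ge\ -\tfrac{1}{\sqrt{2\pi}}.
\]
The last inequality holds because $Y^1$ is an equal mixture of $\mathcal{N}(0,1),\mathcal{N}(0,1),\mathcal{N}(R,1)$, and for $Z\sim\mathcal{N}(\mu,1)$ with $\mu\ge 0$ one has $\mathbb{E}(Z\mathbf{1}_{\{Z\le 0\}})=\mu\Phi(-\mu)-\phi(\mu)\ge -1/\sqrt{2\pi}$. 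This is exactly \eqref{eq:trunc}--\eqref{boundfirstorder} with $S=\{1\}$. Combine it with your own balance computation, $\theta^1_1=R-\theta^1_2-\theta^1_3<-R/3$. This forces $R<9/\sqrt{2\pi}<4$, contradicting $R\ge 3D\ge 6$. No upper bound on $\theta^1_2,\theta^1_3$ is needed, and no Gaussian tail estimate beyond $\mathbb{E}(Z^-)=1/\sqrt{2\pi}$.

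The same observation rescues your Theorem route. The proof of Theorem \ref{theo:local} only uses the one-sided inequality $\theta_k\ge\tilde\theta^\ast_k-\delta$ for $k\in I$. So $\theta^1_2,\theta^1_3>2R/3$ already suffices, with covered group $\{3\}$, $\Delta=R$, $|I|=2$ and $\gamma=1/3$; the separation condition then reads $R\ge 9/\sqrt{2\pi}$.

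Once completed, your argument differs from the paper's. The paper projects onto the direction from $\theta^\ast_1$ to $\theta^\ast_3$, with origin at $\theta^\ast_2$. There the gap between $\{\theta^\ast_1,\theta^\ast_2\}$ and $\theta^\ast_3$ is only $\Delta(R)=(R^2-D^2)/\sqrt{R^2+D^2}$. It applies Theorem \ref{theo:local} with $\gamma=0.45$ and converts the covering radius back to the $x$-coordinate to get $\delta_x>R/3$; this is where $R\ge 3D$ and $D$ enter.

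Your $x$-axis projection has three advantages:
\begin{itemize}
\item it gives the larger gap $R$ and a threshold independent of $D$, apart from $R\ge 6$;
\item the one-sided reading handles the unboundedness of $\mathcal{R}^\varepsilon$ in $\theta^1_2,\theta^1_3$, which the paper passes over by treating membership in the region as $\delta_x\le R/3$;
\item it avoids a numerical slip in the paper's route. With $\gamma=0.45$ the separation condition requires $\Delta(R)\ge 30/\sqrt{2\pi}\approx 12$, which $R\ge 3D$, $D\ge 2$ does not guarantee: at $R=3D=6$, $\Delta(R)\approx 5.1$.
\end{itemize}
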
 

\begin{figure}[!t]
  \begin{center}
    \includegraphics[width=0.9\textwidth]{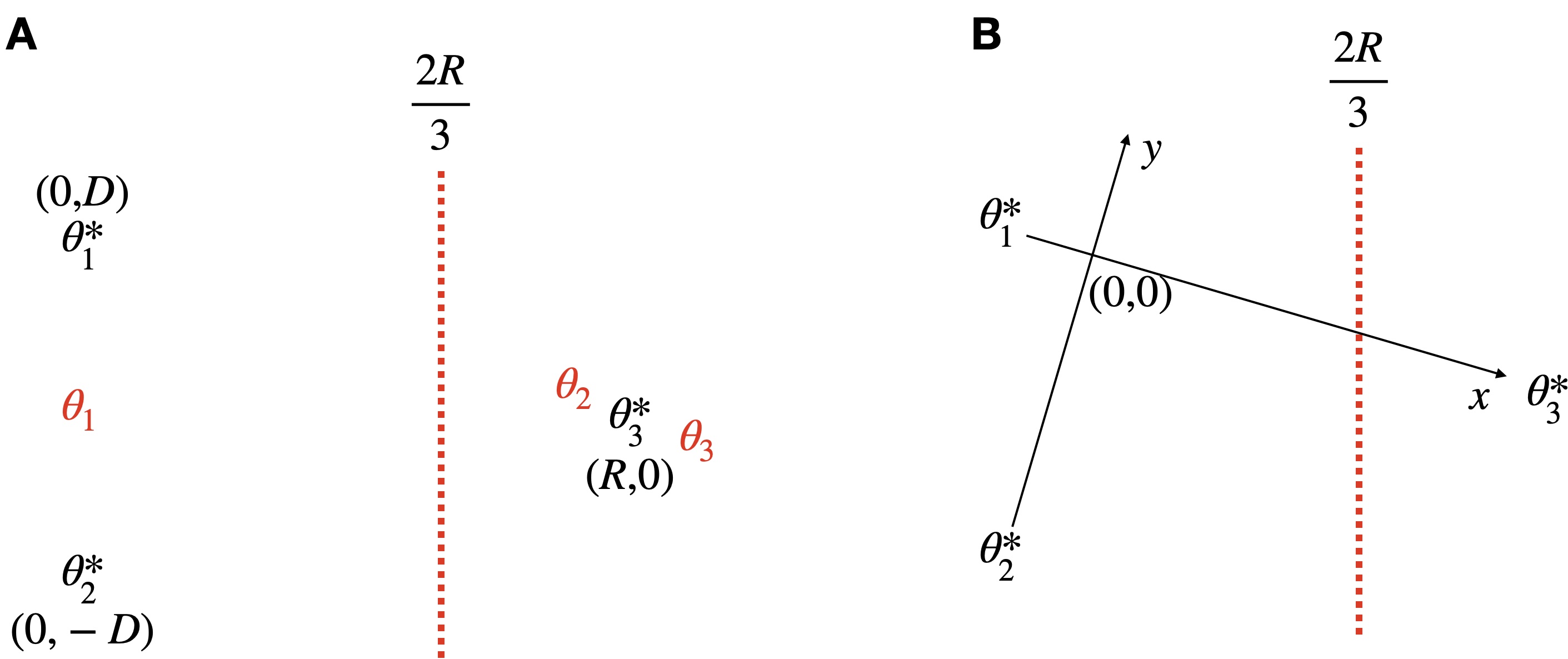}
  \end{center}
  \caption{Setup of Proposition \ref{prop:bad}. \textbf{A}. By Proposition \ref{prop:bad}, the negative log-likelihood has a spurious local minimum around the configuration $\theta$ (red), i.e., within the region $\mathcal{R}^\varepsilon$. \textbf{B} by rotating axes, we can apply Theorem \ref{theo:local} to rule out any stationary point for $\Ll$ in such region if $R$ is large enough.}
  \label{fig:bad}
\end{figure}

The relevance of the \textit{many-fit-one} configurations is that, as described in \cite{chen2020likelihood}, all spurious local maxima of the log-likelihood are characterized in terms of such and related structures as long as an absolute separation condition over components holds. In sections \ref{sec:experiments}, \ref{sec:celegans} and \ref{sec:coclustering}, we present extensive experiments on simulated and real data and argue that, in practice, EM's primary mode of failure is attributable to the pervasiveness of these configurations. We also demonstrate that SEM often avoids those, thereby explaining its superior performance.  In Appendix \ref{sec:suplocal} we provide additional discussion and a result suggesting that we may not expect completely different stationary point structures for $\Ll$.


\section{Analysis of a symmetric mixture of two Gaussians}
\label{sec:twogaussians}
In this section, we present specialized results for an unbalanced mixture of two Gaussians in $\mathbb{R}$ with a single one-dimensional parameter $\theta$. That is, denote $q_{\theta,\alpha}(y)$ the density
\begin{equation}
\label{eq:mixtureg} q_{\theta,\alpha}(y) = \alpha \mathcal{N}(y;\theta, 1) +(1-\alpha)\mathcal{N}(y;-\theta,1).
\end{equation}
We will assume that the data was generated by the above density with fixed $\theta^\ast$ (unknown) and weight $\alpha^\ast$ (known). We will use the family $q_{\theta,\alpha^\ast}$ to infer $\theta$. The first result, Theorem \ref{teo:mixtureloss}, complements our results on local minima from Section \ref{sec:local}. As illustrated in Fig \ref{fig:global}B, we show that compared to $\ell$, $\Ll$ will more often have a unique global minimum. 
\begin{theorem}\label{teo:mixtureloss}
Consider the model \eqref{eq:mixtureg}. For any $\theta^\ast>0$, the set of $\alpha^\ast$ for which $\theta^\ast$ is the unique stationary point of $\Ll$ is strictly larger than the one for $\ell$.
\end{theorem}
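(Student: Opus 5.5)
The plan is to reduce both statements to a one-dimensional fixed-point (sign) analysis and compare the two landscapes directly. For fixed weight $\alpha^\ast$, write the responsibility of the $+\theta$ component as $\Psi(y,\theta,a)=a\,\mathcal{N}(y;\theta,1)/q_{\theta,a}(y)$. Then the EM and Sinkhorn-EM maps for model \eqref{eq:mixtureg} read
\[
M(\theta)=\mathbb{E}\bigl[Y(2\Psi(Y,\theta,\alpha^\ast)-1)\bigr],\qquad \tilde M(\theta)=\mathbb{E}\bigl[Y(2\Psi(Y,\theta,\alpha(\theta))-1)\bigr].
\]
Here $\alpha(\theta)$ is the tilted weight determined by the marginal constraint $\mathbb{E}[\Psi(Y,\theta,\alpha(\theta))]=\alpha^\ast$. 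By Proposition \ref{prop:L}(d), $\Ll'(\theta)=\ell'(\theta,\alpha(\theta))$, and in this Gaussian model $\ell'(\theta,a)=\theta-M_a(\theta)$ up to a positive factor. Hence stationary points of $\ell$ are exactly the fixed points of $M$, and those of $\Ll$ are exactly the fixed points of $\tilde M$. Both maps fix $\theta^\ast$, since in the well-specified case $\alpha(\theta^\ast)=\alpha^\ast$. So the theorem amounts to comparing, as functions of $\alpha^\ast$, when $\theta\mapsto\theta-M(\theta)$ and $\theta\mapsto\theta-\tilde M(\theta)$ vanish only at $\theta^\ast$.

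\textbf{Step 1: the set for $\Ll$ contains the set for $\ell$.} Fix $\alpha^\ast$ with $\theta^\ast$ the unique stationary point of $\ell$. By coercivity of $\Ll$ (Proposition \ref{prop:L}(f)) and the sign structure at infinity, it suffices to show that $\theta-\tilde M(\theta)$ has the same sign as $\theta-M(\theta)$ away from $\theta^\ast$.

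\begin{itemize}
\item The key monotonicity is that $\Psi$ is increasing in $a$ and $M_a(\theta)$ is increasing in $a$ when $\theta>0$ (reversed when $\theta<0$).
\item The constraint $\mathbb{E}[\Psi(Y,\theta,\alpha(\theta))]=\alpha^\ast$ fixes on which side of $\alpha^\ast$ the tilted weight $\alpha(\theta)$ lies. On $\theta>\theta^\ast$ the component at $+\theta$ is under-responsible, so $\alpha(\theta)>\alpha^\ast$ and the tilt pushes $\tilde M$ toward $\theta^\ast$. A symmetric argument handles $0<\theta<\theta^\ast$.
\item For $\theta\le 0$, the transport constraint forces mass $\alpha^\ast$ onto the component at $\theta$, even though the true mass $\alpha^\ast$ sits near $+\theta^\ast$. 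This makes $\mathbb{E}[Y\Psi]$ large, so $\tilde M(\theta)>\theta$ and $\Ll'(\theta)<0$. In other words, the region $\theta\le 0$ contains no stationary point of $\Ll$ at all.
\end{itemize}

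I would establish the comparison $\alpha(\theta)\gtrless\alpha^\ast$ from the strict monotonicity of $a\mapsto\mathbb{E}[\Psi(Y,\theta,a)]$ (Lemma \ref{lemma:alphaprop}), together with a sign computation of $\mathbb{E}[\Psi(Y,\theta,\alpha^\ast)]-\alpha^\ast$, which can be carried out with the symmetry $y\mapsto -y$ of the two Gaussians.

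\textbf{Step 2: strictness.} I would exhibit $\alpha^\ast$, for each fixed $\theta^\ast>0$, where $\ell$ has a spurious stationary point but $\Ll$ does not. Following the known analysis of EM with known unequal weights (Xu et al.), for $\alpha^\ast$ sufficiently far from $1/2$ the map $M$ has an extra fixed point $\bar\theta<0$ near $-\theta^\ast$. I would verify this through an intermediate-value argument: $\theta-M(\theta)$ is positive as $\theta\to-\infty$ and negative at some negative $\theta$ where the misweighted likelihood is locally favorable. Step 1 then shows $\Ll$ has no stationary point on $\theta\le0$, so for this $\alpha^\ast$ the point $\theta^\ast$ is the unique stationary point of $\Ll$ (using Step 1's positive-side analysis) but not of $\ell$. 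For the balanced case $\alpha^\ast=1/2$, where $\pm\theta^\ast$ are both global minima, the theorem should be read on the identified parameter space; I would treat it separately or simply exclude it from both sets consistently.

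\textbf{Main obstacle.} I expect the hardest step to be the sign claim on $0<\theta<\theta^\ast$ and $\theta>\theta^\ast$ for $\tilde M$ in Step 1. There the tilt $\alpha(\theta)$ is only implicitly defined, so one needs a uniform inequality rather than merely the inclusion of sets. I would first try to prove that $\alpha(\theta)-\alpha^\ast$ has the sign of $\theta-\theta^\ast$ for $\theta>0$. I would then combine this with the monotonicity of $M_a$ in $a$ to get
\[
\theta-\tilde M(\theta)\ \text{ has the sign of } \ \theta-M(\theta)
\]
wherever the latter is nonzero, which yields the inclusion directly.
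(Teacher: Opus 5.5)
\textbf{The negative-side claim is false.} Your third bullet of Step 1, on which Step 2 relies, is not true: $\Ll$ can have stationary points on $\theta\le 0$ for some $\alpha^\ast$. At $\alpha^\ast=1/2$ the tilt is $\alpha(\theta)\equiv 1/2$ (Lemma \ref{lemma:alpha12}), so $\Ll=\ell$. Then $\theta-F(\theta,1/2)$ vanishes at $-\theta^\ast$ with slope $1-\mathbb{E}[Y^2\operatorname{sech}^2(\theta^\ast Y)]>0$. This is positive because the symmetric EM map is concave on $(0,\infty)$ and crosses the identity at $\theta^\ast$ from above. The tilt, defined by $G(\theta,\alpha)=\alpha^\ast$ with $\partial_\alpha G>0$, depends smoothly on $(\theta,\alpha^\ast)$. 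So the implicit function theorem gives a spurious stationary point of $\Ll$ near $-\theta^\ast$ for every $\alpha^\ast$ close to $1/2$.

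\textbf{What is actually true on $\theta<0$.} Your heuristic only yields a comparison with EM, not with the identity. For $\alpha^\ast>1/2$ and $\theta<0$ one has $\alpha(\theta)>\alpha^\ast$. Also $\partial_\alpha F(\theta,\alpha)>0$ for $\alpha\ge 1/2$, which is the opposite of your ``reversed when $\theta<0$''. Hence $\tilde M(\theta)>M(\theta)$ (Proposition \ref{prop:deri}). This gives the inclusion on $\theta<0$: a fixed point $\theta_0<0$ of $\tilde M$ forces $\theta_0-M(\theta_0)>0$, while $0-M(0)=-(2\alpha^\ast-1)^2\theta^\ast<0$, so $M$ has a fixed point in $(\theta_0,0)$.

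\textbf{Strictness needs a different $\alpha^\ast$.} You have the regime of Xu et al.\ reversed. The spurious EM fixed point near $-\theta^\ast$ exists for $\alpha^\ast$ close to $1/2$, where, as just shown, $\Ll$ has one too. It disappears as $\alpha^\ast$ moves away from $1/2$. The paper instead takes the threshold $\alpha^\ast=1/2+\delta$, where the spurious stationary point of $\ell$ is a tangential zero, so $\theta-M(\theta)\le 0$ on $\theta<0$. The strict inequality $\tilde M>M$ then gives $\theta-\tilde M(\theta)<0$ on all of $\theta<0$.

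\textbf{The positive side fails as planned.} $F(\theta,\alpha)$ (your $M_\alpha(\theta)$) does not have a fixed monotonicity direction in $\alpha$ on $\theta>0$. Indeed $\partial_\alpha F(0,\alpha)=2\mathbb{E}Y=2(2\alpha^\ast-1)\theta^\ast>0$, while $\partial_\alpha F(\theta^\ast,\alpha^\ast)<0$.

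Above $\theta^\ast$, what makes $\tilde M\le M$ is that $F$ \emph{decreases} in $\alpha$ for $\theta\ge\theta^\ast$, $\alpha\ge\alpha^\ast$ (eq.~\eqref{eq:falpha}, via the sign of $g_\theta$). Your stated monotonicity, combined with $\alpha(\theta)>\alpha^\ast$, would push $\tilde M$ away from $\theta^\ast$.

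Below $\theta^\ast$ it is worse. For small $\theta>0$ one has $\alpha(\theta)<\alpha^\ast$ and $\partial_\alpha F>0$, hence $\tilde M(\theta)<M(\theta)$. So the ordering your sign-transfer argument needs fails there, and comparing with $M$ alone cannot show $\tilde M(\theta)>\theta$ on all of $(0,\theta^\ast)$.

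The paper avoids $\ell$ here entirely and compares with the symmetric EM map instead:
\begin{itemize}
\item $F(\theta,\alpha(\theta))\ge F(\theta,1/2)>\theta$ on $(0,\theta^\ast)$, using $1/2<\alpha(\theta)\le\alpha^\ast$ there (Lemma \ref{lemma:atheta});
\item $F(\theta,\alpha(\theta))\le F(\theta,1/2)<\theta$ on $(\theta^\ast,\infty)$, using $\alpha(\theta)\ge\alpha^\ast$ there (Lemma \ref{lemma:atheta}).
\end{itemize}
This is what Theorem \ref{teo:mixture} packages. It rules out spurious stationary points of $\Ll$ on $(0,\infty)$ for every $\alpha^\ast$, with no hypothesis on $\ell$.
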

We also provide an in-depth analysis of the basin of attraction of Sinkhorn-EM towards the global minimum and the convergence speed: we show that as long as SEM is initialized at $\theta^0 > 0$, it enjoys fast (exponential) convergence to the global minimum of $\Ll$, and there is a large range of initializations on which it never performs worse than EM. 

\begin{theorem}\label{teo:mixture}
For the model \eqref{eq:mixtureg}, for each $\theta^*>0$ and initialization $\theta^0 > 0$, the iterates of SEM converge to $\theta^*$ exponentially fast:
\begin{equation}
\label{emgeom}
|\theta^{t}_{SEM} - \theta^*| \leq \rho^t |\theta^0 - \theta^*|\text{, with } \rho = \exp\left(-\frac{\min\{\theta^0, \theta^*\}^2}{2}\right).
\end{equation}

Moreover, there is a $\theta_{\text{fast}} \in(0,\theta^*)$ depending only on $\theta^*$ and $\alpha^*$ such that if SEM and  EM are both initialized at $\theta^0 \in [\theta_{\text{fast}}, \infty)$, then
\begin{equation}
\label{eq:fast}
|\theta^t_{SEM} - \theta^*| \leq |\theta^t_{\text{EM}} - \theta^*| \quad \quad \forall t \geq 0\,,
\end{equation}
where $\theta^t_{\text{EM}}$ are the iterates of  EM.
In other words, when initialized in this region, SEM never underperforms EM.
\end{theorem}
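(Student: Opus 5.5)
The plan is to reduce SEM in model \eqref{eq:mixtureg} to a one-dimensional fixed-point iteration, then control the derivative of the iteration map.

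\textbf{Reduction to a scalar map.} Write the tilted weight of the first component as $\alpha(\theta)$ and set $\beta(\theta)=\tfrac12\log\frac{\alpha(\theta)}{1-\alpha(\theta)}$. Then the responsibility is $\Psi(y)=\tfrac12\bigl(1+\tanh(\theta y+\beta(\theta))\bigr)$. The marginal constraint of the E-step becomes
$$\mathbb{E}\bigl[\tanh(\theta Y+\beta(\theta))\bigr]=2\alpha^\ast-1.$$
This determines $\beta(\theta)$ uniquely, because the left side is strictly increasing in $\beta$, and the implicit function theorem makes $\beta$ smooth. Differentiating the M-step objective in $\theta$ gives $\mathbb{E}[\Psi(Y-\theta)-(1-\Psi)(Y+\theta)]=0$, so the update is
$$\theta^{t+1}=M(\theta^t),\qquad M(\theta):=\mathbb{E}\bigl[Y\tanh(\theta Y+\beta(\theta))\bigr].$$
At $\theta=\theta^\ast$ we have $\beta(\theta^\ast)=\beta^\ast:=\tfrac12\log\frac{\alpha^\ast}{1-\alpha^\ast}$, since the true responsibilities already satisfy the constraint, and hence $M(\theta^\ast)=\theta^\ast$. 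EM is the same map with $\beta(\theta)$ replaced by the constant $\beta^\ast$; call it $M_{EM}$.

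\textbf{Derivative and monotonicity.} Put $w_s(y)=\mathrm{sech}^2(sy+\beta(s))$. Implicit differentiation of the constraint gives $\beta'(s)=-\mathbb{E}[Yw_s]/\mathbb{E}[w_s]$. Therefore
$$M'(s)=\mathbb{E}[Y^2w_s]-\frac{\mathbb{E}[Yw_s]^2}{\mathbb{E}[w_s]}=\mathbb{E}[w_s]\,\mathrm{Var}_{w_s}(Y)\ \ge 0,$$
where $\mathrm{Var}_{w_s}$ is the variance of $Y$ under the law of $Y$ reweighted by $w_s$. So $M$ is nondecreasing with fixed point $\theta^\ast$. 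Iterates started from $\theta^0>0$ therefore stay on the same side of $\theta^\ast$ and move monotonically toward it. Consequently every $s$ between $\theta^t$ and $\theta^\ast$ satisfies $s\ge\min\{\theta^0,\theta^\ast\}$. The mean value theorem gives $|M(\theta)-\theta^\ast|\le\sup_s M'(s)\,|\theta-\theta^\ast|$, so \eqref{emgeom} follows by induction once we show
$$M'(s)\le e^{-s^2/2}\quad\text{for all } s>0.$$

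\textbf{Main obstacle: the bound on $M'(s)$.} I would first write
$$\mathrm{sech}^2(sy+\beta)=\frac{4\alpha'(1-\alpha')\,\mathcal N(y;s,1)\,\mathcal N(y;-s,1)}{q_{s,\alpha'}(y)^2},\qquad \alpha'=\alpha(s),$$
and use the identity $\mathcal N(y;s,1)\mathcal N(y;-s,1)=e^{-s^2}\mathcal N(y;0,1)^2$. This exhibits an explicit Gaussian factor. Next I would expand $Y=\theta^\ast\varepsilon+G$, with $\varepsilon=\pm1$ having probabilities $\alpha^\ast,1-\alpha^\ast$ and $G$ standard normal. The remaining factor $\mathbb{E}[w_s]\mathrm{Var}_{w_s}(Y)$ should then be bounded by a log-concavity (Brascamp--Lieb or Poincaré) estimate for the reweighted Gaussian. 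I expect the constant bookkeeping, and in particular the role of the tilt $\beta(s)$, to be the delicate part. The variance-subtraction term coming from $\beta'$ is exactly what SEM gains over EM, since EM lacks the $-\mathbb{E}[Yw_s]^2/\mathbb{E}[w_s]$ correction. If the direct route fails, I would adapt the known contraction arguments for balanced EM, using the centering provided by this correction term.

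\textbf{Comparison with EM.} Both $M$ and $M_{EM}$ are increasing, with fixed point $\theta^\ast$ on $(0,\infty)$. So \eqref{eq:fast} follows by induction from a pointwise comparison: for $\theta\ge\theta_{\text{fast}}$, $M(\theta)$ lies between $M_{EM}(\theta)$ and $\theta^\ast$. To get this, determine the sign of $\beta(\theta)-\beta^\ast$ on each side of $\theta^\ast$ from the constraint. Then note that $\partial_\beta\mathbb{E}[Y\tanh(\theta Y+\beta)]=\mathbb{E}[Y\,\mathrm{sech}^2(\theta Y+\beta)]$ has a fixed sign for $\theta$ above some threshold. That threshold, which depends only on $\theta^\ast$ and $\alpha^\ast$, defines $\theta_{\text{fast}}$.
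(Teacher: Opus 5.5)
\textbf{There is a genuine gap in the contraction step.} Your reduction to the scalar map $M(\theta)=\mathbb{E}[Y\tanh(\theta Y+\beta(\theta))]$ and the formula $M'(s)=\mathbb{E}[w_s]\,\mathrm{Var}_{w_s}(Y)\ge 0$ are correct; this is exactly the paper's Cauchy--Schwarz monotonicity step. The problem is the inequality $M'(s)\le e^{-s^2/2}$ for all $s>0$: it is false, so the step you flag as the main obstacle cannot be carried out by any amount of bookkeeping. At $s=0$ the constraint gives $\tanh\beta(0)=2\alpha^\ast-1$, so $w_0\equiv c:=4\alpha^\ast(1-\alpha^\ast)$ and
\[
M'(0)=c\,\mathrm{Var}(Y)=c\bigl(1+c\,\theta^{\ast 2}\bigr).
\]
This exceeds $1$, for example about $1.88$ when $\alpha^\ast=0.6$ and $\theta^\ast=1$; in the balanced case it equals $1+\theta^{\ast 2}$. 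By continuity of $\beta$ (implicit function theorem), $M'(s)>1>e^{-s^2/2}$ for all small $s>0$. Hence for a small initialization $\theta^0<\theta^\ast$, the supremum of $M'$ over $[\theta^0,\theta^\ast]$ is larger than $1$, and the mean value theorem gives no contraction at all. A Brascamp--Lieb or Poincar\'e estimate cannot rescue a pointwise derivative bound that is false. What \eqref{emgeom} actually controls is the secant slope $(\theta^\ast-M(\theta))/(\theta^\ast-\theta)$, which can be below $\rho$ even where $M'>1$. Relatedly, $M'\ge 0$ and $M(\theta^\ast)=\theta^\ast$ only keep the iterates on one side of $\theta^\ast$. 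They do not show that the iterates move toward $\theta^\ast$: excluding $M(\theta)<\theta$ on $(0,\theta^\ast)$, i.e.\ a spurious fixed point, is part of what must be proved.

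\textbf{The missing idea is the paper's comparison with balanced EM.} The paper never bounds $M'$ quantitatively. It uses only $M'\ge 0$ together with the balanced EM map $F(\theta,1/2)=\mathbb{E}[Y\tanh(\theta Y)]$. Because $y\tanh(\theta y)$ is even, this map does not depend on $\alpha^\ast$, so it is exactly the EM map for the symmetric mixture. The tilt satisfies $1/2<\alpha(\theta)\le\alpha^\ast$ on $[0,\theta^\ast]$ and $\alpha(\theta)\ge\alpha^\ast$ on $[\theta^\ast,\infty)$. Combining this with a reflection argument ($y\mapsto -y$) on the sign of $F(\theta,\alpha)-F(\theta,1/2)$, an induction gives $\theta^\ast\le\theta^t_{SEM}\le\theta^t_{EM_0}$, or the reverse chain when $\theta^0\le\theta^\ast$. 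The Daskalakis--Tzamos--Zampetakis rate for balanced EM, which is precisely the secant bound you need, then transfers to SEM; this also rules out spurious fixed points.

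\textbf{The EM comparison needs two further ingredients.} Your plan for \eqref{eq:fast} has the right structure: sign of the tilt on each side of $\theta^\ast$, then sign of $\mathbb{E}[Y\,\mathrm{sech}^2(\theta Y+b)]$. First, you need that sign uniformly for $b$ between $\beta^\ast$ and $\beta(\theta)$ and for all $\theta\ge\theta_{\text{fast}}$. Second, you must show the threshold lies strictly below $\theta^\ast$. That requires strict negativity at $\theta=\theta^\ast$ plus a perturbation argument, which the paper carries out through the explicit function $g_\theta$ and a small-$y$ expansion. As written, nothing in your proposal guarantees $\theta_{\text{fast}}<\theta^\ast$.
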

\vspace{-1cm}
\section{Experiments on simulated data}
\label{sec:experiments}
To empirically study the benefits of optimization of $\Ll$, we conducted a series of experiments on simulated mixtures of Gaussians \eqref{eq:gmm}. We compared the results of $\Ll$ optimization to two competitive baselines: the EM algorithm and k-means. For initialization, we used k-means++ \citep{arthur2007k}, as implemented in scikit-learn \citep{pedregosa2011scikit}. This initialization samples $\theta_1^0$ at random from data, and subsequently, each $\theta^0_k$ is sampled from the empirical distribution of not already chosen data points with probability proportional to $D^2$, where $D$ is the minimum distance between each remaining data-point and the previously selected $\theta^0_1,\ldots,\theta^0_{k-1}$. We used several initial seeds and reported either individual outcomes or the best-performing seed. To select the best initialization, we use the inertia criterion \cite{pedregosa2011scikit} for k-means and the log-likelihood for $\ell$- and $\Ll$-based estimators. We observed that using either $\Ll$ or $\ell$ for seed selection leads to only minor differences and does not affect the conclusions reported below.

We considered two performance metrics: first, as a direct measure of how well the parameters were inferred, we use the squared average distance between true and fitted centers, which we call simply the \textit{error}:
\begin{equation}\label{eq:error}
error(\theta,\theta^\ast) = \min_{\pi\in Perm(K)}\frac{1}{K}\sum_{k=1}^K||\theta_{\pi(k)}-\theta^\ast_k||^2.
\end{equation}
Since centers are defined up to permutations, we must search among all possible permutations $Perm(K)$ of $K$ indexes, a computation that we state as an optimal transport problem \cite{Peyre2019}. Second, we used the Adjusted Rand Index (ARI) \citep{hubert1985comparing} to measure the similarity between the actual and fitted clustering solutions. The ARI is a value between 0 and 1, with higher values indicating greater similarity. 

We study how results vary as functions of various parameters, including $K$, $\sigma^2_k$, $d$, $\alpha_k$, and the sample size $n$. We divide the experiments into three parts, but defer the details of the last two to the Appendix: known weights and known $K$, Unknown weights and known $K$ (Appendix \ref{sub:exp2}), and known weights and unknown $K$ (Appendix \ref{sub:exp3}).


In the first experiment, we studied clustering performance as a function of the number of components $K$,  variance parameter $\sigma^2$, and dimension $d$. We consider the following experimental setups: i)spherical variance with known variance $\sigma^2$, ii) elliptical variance with diagonal entries $\sigma_k$ sampled from a uniform distribution between $0.5\sigma^2$ and $1.5\sigma^2$, iii) and iv) same as i) and ii) but with unknown variances to be estimated. For each parameter configuration, we sample $n_{exp}=200$ datasets and run three methods on each, using $n_{seed}=5$ different random k-means++ initializations. We considered two sample sizes $N=200$ and $N=1000$ and dimensions $d=2,5,10,20$. In all cases, we use equal weights $\alpha_k=1/K$. In cases where $\sigma^2$ must be inferred, we use the identity matrix as an initial estimate. 

Fig. \ref{fig:errorari} shows a summary for experiment i). Both EM and SEM typically improve upon the k-means baseline, although the improvement for SEM is the largest (Fig. \ref{fig:errorari}A, B). As a result, SEM achieves the best performance among the three algorithms, whether measured at the level of individual seeds or at the best-seed level. The interplay between relative performance and $\sigma^2$ and $K$ is complex: although differences are observed at each $K$ if $K$ is small (e.g., $K\leq 10$), these differences vanish once considering the best seed, indicating that EM and k-means++ are still capable of finding the global solution. However, for larger values of $K$, SEM consistently avoids poor local optima, whereas EM and k-means++ struggle, even with multiple seeds. The performance of SEM also deteriorates for large $K$, suggesting that it cannot entirely escape bad local optima. Similarly, if $\sigma^2$ increases, the performance of all algorithms deteriorates, but SEM performs consistently better, in line with Theorem \ref{theo:local}, which imposes a minimum signal-to-noise ratio condition so that the no-\textit{many-fit-one} guarantee holds. Also, if $\sigma^2$ is too small, then all algorithms will recover the true solution. This suggests that there is a mid-range of values for which SEM is most beneficial. In Appendix \ref{app:details} we provide a more detailed account of experimental results by breaking them down into further experimental conditions, and here we give a summary: first, by observing that similar results are obtained for $N=200$ and $N=1000$, suggesting that we can ignore the finite-sample related errors (Figs \ref{fig:200} and \ref{fig:1000}). Second, perhaps unintuitively, larger values of $d$ are associated with lower error for three algorithms. This is because k-means++ initialization consistently yields better estimates in such cases, benefiting all algorithms. Again, there is a range of values of $d$ where Sinkhorn-EM is most beneficial, and this range depends on $\sigma^2$ as well.

Results for experiments ii), iii), and iv) are also presented in Appendix \ref{app:details}. Sinkhorn-EM still attains the lowest regardless of whether the true variance is spherical or diagonal (and different for each cluster) (Fig. \ref{fig:difvar}) and whether the variance is fixed or has to be inferred (Figs. \ref{fig:unk1}, \ref{fig:unk2}). Likewise, results from experiments with unknown weights and known $K$ (Appendix \ref{sub:exp2}) and known weights and unknown $K$ (Appendix \ref{sub:exp3}) indicate that SEM outperforms the baselines when the true weights are sufficiently close to uniform. However, if these weights differ substantially from uniformity (Appendix \ref{sub:exp2}), SEM doesn't appear to recover the centers better than EM or k-means++.

\begin{figure}
  \begin{center}
    \includegraphics[width=1.0\textwidth]{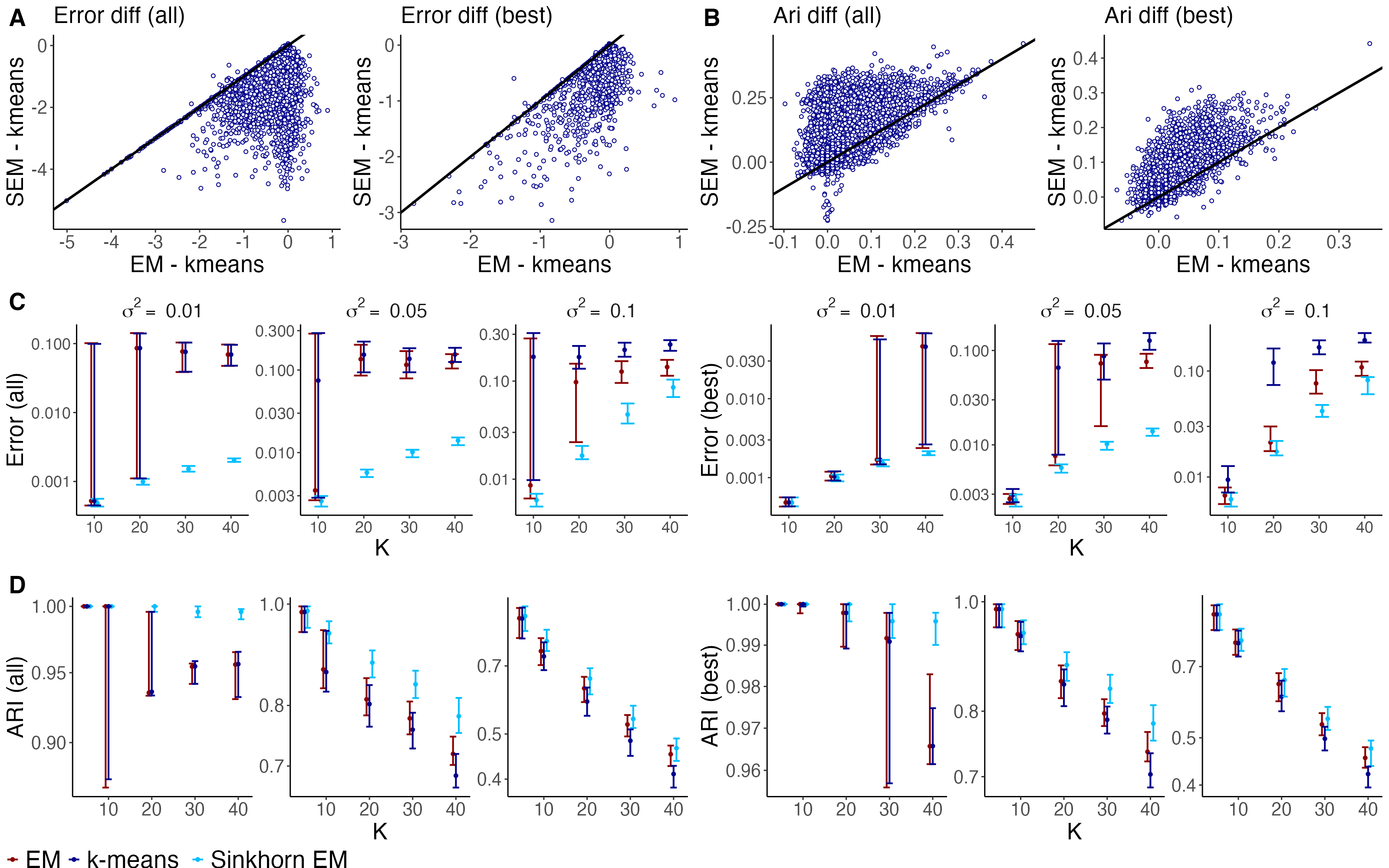}
  \end{center}
  \caption{Results for experiment in Section \ref{sec:experiments} with known weights $\alpha_k=1/K$ and known $K$, for a mixture of spherical Gaussians with variance $\sigma^2$. \textbf{A}: Error difference between Sinkhorn-EM and k-means (y-axis) and between EM and k-means (x-axis). In the left plot, each random seed is considered individually; in the right plot, the best of five seeds is considered. \textbf{B}: same as \textbf{A} but with ARI score. \textbf{C}: Comparison of errors for each algorithm for varying $K$ and $\sigma^2$. The error bar represents the interquartile range. \textbf{D}: same as \textbf{C} but with ARI score.}
  \label{fig:errorari}
\end{figure}

\vspace{-0.7cm}
\section{Application: segmentation in C.elegans microscopy}
\label{sec:celegans}
One of the conclusions of Section \ref{sec:experiments} is that SEM is most beneficial when the number of clusters $K$ is large. In this Section, we show that this observation holds for a real-world neural segmentation task that can be framed as fitting a GMM with multiple components (neurons).


\textit{C. elegans}~is a roundworm used as a model organism in neuroscience for decades due to its stereotypic brain organization and simple structure consisting of 302 neurons. Automated neuron identification and segmentation of \textit{C. elegans} is crucial for conducting high-throughput experiments for relevant biological applications. This task can be facilitated by technologies such as NeuroPAL, \citep{yemini2021neuropal}, a novel transgenic strain of \textit{C. elegans} that features deterministic coloring of each neuron, enabling the disambiguation of nearby neurons and aiding in their identification \citep{Varol2020}. 

We treat the segmentation problem of neurons in NeuroPAL \textit{C.elegans} images as one of clustering using Gaussian Mixture Models. Given a colorful volume represented by six dimensions (three spatial coordinates and three dimensions for the RGB colors), we aim to recover the location centers $\theta_k$ of the $K$ neurons in the recorded volume, along with their shapes $\Sigma_k$. We can then segment the image using the responsibilities $\Psi_k$, which encode the probabilistic assignment of the pixels to the cells.

In Fig. \ref{fig:celegans}, we compare segmentation using the EM algorithm as a baseline to compare against. Fig. \ref{fig:celegans}B illustrates the primary mode of failure of EM that explains why SEM outperforms it: EM will typically collapse two nearby cells into a single component, a pathology that SEM most often avoids. This collapse can be precisely corresponded to the \textit{many-fit-one} local optima described in Theorem \ref{theo:local} in Section \ref{sec:local}. These results indicate that SEM is a valuable alternative to the more traditional EM in this real-world setup characterized by a dense mixture of Gaussians with many ($K>20$) clusters. Details of our experiments appear in Appendix \ref{app:eleg}. 


\begin{figure}
  \begin{center}
    \includegraphics[width=1.0\textwidth]{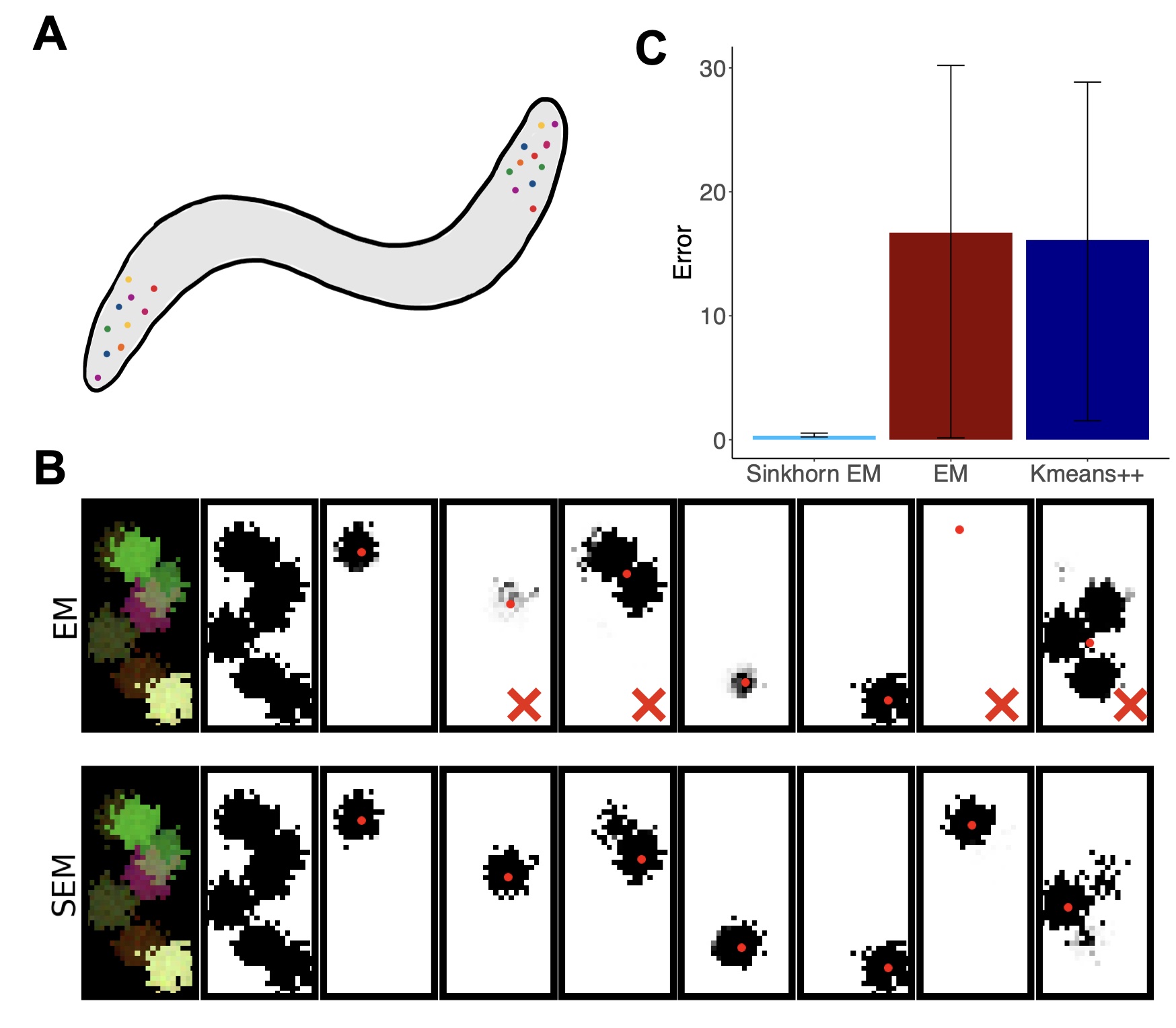}
  \end{center}
  \caption{Results on a C.elegans segmentation task. \textbf{A}: An illustration of a worm's brain. The task is to identify different locations (colored dots) in volumetric images. \textbf{B}: Comparison between a typical segmentation outcome of EM  and Sinkhorn-EM algorithms. The first column shows a true microscopic image containing a subset of neurons. The second column shows pixels containing neurons. All remaining columns show identified neurons for Sinkhorn-EM and EM methods as indicated by the responsibilities $\Psi_k$ for each neuron over pixels (grey scale). EM tends to collapse neuronal shapes, a problem that Sinkhorn-EM averts. Red dots indicate true neural centers, and red crosses indicate failures in the identification of individual components. \textbf{C}: Sinkhorn-EM consistently leads to better segmentation performance as compared to EM and k-means++ competitors.}
  \label{fig:celegans}
\end{figure}

\section{Application to co-clustering in spatial transcriptomics}
\label{sec:coclustering}

Finally, we extend our methodology to model-based co-clustering and demonstrate its benefits on a high-dimensional genomics dataset. Given a matrix $Y$ with dimensions $N$ and $M$, the aim of co-clustering (or biclustering) \citep{govaert2013co} is to simultaneously cluster the rows and columns of $X$, as opposed to the usual setup where only the rows of $X$ (i.e., observations) are clustered. The co-clustering problem has a long history in statistics (see, e.g., \cite{hartigan1972direct}) and has found relevant applications in fields such as text analysis \citep{dhillon2001co} and bioinformatics \citep{cheng2000biclustering,tan2014sparse}. Here, we focus on the model-based formulation \citep{bouveyron2019model,govaert2013co}, which maximizes a loss function (e.g., the log-likelihood) that depends on the density of the observed data given the model parameters. Model-based co-clustering is a case in which we expect to benefit from entropic OT, since likelihood-based approaches are known to be severely affected by bad local optima \citep{bouveyron2019model}.

The underlying probabilistic model here is the so-called latent block model. For a co-clustering model with $K$ clusters in the row dimension and $G$ clusters in the column dimension, let $z\in\mathbb{R}^{N\times K}$ be a binary matrix representing latent assignments of rows in $Y$ to a class $k\in[K]$ (so that $\sum_k z_{i,j}=1$). Likewise, we represent latent column assignments with a matrix $w\in\mathbb{R}^{M\times G}$. Then, there are $K\times G$ possible assignments for a particular $Y_{i,j}$ of $Y$. We assume that these entries are conditionally independent knowing $z$ and $w$ so that for certain parametric family of densities $\phi(\cdot, \theta)$ we express 
\begin{equation*}
P(Y|z,w,\theta)=\prod_{i,j,k,g} \phi\left(Y_{i,j},\theta_{k,g}\right)^{z_{i,k}w_{j,g}}.
\end{equation*}
If we denote the row and column mixture proportions $\pi_k=P(z_{i,k}=1)$ and $\rho_g = P(w_{j,g}=1)$ (these could also be treated as parameters), then the marginal likelihood of $Y$ writes as the following mixture
\begin{eqnarray}
\label{eq:coclust}  q_{\theta}(Y)= \sum_{(z,w)\in\mathcal{Z}\times \mathcal{W}}p(z)p(w)f(Y|z,w,\theta)
= \sum_{(z,w)\in\mathcal{Z}\times \mathcal{W}}\prod_{i,k}\pi_k^{z_{i,k}}\prod_{j,g} \rho_g^{w_{j,g}} \prod_{i,j,k,g}\phi\left(Y_{i,j},\theta_{k,g}\right)^{z_{i,k}w_{j,g}}.
\end{eqnarray}
Evaluation of the above likelihood is intractable, a problem that carries forward to the computation of the E step for the EM algorithm targeting \eqref{eq:coclust}: if the above equation corresponded to a usual mixture model (e.g. making $G=1$) then $z_i$ would be uniquely associated to a $Y_{i,}$ and we would be able to express the likelihood as a product over the \textit{sample}. However, the complex simultaneous dependence on $z$ and $w$ in \eqref{eq:coclust} prevents us from achieving this sample representation for the likelihood above, so a sum with exponentially many terms needs to be computed. 

It is still possible to deal with intractability with approximate methods: here we consider a Variational EM algorithm (VEM) \citep{bouveyron2019model,nadif2008algorithms} based on a factored approximation for the joint conditional probability $P(z_{i,k}w_{j,g}=1|Y,\theta)\approx P(z_{i,k}=1|Y,\theta)P(w_{j,g}=1|Y,\theta)$. This algorithm, detailed in the Appendix \ref{app:coclust}, corresponds to the iterative alternate application of the usual EM algorithm to cluster the rows and columns of $Y$ until convergence. The main drawback of this algorithm is sensitivity to initial values due to pervasive bad-local optima. To address this issue, we implement Sinkhorn VEM (SVEM), which replaces each EM instance with SEM. We show in Appendix \ref{sub:coclust} that SVEM outperforms both VEM and an off-the-shelf spectral biclustering method on simulated data.
\subsection{Application to Spatial Transcriptomics}

Spatial Transcriptomics is an umbrella term for a set of technologies that enable transcriptomic (i.e., gene-expression) profiling of samples (e.g., single-cell RNA sequencing) with spatial resolution \citep{moses2022museum}. A typical spatial transcriptomics experiment is represented by a matrix $Y\in\mathbb{R}^{N\times M}$, where $M$ genes are measured at $N$ locations, or \textit{spots}. This type of structure motivates two intertwined research questions: i) whether this high-dimensional expression data clusters coherently in regions in a way that resembles the anatomy of the tissue, and ii) whether there are genes whose expression level depends on space. As suggested by \cite{sottosanti2023co}, it is possible to simultaneously address these two questions from a co-clustering perspective.

Specifically, \cite{sottosanti2023co} develop probabilistic models with EM-type and MCMC-based inference procedures that explicitly encode spatial dependencies. In contrast, we consider an entropic optimal transport formulation that does not explicitly model spatial structure, and use spatial transcriptomics data as a testbed to study the empirical behavior of Sinkhorn EM as a clustering algorithm. In this setting, we observe that Sinkhorn EM can yield more coherent tissue partitions than standard EM under comparable experimental protocols.


We compared the performance of three methods on one dataset from the human dorsolateral prefrontal cortex (DLPFC) from the Liebler Institute for Brain Development \citep{maynard2021transcriptome}. Results are shown in Fig. \ref{fig:spatialtrans}. The clustering of spots produced by SVEM more faithfully reflects the tissue's layer organization, even though no spatial information has been explicitly encoded in the model.

\begin{figure}
  \begin{center}
    \includegraphics[width=1.0\textwidth]{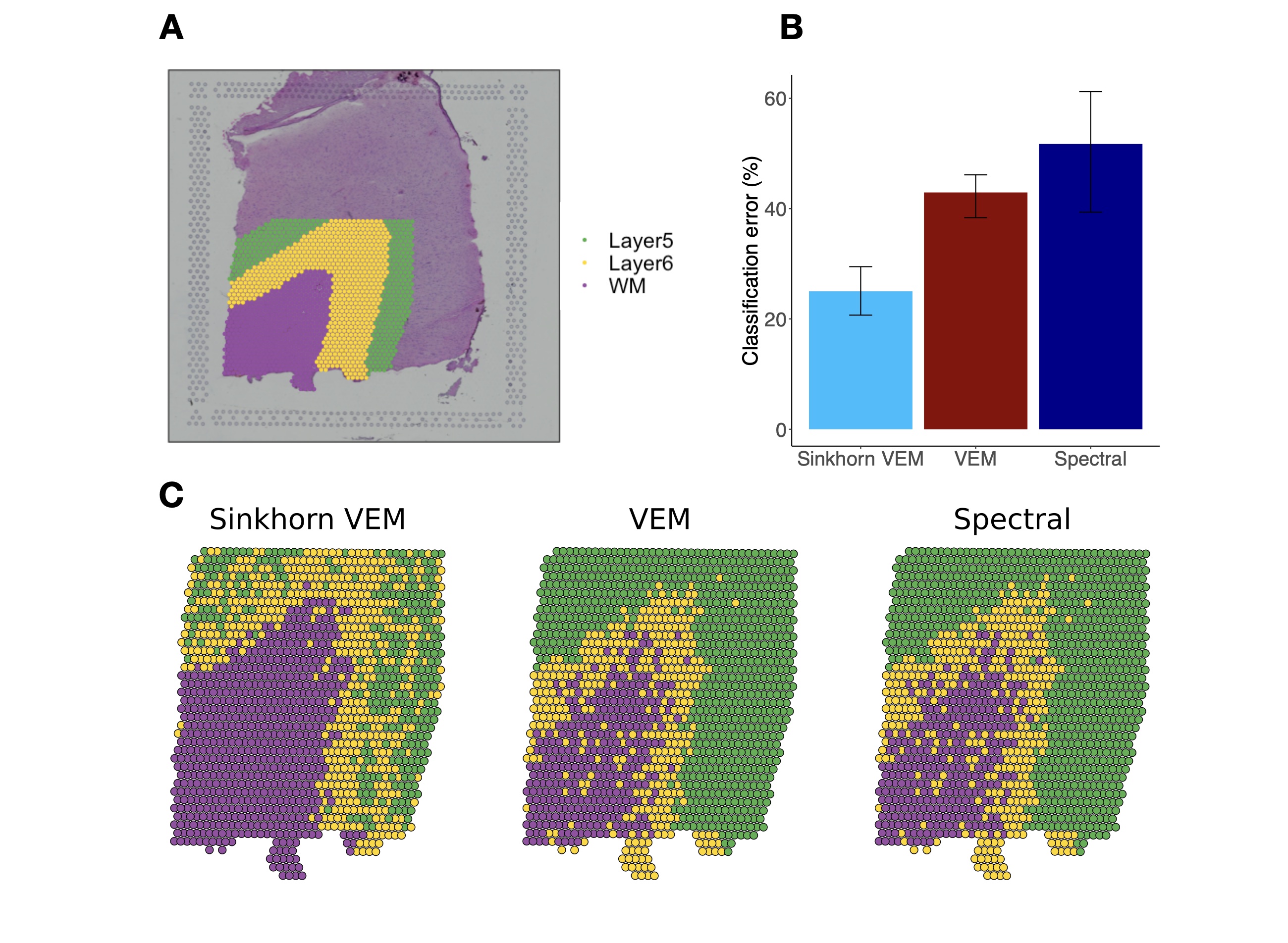}
  \end{center}
  \caption{Results for a Spatial Transcriptomic experiment. \textbf{A} Tissue sample of DLPFC, with colored dots representing spatial locations corresponding to three layers (Layer 5, Layer 6, and White Matter) where gene expression vectors were available. \textbf{B}. Comparison of clustering performance across the three methods, averaged over 100 repetitions. Bars represent 95\% confidence intervals. \textbf{C}: example of a typical recovered solution. SVEM most faithfully captures the underlying histological characterization, even when the model includes no spatial information.}
  \label{fig:spatialtrans}
\end{figure}

\section{Discussion and future directions}
\label{sec:disc}
Entropic optimal transport provides a valuable methodology for model-based clustering, often avoiding the pathologies that the log-likelihood would otherwise exhibit. Although our most specialized theoretical results pertain to the inference of location parameters in GMMs with equal variances, we stress that the validity of our methodology is more general in many aspects: first, the convergence of Sinkhorn-EM to stationary points is still guaranteed even in finite samples and a generic location-scale mixture model under mild assumptions of the base density. Additionally, scale or variance parameters can be optimized over, and our empirical results show superior performance in this regime. Third, it is also possible to optimize over weights $\alpha$, just as with the log-likelihood, although gradient-based optimization is required to optimize the entropic optimal transport loss in this case, unlike the log-likelihood, where weight updates are naturally implemented within the EM algorithm. 

We demonstrated the superiority of Sinkhorn-EM not only in simulated experiments but also in two real-world applications: neural segmentation and tissue clustering in spatial transcriptomics. Regarding neural segmentation, this task was originally introduced in \cite{nejatbakhsh2020probabilistic}. However, the tasks studied there involved a much lower number of clusters, and comparisons were against more naive baselines. Further, unlike \cite{nejatbakhsh2020probabilistic}, we provide a theoretical explanation for these results (Theorem \ref{theo:local}). The application of spatial transcriptomics also illustrates the versatility of our framework: by iteratively calling SEM as a subroutine, we were able to target a biclustering objective via variational inference, extending the benefits observed in the elementary case.

These promising results motivate two lines of future work. First, our Theorem \ref{theo:local} describes a particular setup where we avoid a type of spurious configuration. However, a characterization of local-optima structures for $\Ll$ remains lacking. One promising approach to such characterization would necessitate extending the results of \cite{chen2020likelihood} to the case of unequal weights. Second, although our methodology can be applied in both finite-sample and population regimes (Theorem \ref{teo:em} and Proposition \ref{prop:L} (a) and (c)), our most involved theoretical results (Theorems \ref{theo:local}, \ref{teo:mixtureloss} and \ref{teo:mixture}) are based on a study of the population landscape of $\Ll$. However, the fact that our successful empirical results are based on samples (sometimes with modest size) motivates a more in-depth analysis of the finite case. Regarding the EM algorithm, several thorough finite-sample convergence analyses have appeared in recent years (e.g., \cite{BalWaiYu17,Dwivedi2018}), all of which rely on establishing a non-asymptotic uniform law of large numbers to control the difference between the empirical $F^n$ and the population $F$ iterates. Unfortunately, establishing such a law for SEM would require a non-asymptotic convergence-rate analysis of the empirical transportation plans $\pi^n$ in an unbounded setting, which is currently unavailable. 
Still, a flurry of recent results in this direction \citep{groppe2024lower,pooladian2023minimax,masud2023multivariate,rigollet2025sample,sadhu2024stability,hundrieser2024limit,klatt2020empirical} provides us with a novel set of tools for a finite-sample and asymptotic analysis of $\Ll$ minimization.

Our approach has two main limitations. First, although the inferential gains are substantial, computation of $\Ll$ signifies a non-negligible extra computational burden, so the use of our methodology should focus on cases where the gains justify the extra cost; i.e., where we might not get the right optimum even when trying from many different seeds. The second limitation is that $\Ll$ may still have many bad local optima. We found that convergence to spurious local optima was more prevalent in mixtures with highly unequal weights; in such cases, Sinkhorn-EM did not substantially outperform EM.

\section{Acknowledgements}
Part of this research was performed while the author was visiting the Institute for Mathematical and Statistical Innovation (IMSI), which is supported by the National Science Foundation (Grant No. DMS-1929348). This work is supported by NSF DMS-2412895. This work used  Bridges-2 at Pittsburgh Supercomputing Center through allocation  MTH230027 from the Advanced Cyberinfrastructure Coordination Ecosystem: Services \& Support (ACCESS) program, which is supported by NSF Grant 2138259, 2138286, 2138307, 2137603, and 2138296. The author expresses gratitude to Jonathan Niles-Weed, Amin Nejatbakhsh, and Erdem Varol for insightful discussions and their contributions to a preliminary version of this work \cite{mena2020sinkhorn}.

\putbib

\end{bibunit}
\clearpage

\pagebreak 
\pagenumbering{arabic}
\begin{bibunit}
\section*{Supplementary materials: On model-based clustering with entropic optimal transport}
\begin{appendices}
\renewcommand{\thetheorem}{S.\arabic{theorem}}
\setcounter{theorem}{0}
\renewcommand{\thelemma}{S.\arabic{lemma}}
\setcounter{lemma}{0}
\renewcommand{\thecorollary}{S.\arabic{corollary}}
\setcounter{corollary}{0}
\renewcommand{\theproposition}{S.\arabic{proposition}}
\setcounter{proposition}{0}
\renewcommand\thefigure{\thesection.\arabic{figure}}    
\setcounter{equation}{0}\renewcommand\theequation{S\arabic{equation}}
\setcounter{figure}{0}    
\section{Extended preliminaries and statistical setup}

\label{sec:preliminaries}
Let's first state an existence and uniqueness result for the primal entropic optimal transport problem $S(P,Q)$ in \eqref{eq:sinkdef} with generic measures $P,Q$. It follows from Proposition 2.5 in
\cite{leonard2013survey} that under the cost integrability assumption $\mathbb{E}\left(c(X,Y)\right)<\infty$ (where the expectation is taken with respect to $(X,Y)\sim P\otimes Q$) there is a unique plan $\pi_\ast$ achieving the infimum in the primal problem. We now turn to the dual problem. It follows from e.g. Proposition 2.6 in \cite{leonard2013survey}, Section 2 in \cite{bercu2021asymptotic} and Proposition 4.3 in \cite{chewi2024statistical}
that this implies the existence of a dual potentials $(\omega_\ast,\nu_\ast)\in L_1(P)\times L_1(Q)$ 
Solving the dual problem
\begin{equation}\label{eq:semidual00} S(P,Q)=\sup_{\omega\in L_1(P),v\in L_1(Q)} \int \omega(x)\dd P(x) + \int \nu(y)\dd Q(y)  -  \int \left(e^{\omega(x)+\nu(y)-c(x,y)}-1\right)\dd P(x)\dd Q(y),
\end{equation}
and that \begin{equation}\label{eq:optpi} d\pi_\star(x,y) = \exp\left(\omega_\star(x)+\nu_\star(y)-c(x,y)\right) \dd P(x)\dd Q(y).\end{equation}
Moreover, by integrating the above expression with respect to $P$ we obtain that 
$$\nu_\star(y)=-\log\left(\int \exp(\omega_\star(x)-c(x,y))\dd P(x)\right)\quad{Q\text{-a.s.}}.$$
replacing the above expression in \eqref{eq:semidual00} yields the semi-dual formulation \eqref{eq:semidual0}, and in particular, the existence of the semidual potential $\omega_\star$. Note that uniqueness of $\pi_{\star}$ and \eqref{eq:optpi} imply that the pair $(\omega_\star,\nu_\star)$ is uniquely defined up to the translation $(\omega_\star+L,\nu_\star-L)$.  

Finally, note that can also express \eqref{eq:optpi} in terms of a single potential by replacing the above $\nu_\star(y)$, arriving at the following alternative representation
\begin{equation}\label{eq:optpisemi} d\pi_\star(x,y) = \dfrac{\exp\left(\omega_\star(x)-c(x,y)\right) \dd P(x)\dd Q(y)}{\int \exp\left(\omega_\star(x)-c(x,y)\right) \dd P(x)}.\end{equation}

\subsection{Statistical setup}\label{sub:setup}

To state the following results, we will assume that $P=P_\theta$ and that $Q$ represents the data distribution. This $Q$ could be either the population distribution $Q_\ast$ or the empirical distribution $Q=Q_N=\frac{1}{N}\sum_{i=1}^N \delta_{Y_i}$, where $Y_i\sim Q_\ast$. In some results (e.g., Proposition \ref{prop:L}(c), (e)), we will need to assume assume that we are in the well-specified setup, i.e., that $Q=Q_{\theta^\ast}$ where $\theta^\ast$ is the true data-generating parameter.

We will make the following assumptions on the parameterization of each model component $(\theta_k,\nu_k)\to q^{\nu_k}_{\theta_k}(y)$, that we denote simply as $q_{\theta_k}(y)$, for $k=1,\ldots K$.
\begin{enumerate}[label=(C\arabic*),ref=C\arabic*]
\item\label{C0} For each $\theta_k$ ,$\int \log q_{\theta_k}(y)dQ(y)>-\infty$.
\item\label{C1} The parameter space $\Theta$ is a subset of $\mathbb{R}^{d'}$ for some $d'$.
\item\label{C2} For each $y\in\mathbb{R}^d$, $q_{\theta_k}(y)$ is three times continuously differentiable.
\item\label{C3} For each $y\in\mathbb{R}^d$, $q_{\theta_k}(y)>0$.
\item\label{C4} Denote the model score $s_k(y,\theta_k)=\nabla_{\theta_k} \log q_{\theta_k}(y) $. We assume that for each $\tilde{\theta}_k$ there is a ball $B(\tilde{\theta}_k,\varepsilon)$ such that for any $1\leq k\leq K$
$$\sup_{\tilde{\theta}_k\in B(\theta_k,\varepsilon)} |s(\tilde{\theta}_k,y)|^4 \leq  f_1(y) $$
$$\sup_{\tilde{\theta}_k\in B(\theta_k,\varepsilon)}  |\nabla_{\theta_{k}} s(\tilde{\theta}_k,y)|^2 \leq  f_2(y) $$
$$\sup_{\tilde{\theta}_k\in B(\theta_k,\varepsilon)}  |\nabla^2_{\theta_k} s(\tilde{\theta}_k,y)| \leq  f_3(y) $$
for some functions $f_i$ such that for $i=1,2,3$, $\int |f_i(y)|dQ(y)<\infty,$ where all inequality symbols above are understood component-wise over all possible partial derivatives with respect to multidimensional parameters.
\item\label{C5} Each model component satisfies the coercivity condition $q_{\theta^k}(y)\to 0$ as $||\theta_k||\to\infty.$
\item \label{C6} There is an envelope function $E(y)$ such that for any $y\in\mathbb{R}^d$, $\theta\in\Psi$ and $\alpha\in[0,1]^K$ with $\sum_{k=1}^K\alpha_k=1$;
$$\log\left(\sum_{k=1}^K \alpha_k q_{\theta_k}(y)\right)<E(y),$$
with $\int |E(y)|dQ(y)<\infty$.
\item \label{C7}  Each of the $\mathbb{E}\left(\log q_{\theta_k}(Y)\right)=\int \log q_{\theta_k}(Y)dQ(Y)$ is quasi-concave in $\theta_k$.
\end{enumerate}
\begin{remark}
We require assumption \ref{C1} to guarantee that $\Ll(\theta)$ is well defined in terms of an entropic optimal transport problem. Assumptions \ref{C1},\ref{C2},\ref{C3},\ref{C4} guarantee sufficient smoothness of the negative log-likelihood $\ell$, and in turn, of $\Ll$. We ask for one extra degree of smoothness than would be typically required for $\ell$ in order to apply a recursive argument to obtain regularity for $\Ll$. Assumptions \ref{C5} and \ref{C6} are required to ensure coercivity of $\Ll$, which will be required to ensure that SEM converges to stationary points of this function. Assumption \ref{C7} is required to ensure the convergence of EM and SEM. Note that these conditions are satisfied on a location-scale family if we constrain the scale parameter $\nu_k$ to be bounded away from zero (or from a non-invertible matrix, in the multivariate case), if the underlying template density $g$ is bounded, sufficiently smooth, assigns mass to any point, and if the first three derivatives of $\log g$ are integrable with respect to $Q$. Indeed, in this case we have that $$q_{\theta_k}(y)=\frac{1}{|\det \nu_k|}g\left(\nu^{-1}_k(y-\theta_k)\right),$$
and $q_{\theta_k}(y)\to 0$ clearly if either $||\theta_k||\to\infty$ or $||\nu_k||\to\infty$. Additionally, 
$$\log\left(\sum_{k=1}^K \alpha_k q_{\theta_k}(y)\right)\leq \log\left(\max_{k=1,\ldots,K}q_{\theta_k}(y)\sum_{k=1}^K \alpha_k\right)\leq -\min_{k=1,\ldots,K} \log |\det \nu_k| + \log ||g||_\infty ,$$
which is integrable if $|\det \nu_k|$ are all bounded away from zero. Among families that satisfy the above conditions are the usual mixture of Gaussians, Laplace and Student-t distributions.

\end{remark}

We will identify the cost function with the mixture components $q_{\theta_k}$ via the relationship $c(x,y)=-\log q_x(y)=-\log q(y-x)$, where $q$ (without subscript) represents a template density. More generally, if there are component-specific variance or scale parameters (so that $q(y)=q^\nu(y)$, we will model them by choosing a suitable cost function $c(x,y)=-\log q^{\nu(x)}(y-x)$. For example, with the choice $c(x,y)=c_{\nu_x}(x,y)=\frac{1}{2}(x-y)^\top{M^{-1}(x)}(x-y)+\frac{1}{2}\log \det M(x) +\frac{K}{2}\log 2\pi$ we can represent all mixtures of multivariate Gaussians. Indeed, the mixture locations $\theta_k$ and weights $\alpha_k$ are encoded in the atomic measure $P_\theta$. The corresponding variances are encoded by a suitable choice of the function $M(x)$: in this case, it suffices to consider $M(x)=\Sigma_k$ whenever $x=\theta_k$; as we only evaluate $c$ where $P_\theta$ concentrates, the remaining choices of $M(x)$ are irrelevant. For the sake of simplicity, we will hide dependencies on scale parameters hereafter and write $\theta$ to represent both location and scale parameters. 

Once we have introduced $\theta$ and the relevant measures $P=P_\theta$ and $Q$, we will write the optimal $\omega_\star$ $\pi_\star$  as $\omega(\theta)$ and $\pi_\theta$, respectively. Since $P_\theta$ is supported on $K$ points, we can assume that $\omega(\theta)=\omega_\star\in \mathbb{R}^K$.  Note that we can relate $\omega(\theta)$ and $\pi_\theta$ using \eqref{eq:optpisemi}. In this case, if we write $x=k$ to represent $x=\theta_k$ we have
\begin{eqnarray} \nonumber d\pi_\star(k,y) &=&  
\dfrac{\alpha_k e^{\omega_k(\theta)} q_{\theta_k}(y)}{\sum_{k'=1}^K \alpha_{k'} e^{\omega_{k'}(\theta)} q_{\theta_{k'}}(y)}\dd Q(y)\\
&=&\label{eq:optpisemi2}
\dfrac{\alpha_k(\theta) q_{\theta_k}(y)}{\sum_{k'=1}^K \alpha_{k'}(\theta) q_{\theta_{k'}}(y)}\dd Q(y).
\end{eqnarray}
where we have defined
$\omega$-tilted version of $\alpha$:
\begin{equation} \label{eq:alphatheta1}\alpha_k(\omega):=\frac{\alpha_{k}e^{\omega_k}}{\sum_{k'=1}^K \alpha_{k'}e^{\omega_{k'}}}.\end{equation}
and we denote $\alpha(\theta)$ as a shortcut for $\alpha(\omega(\theta))$. Note that this definition is consistent with \eqref{eq:alphatheta} in the main text.
From this, equation \eqref{eq:respsinkhorn} in the main text follows immediately from the conditioning relation $d\pi(x,y)=d\pi(x|y)dQ(y)$.
By setting the last coordinate of $\omega_K(\theta)=0$, we can assume that $\omega(\theta)=(\omega_1(\theta),\ldots, \omega_{K-1}(\theta),0)$ is uniquely defined (Lemma \ref{lemma:alphaprop}) and so we identify $\omega(\theta)$ as an element in $\mathbb{R}^{K-1}$. 
By definition, for fixed $\theta$, $\omega(\theta) $ maximizes the following function $\tilde{\Ll}(\theta,\omega)$ with respect to $\omega$.
\begin{equation}\label{eq:tildeL} \tilde{\Ll}(\theta,\omega)=  \sum_{k=1}^K \alpha_k\omega_k -\mathbb{E} \left(\log \left( \sum_{k=1}^K \alpha_k e^{\omega_k}q_{\theta_k}(Y)\right)\right).\end{equation}

\section{Proofs and ommited results in Section \ref{sec:framework}}
\subsection{Auxiliary results \ref{prop:L}}

\begin{lemma}\label{lemma:three}
Suppose that assumptions \ref{C1},\ref{C2},\ref{C3},\ref{C4} hold. Then $\ell(\theta)$ and $\tilde{\Ll}(\theta,\omega)$ are three times continuously differentiable. Moreover $\ell(\theta,\alpha)$ is also differentiable if $\alpha\in\{\alpha\in[0,1]^K, 0<\sum_{k=1}^K \alpha_k<1,\alpha_k\}$ is treated as an additional parameter.
\end{lemma}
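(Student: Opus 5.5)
The plan is to differentiate under the integral sign, using dominated convergence with the envelopes in \ref{C4}. Write $m(y,\theta,\alpha)=\sum_{k=1}^K \alpha_k q_{\theta_k}(y)$. Then
$$\ell(\theta,\alpha)=-\mathbb{E}\log m(Y,\theta,\alpha), \qquad \tilde{\Ll}(\theta,\omega)=\sum_k\alpha_k\omega_k-\mathbb{E}\log m\bigl(Y,\theta,(\alpha_ke^{\omega_k})_k\bigr).$$
So $\tilde{\Ll}$ is $\ell$ evaluated at positive, not necessarily normalized weights $\beta_k=\alpha_ke^{\omega_k}$, plus a linear term. It therefore suffices to prove that $(\theta,\beta)\mapsto \mathbb{E}\log m(Y,\theta,\beta)$ is $C^3$ on $\mathrm{int}(\Theta)\times(0,\infty)^K$. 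The smooth reparametrization $\omega\mapsto\beta$ then handles $\tilde{\Ll}$, and restricting to the stated set of $\alpha$ gives the claim about $\ell(\theta,\alpha)$. Finiteness of $\ell$ and $\tilde{\Ll}$ follows from \ref{C0}, since $\log m\geq \log(\beta_k q_{\theta_k})$ for each $k$, together with the upper envelope \ref{C6}.

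\textbf{Step 1: pointwise derivatives.} Fix $y$. By \ref{C2} and \ref{C3}, $\log m$ is $C^3$ in $(\theta,\beta)$. Introduce the responsibilities $\Psi_k=\beta_kq_{\theta_k}/m\in[0,1]$ and write $\partial_{\theta_k} q_{\theta_k}=q_{\theta_k}s_k$. Every partial derivative of $\log m$ up to order three is then a polynomial in the following quantities:
\begin{itemize}
\item the $\Psi_k$;
\item the factors $\Psi_k/\beta_k$;
\item the scores $s_k$ and their derivatives $\nabla s_k$, $\nabla^2 s_k$.
\end{itemize}
The key observation is that $\partial_{\theta_k}\Psi_j=\Psi_j(\delta_{jk}-\Psi_k)s_k$ and $\partial_{\beta_k}\Psi_j=\Psi_j(\delta_{jk}-\Psi_k)/\beta_k$. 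Hence no factor of $m^{-1}$ or $q^{-1}$ survives unbounded. A third derivative is a sum of terms bounded by constants times one of
$$|s|^3,\qquad |s|\,|\nabla s|,\qquad |\nabla^2 s|,$$
with factors $1/\beta_k$ and lower-order terms multiplying these.

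\textbf{Step 2: domination.} Locally in a ball $B(\theta,\varepsilon)\times\prod_k[\beta_k/2,2\beta_k]$, the $1/\beta_k$ factors are bounded. Bound $|s|^3\le 1+|s|^4\le 1+f_1$ and $|s||\nabla s|\le \tfrac12(|s|^2+|\nabla s|^2)\le 1+f_1+f_2$; the remaining term satisfies $|\nabla^2 s|\le f_3$. All of these are integrable by \ref{C4}, and lower orders are dominated similarly. Dominated convergence, applied iteratively with the mean value theorem, justifies interchanging $\mathbb{E}$ with derivatives up to order three. It also gives continuity of the third derivatives, since the integrands are continuous in the parameters for each $y$ and dominated uniformly on the neighborhood.

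\textbf{Main obstacle.} The hard part is the bookkeeping in Step 1, making sure that every third-order term really reduces to the three types controlled by \ref{C4}. This matters especially for mixed terms such as $s_js_k s_l\Psi\Psi\Psi$, which need the fourth-moment envelope. I would handle it by an induction: show that the class of functions "polynomial in $\Psi$, $1/\beta$ and $\nabla^{r}s$ of weighted degree $\le 3$" is closed under differentiation with degree increasing by one. Here $s$ has weight 1, $\nabla s$ weight 2 and $\nabla^2 s$ weight 3, and Young's inequality bounds each monomial of weight $\le 3$ by $f_1+f_2+f_3+1$.
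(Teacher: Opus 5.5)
Your proposal is correct and follows essentially the same route as the paper: differentiate $\log\sum_k\alpha_kq_{\theta_k}$ under the expectation, write every derivative up to order three as a polynomial in the responsibilities $\Psi_k$ and in $s_k,\nabla s_k,\nabla^2 s_k$, dominate these with the envelopes of \ref{C4}, and then obtain $\tilde{\Ll}$ from the weighted likelihood. The differences are cosmetic. You use unnormalized weights $\beta_k=\alpha_ke^{\omega_k}$ where the paper normalizes to $\alpha(\omega)$ and adds the $\log\sum_k\alpha_ke^{\omega_k}$ term, you use Young's inequality where the paper uses Cauchy--Schwarz bounds such as $f_1^{3/4}$ and $f_1^{1/2}f_2^{1/2}$, and your explicit appeal to \ref{C0} and \ref{C6} for finiteness fills in a point the paper leaves implicit.
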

\begin{proof}
We start by showing that each model component $\ell_k(\theta):=\mathbb{E}(\log q_{\theta_k}(Y))$ three times continuously differentiable. This is a direct consequence of \ref{C2} and \ref{C4}, enabling us to exchange the derivation and expectation signs \citep{folland1999real}. To show that $\ell(\theta)$ is three times continuously differentiable, we will apply the same argument to the mixture. Consider now the negative log-likelihood function
$$\ell(\theta,y)=-\log\left(\sum_{k=1}^K \alpha_k q_{\theta_k}(y)\right),$$
so that $\ell(\theta)=\mathbb{E}\left(\ell(\theta,Y)\right).$ Note that
$$\nabla_{\theta_k} \ell(\theta,y)=\Psi_k(y,\theta,\alpha)s_k(\theta_k,y) ,$$
where $\Psi_k(\theta,y,\alpha)$ is defined in \eqref{eq:respsinkhorn}.
By condition \ref{C3}, $\Psi_k(y,\theta,\alpha)$ is well-defined (i.e., the denominator is not zero).
We can iteratively compute further derivatives using the product rule
$$\nabla_{\theta_k} \nabla_{\theta_{k'}} \ell(\theta,y)=\nabla_{\theta_{k'}}\Psi_k(y,\theta,\alpha)s_k(\theta_k,y)+\Psi_k(\theta,y,\alpha)\nabla_{\theta_{k'}}s_k(\theta_k,y) ,$$
and 
\begin{eqnarray}\nonumber \nabla_{\theta_k} \nabla_{\theta_{k'}}\nabla_{\theta_{k''}} \ell(\theta,y)&=&\nabla_{\theta_{k'}}\nabla_{\theta_{k''}}\Psi_k(y,\theta,\alpha)s_k(\theta_k,y)+\nabla_{\theta_{k'}}\Psi_k(y,\theta,\alpha)\nabla_{\theta_{k''}}s_k(\theta_k,y)\\ \label{eq:three} \nonumber &&+\nabla_{\theta_{k''}}\Psi_k(Y,\theta,\alpha)\nabla_{\theta_{k'}}s_k(\theta_k,y) +\Psi_k(y,\theta,\alpha)\nabla_{\theta_{k''}}\nabla_{\theta_{k'}}s_{k'}(\theta_{k'},y).\end{eqnarray}

Let's evaluate subsequent derivatives of $\Psi_k(y,\theta,\alpha)$. We have
\begin{equation}
    \label{eq:psider}
\nabla_{\theta_{k'}} \Psi_k(y,\theta,\alpha)=\Psi_k(y,\theta,\alpha)s_{k'}(\theta_{k'},y)1_{k=k'}-\Psi_{k'}(y,\theta,\alpha)\Psi_{k}(y,\theta,\alpha)s_{k'}(\theta_{k'},y)s_{k}(\theta_{k},y).
\end{equation}

And again, by iteratively applying \eqref{eq:psider} we find that $\nabla_{\theta_{k'}}\nabla_{\theta_{k''}} \Psi_k(y,\theta,\alpha)$ will consist on terms that can be expressed as polynomial expressions of $\Psi_{k}(y,\theta,\alpha)$ (up to third degree) times polynomials involving $s_k(\theta_{k},y)$ (up to third degree) or terms of the type $\nabla_{\theta_k}s_{k}(\theta_{k},y)$. Note that we can ignore all terms $\nabla_{\theta_{k'}}s_{k}(\theta_{k},y)$ if $k\neq k'$ since $s_k(\theta_k,y)$ only depends on component-specific parameters $\theta_k$. In order to justify differentiability of $\ell(\theta)$ we must show that if we take expectations in \eqref{eq:three} we can exchange expectation and derivative sign. This follows from our previous observations: indeed, first note that all polynomials of $\Psi_k(y,\theta,\alpha)$ are bounded since $\Psi_k(y,\theta,\alpha)<1$. Therefore, the first term in the right-hand side of \eqref{eq:three} can be bounded a sum of at most cubic polynomials of $s_k(\theta_k,y)$ or a quadratic polynomial of $s_k(\theta,y)$ times a gradient $\nabla s_k(\theta,y)$. In turn, these are bounded by either $f_1(y)^{3/4}$ (integrable, by Cauchy-Schwarz) and \ref{C4} or by $f_1^{1/2}(y)f_2^{1/2}(y)$. Again, this product is integrable by Cauchy-Schwarz and \ref{C4}. Regarding the second term, in \eqref{eq:three}, they can be bounded by either another quadratic polynomial $s_k(\theta_k,y)$ in times $\nabla_{\theta_k} s_k(\theta_k,y)$, whose expectation we already bounded. The third term is completely analogous. The four term is bounded by $f_3(y)$, which is integrable, by \ref{C4}. 

Consequently, $\ell(\theta)$ is three times differentiable, by the same domination argument that we applied to each individual log-likelihood. Continuity of the third derivative follows from a similar argument, using this time that the third derivative of $q_{\theta_k}(y)$ is continuous. Note also that the above argument extends trivially to the case where $\alpha$ is deemed a parameter. Indeed, in this case we consider the parameterization $(\theta_k,\nu_k,\alpha_k)\to \alpha_k q_{\theta_k}^{\nu_k}(y)$, which enjoys the same regularity as we established if $\alpha$ was kept fixed. Finally, it remains to show that $\tilde{\Ll}(\theta,\omega)$ is also three times continuously differentiable. This follows from the observation that 
\begin{eqnarray} \nonumber \tilde{\Ll}(\theta,\omega) &=& \sum_{k=1}^K\omega_k \alpha_k -\mathbb{E} \left(\log \left( \sum_{k=1}^K \alpha_k e^{\omega_k} q_{\theta_k}(Y)\right)\right)\\ \nonumber
 &=&\nonumber \sum_{k=1}^K\omega_k \alpha_k-  \mathbb{E}\left[\log\left(\sum_{k=1}^K \alpha_k e^{\omega_k}\right)\right]-\mathbb{E} \left(\log \left( \sum_{k=1}^K \alpha_k(\omega) q_{\theta_k}(Y)\right)\right) \\
 &=& \label{eq:tildeLell}\sum_{k=1}^K\omega_k \alpha_k-  \log\left(\sum_{k=1}^K \alpha_k e^{\omega_k}\right)+\ell(\theta,\alpha(\omega)),
\end{eqnarray}
where $\alpha(\omega)$ is defined in \eqref{eq:alphatheta1}. From this, $\tilde{\Ll}(\theta,\omega)$ is clearly finite. Also, since the first two terms are three times continuously differentiable in $\omega$. The last term also has this regularity by virtue of the regularity of the function $\ell(\theta,\alpha)$ that we just established, in combination with the fact that $\alpha(\theta)$ also has sufficient smoothness.
\end{proof}

\begin{lemma}\label{lemma:alphaprop}
For each $\theta$, The quantities $\alpha(\theta)$ in \eqref{eq:alphatheta} and $\pi_\theta$ in \ref{eq:L} are uniquely defined. Also, there is unique $\omega(\theta)\in\mathbb{R}^K$ (up to a constant translation) maximizing \eqref{eq:semidual}. 
\end{lemma}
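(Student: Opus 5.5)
The argument runs through the primal problem \eqref{eq:L}. Uniqueness of $\pi_\theta$ then transfers to $\omega(\theta)$, and from there to $\alpha(\theta)$.

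\textbf{Step 1: uniqueness of $\pi_\theta$.} The feasible set $\Pi(P_\theta,Q)$ is convex. The map $\pi\mapsto H(\pi\lvert Q^{X,Y}_\theta)$ is strictly convex on the set where it is finite. The infimum is finite: the independent coupling $P_\theta\otimes Q$ is feasible, and its relative entropy equals $-\sum_k\alpha_k\mathbb{E}\log q_{\theta_k}(Y)$ up to a finite term, which is finite by \ref{C0}. This is the integrability condition $\mathbb{E}(c(X,Y))<\infty$ under $P_\theta\otimes Q$, so Proposition 2.5 of \cite{leonard2013survey}, recalled in Appendix \ref{sec:preliminaries}, applies and gives existence and uniqueness of the minimizer $\pi_\theta$. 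Alternatively, strict convexity alone rules out two distinct minimizers.

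\textbf{Step 2: existence of a maximizer $\omega(\theta)$.} Existence of a semi-dual potential $\omega_\star\in L_1(P_\theta)$ was already recorded in Appendix \ref{sec:preliminaries}. Since $P_\theta$ is supported on $K$ atoms, such a potential is a vector in $\mathbb{R}^K$. I would also give a self-contained check. The objective $\tilde{\Ll}(\theta,\cdot)$ in \eqref{eq:tildeL} is concave in $\omega$, because $\omega\mapsto\log\sum_k\alpha_k e^{\omega_k}q_{\theta_k}(y)$ is a log-sum-exp composed with an affine map. It is also invariant under $\omega\mapsto\omega+L1_K$. After normalizing $\omega_K=0$, it is coercive (tends to $-\infty$): if some $\omega_j\to+\infty$, the log term grows like $\omega_j$ while the linear term is only $\alpha_j\omega_j$ with $\alpha_j<1$; the case $\omega_j\to-\infty$ is handled symmetrically. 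A maximizer therefore exists.

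\textbf{Step 3: uniqueness of $\omega(\theta)$ up to translation.} Any maximizer $\omega$ produces, via \eqref{eq:optpisemi2}, a coupling whose first marginal is $P_\theta$; this is the first-order condition $\mathbb{E}\Psi_k(Y,\theta,\alpha(\omega))=\alpha_k$. That coupling attains the primal value, so by Step 1 it equals $\pi_\theta$. Take two maximizers $\omega,\omega'$. Equality of the conditionals $\pi_\theta(k\mid y)$ for $Q$-a.e. $y$ gives
$$\frac{e^{\omega_k}}{e^{\omega'_k}}=\frac{\sum_{j}\alpha_je^{\omega_j}q_{\theta_j}(y)}{\sum_j\alpha_je^{\omega'_j}q_{\theta_j}(y)}.$$
Here \ref{C3} ensures every $q_{\theta_k}(y)>0$. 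The left side depends only on $k$ and the right side only on $y$, so both equal a constant $e^{L}$. Hence $\omega=\omega'+L1_K$.

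\textbf{Step 4: uniqueness of $\alpha(\theta)$.} The map $\omega\mapsto\alpha(\omega)$ in \eqref{eq:alphatheta1} is invariant under translations, so $\alpha(\theta)$ is uniquely defined.

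The main obstacle is the coercivity in Step 2, in the population case with unbounded $y$. One could instead use the strict concavity of $\tilde{\Ll}$ on $\{\omega_K=0\}$ together with the Hessian $\mathrm{diag}(\alpha)-\mathbb{E}\Psi\Psi^\top$. For the coercivity itself, I would bound $\log\sum_k\alpha_ke^{\omega_k}q_{\theta_k}(y)\ge\omega_j+\log\alpha_j+\log q_{\theta_j}(y)$. Taking expectations and using \ref{C0} gives $\tilde{\Ll}\le-(1-\alpha_j)\omega_j+\sum_{k\ne j}\alpha_k\omega_k+C$. Choosing $j$ to be the coordinate with the largest $\omega_j$ then gives decay to $-\infty$ whenever $\max\omega\to\infty$ under the normalization $\omega_K=0$. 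If $\max\omega$ stays bounded but some coordinate tends to $-\infty$, the same choice of $j$ still gives decay.
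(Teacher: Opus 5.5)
Your proposal is correct and follows essentially the same route as the paper. Integrability of the cost, from \ref{C0}, gives existence and uniqueness of $\pi_\theta$ via L\'eonard's Proposition 2.5; uniqueness of $\pi_\theta$ together with the Gibbs form \eqref{eq:optpisemi2} pins down $\omega(\theta)$ up to translation; and translation invariance of $\omega\mapsto\alpha(\omega)$ then gives uniqueness of $\alpha(\theta)$. Your coercivity argument for existence of a semi-dual maximizer and your explicit ratio-of-conditionals step are both correct, and they make self-contained what the paper delegates to the discussion in Appendix \ref{sec:preliminaries}.
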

\begin{proof}
Under the assumption that $\mathbb{E}(c(X,Y))<\infty$, existence and uniqueness of $\pi_\theta=\pi_\star(\theta)$ follow from the discussion in Section \ref{sec:preliminaries}. This discussion also guarantees the existence (up to fixing one coordinate) of $\omega(\theta)=\omega_\star(\theta)\in\mathbb{R}^K$. This also implies the existence and uniqueness of $\alpha(\theta)$, defined as a function of $\alpha(\theta)$. Thus, it remains to show that $\mathbb{E}(c(X,Y))<\infty$. This follows from the fact that in our case $c(x,y)=-\log q(y-x)$, so the conclusion follows from assumption \ref{C0}. Indeed, in this case
$$\mathbb{E}\left(c(X,Y)\right)=-\sum_{k=1}^K \alpha_k \int \log q_{\theta_k}(y)dQ(y)<\infty$$
\end{proof}
\begin{lemma}\label{lemma:alphader}
Under the assumptions \ref{C1},\ref{C2},\ref{C3},\ref{C4}, the function $\theta\to \omega(\theta)$ is twice continuously differentiable. Consequently, $\theta\to\alpha(\theta)$ and $\theta\to \pi_\theta(k,y)$ are also twice continuously  differentiable, for all $k=1,\ldots K$, $y\in\mathbb{R}^d$. 
\end{lemma}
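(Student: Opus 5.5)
The plan is to obtain $\omega(\theta)$ from the implicit function theorem applied to the first-order optimality condition of the concave problem $\omega\mapsto\tilde{\Ll}(\theta,\omega)$ in \eqref{eq:tildeL}. We normalize $\omega_K=0$, so $\omega\in\mathbb{R}^{K-1}$. By Lemma \ref{lemma:three}, $\tilde{\Ll}$ is three times continuously differentiable jointly in $(\theta,\omega)$. Hence the map
$$G(\theta,\omega):=\nabla_\omega\tilde{\Ll}(\theta,\omega)\in\mathbb{R}^{K-1}$$
is twice continuously differentiable. By Lemma \ref{lemma:alphaprop}, $\omega(\theta)$ exists and is the unique (normalized) maximizer, so $G(\theta,\omega(\theta))=0$. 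It therefore suffices to show that $\nabla_\omega G=\nabla^2_{\omega\omega}\tilde{\Ll}$ is invertible at $(\theta,\omega(\theta))$. The implicit function theorem then produces a locally defined $C^2$ solution map. By uniqueness this map coincides with $\omega(\theta)$, so $\theta\mapsto\omega(\theta)$ is $C^2$ on $int(\Theta)$.

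The key computation is the Hessian in $\omega$. Writing $\Psi_k=\Psi_k(y,\theta,\alpha(\omega))$, we have
$$\partial_{\omega_j}\tilde{\Ll}=\alpha_j-\mathbb{E}\left(\Psi_j(Y)\right),\qquad \partial_{\omega_i}\partial_{\omega_j}\tilde{\Ll}=-\mathbb{E}\left(\Psi_j(Y)1_{i=j}-\Psi_i(Y)\Psi_j(Y)\right).$$
The exchange of derivative and expectation is justified because the $\Psi$'s are bounded by one. Thus $-\nabla^2_{\omega\omega}\tilde{\Ll}$ is the $(K-1)\times(K-1)$ principal block of $\mathbb{E}\left(\mathrm{diag}(\Psi)-\Psi\Psi^\top\right)$, which is the expected covariance matrix of a categorical variable with probabilities $\Psi(Y)$.

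The main obstacle is proving positive definiteness of this block. For $v\in\mathbb{R}^{K}$ with $v_K=0$ and $v\neq0$,
$$v^\top\left(\mathrm{diag}(\Psi)-\Psi\Psi^\top\right)v=\mathrm{Var}_{\Psi(y)}(v_X)\geq 0,$$
with equality only if $v$ is constant on the support of $\Psi(y)$. Assumption \ref{C3} gives $q_{\theta_k}(y)>0$, and the tilted weights satisfy $\alpha_k(\omega)>0$, so every $\Psi_k(y)>0$ and the support is all of $[K]$. Since $v_K=0$, a constant $v$ would have to be zero, so the variance is strictly positive for every $y$. Taking expectations, the block is strictly positive definite; this is the matrix $A(\theta)$ appearing in Proposition \ref{prop:L}(e). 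The implicit function theorem then also yields the formula
$$\nabla_\theta\omega(\theta)=-\left(\nabla^2_{\omega\omega}\tilde{\Ll}\right)^{-1}\nabla^2_{\omega\theta}\tilde{\Ll}.$$

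The consequences are compositions of smooth maps. $\alpha(\theta)=\alpha(\omega(\theta))$ is $C^2$ because the softmax map \eqref{eq:alphatheta1} is smooth. For fixed $y$, $\pi_\theta(k,y)=\Psi_k(y,\theta,\alpha(\theta))$ by \eqref{eq:optpisemi2}, which is a ratio with a positive denominator (by \ref{C3}) of $C^2$ functions of $(\theta,\alpha)$ (by \ref{C2}) composed with $\alpha(\theta)$. It is therefore $C^2$ as well.
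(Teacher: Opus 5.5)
Your proposal is correct and follows essentially the paper's route: the implicit function theorem applied to the first-order condition $\nabla_\omega\tilde{\Ll}(\theta,\omega(\theta))=0$, with invertibility coming from negative definiteness of $\nabla^2_{\omega\omega}\tilde{\Ll}$. Your conditional-variance argument is the same quadratic form as the paper's weighted-Laplacian argument in Lemma \ref{lemma:hessian}. The remaining differences are small refinements over the paper's version: you use the $C^2$ form of the implicit function theorem directly instead of bootstrapping through the formula for $\nabla_\theta\omega(\theta)$, you justify via strict concavity that the local implicit solution coincides with the global maximizer, and you handle the direct $\theta$-dependence of $\pi_\theta(k,y)$, all of which the paper leaves implicit.
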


\begin{proof}

As we fixed the last coordinate, we can assume that $\omega(\theta)\in \mathbb{R}^{K-1}$. By definition, $\omega(\theta) $ maximizes $\tilde{\Ll}(\theta,\omega)$ in \eqref{eq:tildeL} with respect to $\omega$. By Lemma \ref{lemma:three}, $\tilde{\Ll}$ is three times continuously differentiable. In particular, it is differentiable with respect to $\omega$, for fixed $\theta$.
 Therefore, $\omega(\theta)$ must satisfy the first order condition
\begin{equation}\label{eq:first0}0 = \nabla_\omega \tilde{\Ll}(\theta, \omega(\theta)).\end{equation}
We now rely on the implicit function theorem: for a given $\theta$, if the derivative of $\nabla_\omega \tilde{\Ll}(\omega,\theta)$ with respect to $\omega$ is invertible, then $\alpha(\theta)$ is differentiable on a neighborhood around $\theta$. This derivative is the Hessian of $\tilde{\Ll}(\theta,\omega)$ with respect to $\omega$. We show in Lemma \ref{lemma:hessian} that this Hessian is negative definite, and hence invertible. Therefore, $\omega(\theta)$ is differentiable and we differentiation with repect to $\theta$ in \eqref{eq:first0} yields
\begin{equation} \label{eq:second} \nabla^2_{\omega\theta}\tilde{\Ll}(\theta, \omega(\theta))+ \nabla^2_{\omega\omega}  \tilde{\Ll}(\theta, \omega(\theta))\nabla_\theta \omega(\theta)=0.\end{equation}
Solving for $\nabla_\theta \omega(\theta)$ above yields
$$\nabla_\theta \omega(\theta)= - \left[\nabla^2_{\omega\omega} \tilde{\Ll}(\theta, \omega(\theta))\right]^{-1} \nabla^2_{\omega\theta} \tilde{\Ll}(\theta, \omega(\theta)).$$
Note that the right-hand side is differentiable since it is the composition of differentiable functions ($\tilde{\Ll}$ is three times continuously differentiable and $\omega(\theta)$ is continuously differentiable). Therefore, $\omega(\theta)$ is twice continuously differentiable. The proofs for $\alpha(\theta)$ and $\pi_\theta(k,y)$ are analogous after noticing that they are (twice) differentiable functions of $\omega(\theta)$.
\end{proof}

\begin{lemma}\label{lemma:hessian}
The Hessian of $\tilde{\Ll}(\theta,\omega)$ with respect to $\omega$ is negative definite at any $\omega \in\mathbb{R}^{K-1}$
\end{lemma}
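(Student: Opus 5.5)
We compute the Hessian of $\tilde{\Ll}(\theta,\omega)$ in $\omega$ explicitly and identify it with a negative conditional covariance matrix. Using \eqref{eq:tildeL}, write
$$\tilde{\Ll}(\theta,\omega)=\sum_{k=1}^K\alpha_k\omega_k-\mathbb{E}\left(\log\sum_{k=1}^K \alpha_k e^{\omega_k}q_{\theta_k}(Y)\right).$$
The linear term contributes nothing to the Hessian. Differentiating under the expectation is justified exactly as in Lemma \ref{lemma:three}: the resulting integrands are polynomials in quantities lying in $[0,1]$, hence bounded. Define the tilted responsibilities
$$\Psi_k(Y):=\Psi_k(Y,\theta,\alpha(\omega))=\frac{\alpha_k e^{\omega_k}q_{\theta_k}(Y)}{\sum_{k'}\alpha_{k'}e^{\omega_{k'}}q_{\theta_{k'}}(Y)}.$$
Then $\partial_{\omega_k}\tilde{\Ll}=\alpha_k-\mathbb{E}(\Psi_k(Y))$ and
$$\partial^2_{\omega_k\omega_j}\tilde{\Ll}=-\mathbb{E}\left(\Psi_k(Y)1_{k=j}-\Psi_k(Y)\Psi_j(Y)\right).$$
So the full $K\times K$ Hessian is $-\mathbb{E}\left(\mathrm{diag}(\Psi(Y))-\Psi(Y)\Psi(Y)^\top\right)$. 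For each fixed $y$, the matrix inside is the covariance of a categorical random vector with probabilities $\Psi(y)$.

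Since $\omega_K$ is fixed at $0$, the relevant Hessian is the upper-left $(K-1)\times(K-1)$ block $H$. Take $v\in\mathbb{R}^{K-1}$, $v\neq 0$, and extend it to $\bar v=(v,0)\in\mathbb{R}^K$. Then
$$-v^\top H v=\mathbb{E}\left(\sum_k\Psi_k(Y)\bar v_k^2-\Big(\sum_k\Psi_k(Y)\bar v_k\Big)^2\right)=\mathbb{E}\left(\mathrm{Var}_{\Psi(Y)}(\bar v)\right),$$
the expected variance of the function $k\mapsto\bar v_k$ under the distribution $\Psi(Y)$. Each term is nonnegative, so the Hessian is negative semidefinite.

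For strictness, assumption \ref{C3} gives $q_{\theta_k}(y)>0$, and $\alpha_k>0$, $e^{\omega_k}>0$, so $\Psi_k(y)>0$ for every $k$ and every $y$. A variance under a distribution with full support on $[K]$ vanishes only if $\bar v$ is constant in $k$. Because $\bar v_K=0$, that would force $v=0$. Hence the integrand is strictly positive for every $y$, its expectation is strictly positive, and $v^\top H v<0$.

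The main point needing care is this last step: the quotient by translations, implemented by fixing $\omega_K=0$, is exactly what removes the null direction $1_K$ of the full $K\times K$ Hessian. The rest is a routine interchange of derivative and expectation, which boundedness of $\Psi$ makes immediate.
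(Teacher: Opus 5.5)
Your proof is correct and is essentially the paper's argument. The paper computes the same Hessian $-\mathbb{E}\left(\mathrm{Diag}(\Psi)-\Psi\Psi^\top\right)$ and views its $K\times K$ extension as minus a weighted graph Laplacian with weights $W_{kk'}=\mathbb{E}(\Psi_k\Psi_{k'})$; its quadratic form $\frac{1}{2}\sum_{k,k'}W_{kk'}(x_k-x_{k'})^2$ is exactly your expected variance $\mathbb{E}\left(\mathrm{Var}_{\Psi(Y)}(x)\right)$, and strictness comes, as in your proposal, from $\Psi_k\in(0,1)$ forcing null vectors to be constant, which padding with a zero last coordinate rules out.
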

\begin{proof}

Note that, by elementary differentiation
\begin{eqnarray*} \nabla_\omega \tilde{\Ll} (\theta, \omega)&=& \alpha - \mathbb{E} \left(\frac{\alpha e^\omega q_\theta(Y)}{\sum_{i=1}^K \alpha_k e^\omega_k q_{\theta_k(Y)}}\right)\\
 &=& \alpha - \mathbb{E}\left(\Psi(Y,\theta,\alpha(\omega))\right).
\end{eqnarray*}
The expressions $\alpha e^\omega q_\theta(Y)$ and  $\Psi(Y,\theta,\alpha(\omega))$ are understood as the vectors with entries $\alpha_ke^{\omega_k}q_{\theta_k}(Y)$ and $\Psi_k(Y,\theta,\alpha(\omega))$ for $k\leq K-1$, respectively, and $\alpha(\omega)$ is the $\omega$-tilted version of $\alpha$ defined in \ref{eq:alphatheta1}):
Further differentiation yields
\begin{equation} \label{eq:seclw} \nabla^2_{\omega\omega}\tilde{\Ll}  (\theta, \omega)= -\mathbb{E}\left(Diag(\Psi(Y,\theta,\alpha(\omega))-\Psi(Y,\theta,\alpha(\omega))\Psi(Y,\theta,\alpha(\omega))^\top\right), \end{equation}

Now, define the (symmetric) matrix $I(\theta,\omega)$ as the extension of $\nabla^2_{\omega\omega}\tilde{\Ll}  (\theta, \omega)$ to the entire range of indexes $k=1,\ldots K$, i.e., $I(\theta,\omega)$ is the right hand side in \eqref{eq:seclw} viewed this time as a $K\times K$ matrix. To create this matrix, we will fill in the remaining entries using the missing coordinate $\Psi_{K}(Y,\theta,\alpha(\omega))$ and recalling that $\omega_K=0$.
By definition this matrix coincides with $ \nabla^2_{\omega\omega}\tilde{\Ll}(\theta,\omega)$ on the first $K-1$ row and column coordinates. Further, note that $-I(\theta,\omega)$ is the Laplacian matrix of a weighted graph since 
$-I(\theta,\omega)= D(\theta,\omega)- W(\theta,\omega)$ with weight matrix $W_{k,k'}(\theta,\omega)=\mathbb{E}\left(\Psi_k(Y,\theta,\alpha(\omega))\Psi_{k'}(Y,\theta,\alpha(\omega))\right)$ and diagonal $D(\theta,\omega)$ with $D(\theta,\omega)_{k,k}=\mathbb{E}\left(\Psi_{k}(Y,\theta,\alpha(\omega))\right)$, satisfying
\begin{align*} \sum_{k'=1}^{K} W(\theta,\omega)_{k,k'} =& \sum_{k'=1}^{K} \mathbb{E}\left(\Psi_k(Y,\theta,\alpha(\omega))\Psi_{k'}(Y,\theta,\alpha(\omega))\right)\\
=& \mathbb{E}\left(\Psi_k(Y,\theta,\alpha(\omega))\sum_{k'=1}^{K} \Psi_{k'}(Y,\theta,\alpha(\omega))\right)\\
=& \mathbb{E}\left(\Psi_k(Y,\theta,\alpha(\omega))\right)\\
=& D(\theta,\omega)_{k,k}.\end{align*}

Then, $I(\theta,\omega)$ is a negative weighted Laplacian matrix and
so $$x^\top I(\theta,\omega) x = \frac{1}{2}\sum^{K}_{k,k'} I(\theta,\omega)_{k,k'}(x_k-x_{k'})^2\leq 0.$$

The above expression is zero only if $x$ is a constant vector since all off-diagonal entries of $I(\theta,\omega)$ are strictly. Indeed, $I(\theta,\omega)_{k,k'}=-\mathbb{E}\left(\Psi_k(Y,\theta,\alpha(\omega)) \Psi_{k'}(Y,\theta,\alpha(\omega))\right)$ and for each $k\leq K$, $\theta_k \in\mathbb{R}^d$, $Y\in\mathbb{R}^d,\omega\in\mathbb{R}^{K-1}$ we have $0<\Psi_k(Y,\theta,\alpha(\omega))<1$ so that the quantity inside the expectation is strictly negative and will remain to be so after taking expectations.

Let's now conclude that $\nabla^2\tilde{\Ll}_{\omega\omega}(\theta, \omega)$ is negative definite. Suppose it is not, then, for some $z\in\mathbb{R}^{K-1}$
$z^\top \nabla^2_{\omega\omega}  \tilde{\Ll} (\theta, \omega) z=0$. We can pad this vector by a zero to obtain $\tilde{z}=(z_1,\ldots z_{K-1},0)\in\mathbb{R}^K$ so that $\tilde{z}^\top I(\theta,\omega)\tilde{z}=0$. By the above discussion, for this to be the case, we require that $\tilde{z}$ is a constant vector. In turn, by construction, this implies that $z=0$. Therefore, $\nabla^2_{\omega\omega} \tilde{\Ll}(\theta, \omega)$ is negative definite. 
\end{proof}
\subsection{Proof of equation \eqref{eq:semidual2}}\label{sub:semidual}
It is immediately implied by \eqref{eq:tildeLell} that
$$ \Ll(\theta)=\tilde{\Ll}(\theta,\omega(\theta)) =\sum_{k=1}^K\omega_k(\theta) \alpha_k-  \log\left(\sum_{k=1}^K \alpha_k e^{\omega_k(\theta)}\right)+\ell(\theta,\alpha(\omega(\theta))),$$

Also, note that
\begin{eqnarray*} H(\alpha|\alpha(\theta))&=&\sum_{k=1}^K \alpha_k \log \left(\frac{\alpha_k}{\alpha_k(\theta)}\right) \\
&=& \sum_{k=1}^K \alpha_k \log \alpha_k - \sum_{k=1}^K \alpha_k \log \alpha_k(\theta)\\
&=&\sum_{k=1}^K \alpha_k \log \alpha_k - \sum_{k=1}^K \alpha_k \log \frac{\alpha_k e^{\omega_k(\theta)}}{\sum_{k'=1}^K\alpha_{k'} e^{\omega_{k'}(\theta)}}\\
&=&\sum_{k=1}^K \alpha_k \log \alpha_k - \sum_{k=1}^K \alpha_k \log \frac{\alpha_k e^{\omega_k(\theta)}}{\sum_{k'=1}^K\alpha_{k'} e^{\omega_{k'}(\theta)}}\\
&=& \sum_{k=1}^K \alpha_k \log \left(\sum_{k'=1}^K\alpha_{k'} e^{\omega_{k'}(\theta)}\right) +\alpha_k \log \alpha_k  - \sum_{k=1}^K \alpha_k \log \alpha_k -\sum_{k=1}^K \alpha_k \log e^{\omega_k(\theta)}\\
&=& \log \left(\sum_{k=1}^K\alpha_{k} e^{\omega_{k}(\theta)}\right)-\sum_{k=1}^K \alpha_k  \omega_k(\theta).
\end{eqnarray*}
Therefore
\begin{equation} \label{eq:semidual22} \Ll(\theta)=H(\alpha|\alpha(\theta))+\ell(\theta,\alpha(\theta)),\end{equation}
so that \eqref{eq:semidual2} holds.

\subsection{Proof of Proposition~\ref{prop:L}}
\label{app:propL}
Regarding (a), in separate Lemmas \ref{lemma:alphaprop} and \ref{lemma:alphader} we establish that $\omega(\theta)$ and $\alpha(\theta)$ are well defined and twice continuously differentiable, and that $\tilde{\Ll}(\theta,\omega)$ in \eqref{eq:tildeL} is three times continuously differentiable. It only remains to show that $\Ll$ is twice differentiable. But this follows from the fact that $\Ll(\theta)=\tilde{\Ll}(\theta,\omega(\theta))$, an identity that we can differentiate twice. We already showed (b) in the main text. Let's show (c) detail. Note that in the well-specified setup ($Q=Q_{\theta^\ast}$), the coupling $\pi=\pi_\theta$ achieving the infimum in \eqref{eq:sinkdef2} is exactly the joint $Q^{X,Y}_{\theta^\ast}$ in \eqref{eq:joint} when $\theta=\theta^\ast$. To see this it suffices to show that this coupling is valid in the sense that $\pi_\theta\in \Pi(P_\theta,Q_{\ast})$: if that was the case, then $H(\pi_\theta|Q_{\theta}^{X,Y})=H(Q_{\theta}^{X,Y}|Q_{\theta}^{X,Y})=0$ and $\Ll$ in \eqref{eq:joint} cannot be minimized further since the term $H(Q_\ast|\mathcal{L})$ doesn't depend on $\pi$. Now, note that for every $y$ in the support of $Q_\theta$,
$$\int dQ_{\theta}^{X,Y}(x,y)= \int q_{x}(y)\dd P_\theta(x)\dd y= \sum_{k=1}^K \alpha_k q_{\theta_k}(y) dy=\dd Q_\theta(y),$$
and that for $x=\theta_k$ and $k=1,\ldots K$,

$$\int dQ_{\theta}^{X,Y}(x,y)= \int q_{x}(y)\dd P_\theta(x)\dd y=  \alpha_k \int q_{\theta_k}(y) dy=\alpha_k.$$
The above two displays imply that the marginals of $\pi_\theta$ are $Q_\theta$ and $P_\theta$. Since we are assuming that $\theta=\theta^\ast$ then the former marginal is $Q_{\theta^\ast}$, which, by assumption, coincides with the data generating mechanism $Q_\ast$. 

To prove (d) and (e), we must compute the first and second derivatives of $\Ll$. Note that we can write $\Ll(\theta)=\tilde{\Ll}(\theta,\omega(\theta))$ where $\tilde{\Ll}$ is defined in \eqref{eq:tildeL}. As in the proof of that Lemma, we identify $\omega(\theta)$ (the maximizer of $\tilde{\Ll}(\theta,\cdot)$) as a vector in $\mathbb{R}^{K-1}$. Since $\omega(\theta)$ and $\tilde{\Ll}$ are twice differentiable (Lemma \ref{lemma:alphader} and Lemma \ref{lemma:three}), we can obtain the first and second derivatives of $\Ll(\theta)$ using the chain rule 
\begin{equation}\label{eq:partialL}
\nabla_\theta \Ll(\theta) = \nabla_\theta\tilde{\Ll}(\theta, \omega(\theta)) + \nabla_\omega\tilde{\Ll}(\theta, \omega(\theta)) \nabla_\theta \omega(\theta),
\end{equation}
and  
\begin{eqnarray}
\nonumber \nabla^2_\theta \Ll(\theta) &=&\nabla^2_\theta \tilde{\Ll}(\theta, \omega(\theta))  + \nabla^2_{\omega\theta}  \tilde{\Ll}(\theta, \omega(\theta)) \nabla_\theta \omega(\theta)
 +  \left(\nabla^2_{\theta\omega}  \tilde{\Ll}(\theta, \omega(\theta))\right)^\top \nabla_\theta \omega(\theta) \\
\label{eq:sec} & &
 +\nabla_\theta \omega(\theta)^\top \nabla^2_{\omega\omega} \tilde{\Ll}(\theta, \omega(\theta)) \nabla_\theta \omega(\theta)+ \nabla_\omega \tilde{\Ll}(\theta, \omega(\theta)) \nabla^2_{\theta\theta} \omega(\theta).
\end{eqnarray}
 But by optimality properties of $\omega(\theta)$ in \eqref{eq:first0} and \eqref{eq:second} and using that the second derivative of $\tilde{\Ll}$ with respect to $\omega$ is invertible (Lemma \ref{lemma:hessian}), we get that for any $\theta$ 
\begin{equation}\label{eq:sec3}
\nabla^2_{\theta\theta} \Ll(\theta)  = \nabla^2_{\theta\theta} \tilde{\Ll} (\theta, \omega(\theta))  - \nabla^2_{\theta\omega} \tilde{\Ll}(\theta, \omega(\theta))\left(\nabla^2_{\omega\omega} \tilde{\Ll}  (\theta, \omega(\theta))
\right)^{-1} \nabla^2_{\omega\theta} \tilde{\Ll}(\theta, \omega(\theta)).
\end{equation}
Let's study the first term on the right-hand side.
Note that, since $\theta\in\int(\Omega_\alpha)$ (i.e., mixture weights are not parameters), for any $\omega$ and derivative order $l\in\{1,2\}$
\begin{eqnarray*} \nonumber \nabla^l_{\theta^l}\tilde{\Ll}(\theta,\omega) &=& -\nabla^l_{\theta^l} \left[\mathbb{E} \left(\log \left( \sum_{k=1}^K \alpha_k e^{\omega_k} q_{\theta_k}(Y)\right)\right)\right]\\ \nonumber
 &=&-\nabla^l_{\theta^l}\left[\mathbb{E} \left(\log \left( \sum_{k=1}^K \alpha_k(\omega) q_{\theta_k}(Y)\right)\right)\right] - \nabla^l_{\theta^l} \mathbb{E}\left[\log\left(\sum_{k=1}^K \alpha_k e^{\omega_k}\right)\right]\\
 &=&-\nabla^l_{\theta^l}\left[\mathbb{E} \left(\log \left( \sum_{k=1}^K \alpha_k(\omega) q_{\theta_k}(Y)\right)\right)\right].
\end{eqnarray*}
In the second-to-last line, we renormalized the weights $\alpha_k e^{\omega_k}$ to obtain the \textit{bona fide} probability measure $\alpha(\omega)$. In the last line, we used that this quantity doesn't depend on $\theta$. Note now that the term inside the expectation in the right-hand side above is the density of the mixture distribution $Q_{\theta,\alpha(\omega)}$
with mixture components parameterized by $\theta$ and tilted weights $\alpha=\alpha(\omega)$. Therefore, the expectation is simply the negative log-likelihood $\ell(\theta,\alpha(\omega))$ for a mixture with tilted weights. Consequently, differentiating with respect to $\theta$ yields the same formula as if we were differentiating the negative-log likelihood, and so we arrive at
$$ \nabla^l_{\theta^l}\tilde{\Ll}(\theta,\omega)= \nabla^l_{\theta^l} \ell(\theta,\alpha(\omega)),$$
so that we can re-state \eqref{eq:partialL}  as
\begin{eqnarray*}
\nabla_\theta \Ll(\theta) &=&\nabla_\theta \ell(\theta,\alpha(\omega(\theta))) +\nabla_\omega \tilde{\Ll}(\theta, \omega(\theta))\nabla_\theta \omega(\theta)\\
&=& \nabla_\theta \ell(\theta,\alpha(\theta)),
\end{eqnarray*}
where in the last line we used the optimality condition \eqref{eq:first0}. This establishes (d) (in our notation, $\alpha(\theta)$ is a shortcut for $\alpha(\omega(\theta))$. Additionally, we can re-express
\eqref{eq:sec3} as
\begin{equation*} \nabla^2_{\theta\theta} \Ll(\theta)  = \nabla^2_{\theta\theta}  \ell(\theta,\alpha(\omega(\theta))) - \nabla^2_{\omega,\theta} \tilde{\Ll}(\theta, \omega(\theta))\left(\nabla^2_{\omega\omega}\tilde{\Ll} (\theta, \omega(\theta))
\right)^{-1} \nabla^2_{\theta,\omega} \tilde{\Ll}(\theta, \omega(\theta)).
\end{equation*}
We will conclude the proof of (e) by specializing the above representation to the case where $Q=Q_{\theta^\ast}$ and $\theta=\theta^\ast$.  Note first that in this case $\omega({\theta^\ast})=0$: indeed, by (c),  $\pi_{\theta^\ast}=Q^{X,Y}_{\theta^\ast}$. It follows from \eqref{eq:optpisemi2} and the uniqueness of $\pi_{\theta^\ast}$ and of $\alpha(\theta)$ (Lemma \ref{lemma:alphaprop}) that
$\alpha(\theta^\ast)=\alpha(\omega({\theta^\ast}))=\alpha(0)=\alpha$, an so $\nabla^2 \ell(\theta,\alpha(\omega(\theta)))=\nabla^2\ell(\theta,\alpha)=\nabla^2 \ell(\theta)$.
 Therefore, 
\begin{equation*}\label{eq:sec4}
\nabla^2 \Ll(\theta^*) =  \nabla^2 \ell(\theta^*)  - \nabla^2_{\theta,\omega} \tilde{\Ll}(\theta, 0)\left(\partial^2_{\omega,\omega} \tilde{\Ll}  (\theta^*, 0)
\right)^{-1} \nabla^2_{\theta\omega} \tilde{\Ll}(\theta^*, 0),
\end{equation*}
and we conclude by identifying $$A(\theta^\ast)=-\nabla^2_{\omega\omega} \tilde{\Ll} (\theta^*, 0)
^{-1},B(\theta^\ast)=\nabla^2_{\theta,\omega} \tilde{\Ll}(\theta^*, 0). $$

It only remains to show (f). Let $\theta^n$ be a sequence with $||\theta^n||\to \infty$; we will show that  $\Ll(\theta^n)\to \infty$. Since, by \eqref{eq:semidual22} $\Ll(\theta)=H(\alpha|\alpha(\theta))+\ell(\theta,\alpha(\theta))$ and since $H(\alpha|\alpha(\theta))$ is positive it suffices to show that $\ell(\theta,\alpha(\theta))\to\infty$. Let now $\mathcal{A}$ be the set of indices $k=1,\ldots K$ such that $\lVert \theta^n_k\rVert \to\infty$. We next show that if $k\notin \mathcal{A}$, $\alpha_k(\theta^n)\to 0$. Indeed, the the optimality condition \eqref{eq:respsinkhorn} implies
\begin{eqnarray*}
\sum_{k\in \mathcal{A}} \alpha_k &=& \mathbb{E}\left(\sum_{k\in \mathcal{A}} \Psi_k(Y,\theta^n,\alpha(\theta^n))\right)\\
&=& \mathbb{E}\left(\frac{\sum_{k\in \mathcal{A}}\alpha_k(\theta^n)q_{\theta^n_k}(Y)}{\sum_{k\in \mathcal{A}^c}\alpha_k(\theta^n)q_{\theta^n_k}(Y)+\sum_{k\in \mathcal{A}}\alpha_k(\theta^n)q_{\theta^n_k}(Y)}\right)\\
\end{eqnarray*}
Note that the argument of the expectation is bounded by $1$, and it is positive. Therefore, we can apply the reverse Fatou's lemma \citep{shiryaev2016probability}
\begin{eqnarray*}
\sum_{k\in \mathcal{A}} \alpha_k &=& \limsup_{n\to\infty} \mathbb{E}\left(\sum_{k\in \mathcal{A}} \Psi_k(Y,\theta^n,\alpha(\theta^n))\right)\\
&\leq & \mathbb{E}\left(\limsup_{n\to\infty} \frac{\sum_{k\in \mathcal{A}}\alpha_k(\theta^n)q_{\theta^n_k}(Y)}{\sum_{k\in \mathcal{A}}\alpha_k(\theta^n)q_{\theta^n_k}(Y)+\sum_{k\in \mathcal{A}^c}\alpha_k(\theta^n)q_{\theta^n_k}(Y)}\right)\\
&\leq & \mathbb{E}\left(\frac{\sum_{k\in \mathcal{A}}\limsup_{n\to\infty} \left(\alpha_k(\theta^n)q_{\theta^n_k}(Y)\right)}{\sum_{k\in \mathcal{A}}\limsup_{n\to\infty} \left(\alpha_k(\theta^n)q_{\theta^n_k}(Y)\right)+\sum_{k\in \mathcal{A}^c}\limsup_{n\to\infty} \alpha_k(\theta^n)\limsup_{n\to\infty} q_{\theta^n_k}(Y)}\right)\\
\end{eqnarray*}
Let's look at the two terms separately. Since for $k\in\mathcal{A}$ $\theta_k^n$ is diverging, so $q_{\theta^n_k}(y)\to 0$. Conversely, since $\theta_k^n$ remains bounded for $k\in\mathcal{A}^c$, then by continuity of $q_{\theta_k}$ and assumption \ref{C3}, $\limsup_{n\to\infty} q_{\theta^n_k}(y)>\tilde{q}_k(y)>0$ for any $y\in\mathbb{R}^d$ and some measurable $\tilde{q}_k(y)$. Therefore, if there is any $k'\in\mathcal{A}^c$ such that $\limsup_{n\to\infty} \alpha_{k'}(\theta^n)>0$ we would conclude
\begin{eqnarray*}
\sum_{k\in \mathcal{A}} \alpha_k 
&\leq & \mathbb{E}\left(\frac{\sum_{k\in \mathcal{A}}\limsup_{n\to\infty} \left(\alpha_k(\theta^n)q_{\theta^n_k}(Y)\right)}{\sum_{k\in \mathcal{A}}\limsup_{n\to\infty} \left(\alpha_k(\theta^n)q_{\theta^n_k}(Y)\right)+\limsup_{n\to\infty} \alpha_{k'}(\theta^n)\limsup_{n\to\infty} \tilde{q}_{k'}(Y)}\right)\\
&\leq &\mathbb{E}\left(\frac{0}{0+\limsup_{n\to\infty} \alpha_{k'}(\theta^n)\limsup_{n\to\infty} \tilde{q}_{k'}(Y)}\right)\\ 
&\leq & 0,
\end{eqnarray*}
contradicting that each mixture weight is positive. Therefore, $\alpha_k(\theta^n)$ must vanish in the set $\mathcal{A}^c$. 
Let's now express the negative log-likelihood as
\begin{eqnarray*}\ell(\theta^n,\alpha(\theta^n))=-\mathbb{E}\left(\log \left(\sum_{k\in \mathcal{A}} \alpha_k(\theta^n)q_{\theta^n_k}(Y)+\sum_{k\in \mathcal{A}^c} \alpha_k(\theta^n)q_{\theta^n_k}(Y)\right)\right).\end{eqnarray*}
The second summation converges to $0$ by the above argument, and since $q_{\theta_k}(y)$ is bounded above ($\theta_k$ lies on a compact). The first term also converges to zero: indeed, for $k\in\mathcal{A}$, $0<\alpha_k(\theta^n)< 1$, and $q_{\theta_k}^n(Y)\to 0$ by coercivity of the mixture components \ref{C5}. If we were able to argue that we can exchange limit and expectation, we would conclude that
$$\lim_{n\to\infty}\ell(\theta^n,\alpha(\theta^n))=-\mathbb{E}\left(\lim_{n\to\infty}\log \left(\sum_{k\in \mathcal{A}} \alpha_k(\theta^n)q_{\theta^n_k}(Y)+\sum_{k\in \mathcal{A}^c} \alpha_k(\theta^n)q_{\theta^n_k}(Y)\right)\right) = -\mathbb{E}\left(\log(0)\right)=+\infty,$$
and the proof would be concluded. To justify that this exchange is valid, we rely on the envelope condition \ref{C6}, which implies
$$\log \left(\sum_{k\in \mathcal{A}} \alpha_k(\theta^n)q_{\theta^n_k}(y)+\sum_{k\in \mathcal{A}^c} \alpha_k(\theta^n)q_{\theta^n_k}(y)\right)\leq E(y),$$
where the right-hand side satisfies $\int |E(y)|dQ(y)<\infty$. The conclusion follows from dominated convergence.

\section{Proof of results in Section \ref{sec:em}}
\label{app:em}
\subsection{Proof of Theorem \ref{teo:em}}
The proof borrows from the one for the EM algorithm introduced in \cite{Wu1983}. For an arbitrary coupling $\pi$ between the set measure $\alpha_k$ in the set $[K]$ and the marginal distribution of $Y\sim Q_\ast$ we define the function
$$m(\theta,\pi)=-\mathbb{E}\left(\sum_{k=1}^K\pi(k|Y) \log \left(\alpha_k q_{\theta_k}(Y)\right)\right).$$

By the definition of the $M$ step \eqref{eq:mstep} after having computed $\pi^{t+1}$ from the previous $E$-step \eqref{eq:estep} we have, since $\theta^{t+1}$ minimizes the function $m(\cdot, \pi^{t+1})$:
$$m(\theta^{t+1},\pi^{t+1})\leq  m(\theta^{t},\pi^{t+1})$$
Here we are implicitly using the fact that since $\theta^t$ is fixed, there is one-to-one correspondence between a coupling $\tilde{\pi}\in \Pi(P_{\theta^t},Q_\ast)$ (solving the $E$ step) with a coupling $\pi\in \Pi(\alpha,Q_\ast)$. We can add the relative entropy term $H\left(\pi^{t+1}\lvert \alpha\otimes Q\right)$ to obtain
\begin{align*}m(\theta^{t+1},\pi^{t+1})+H\left(\pi^{t+1}\lvert \alpha\otimes Q_\ast\right)&\leq  m(\theta^{t},\pi^{t+1})+H\left(\pi^{t+1}\lvert \alpha\otimes Q_\ast\right)\\
&= -\mathbb{E}\left(\sum_{k=1}^K\pi^{t+1}(k|Y) \log \left(\alpha_k q_{\theta^t_k}(Y)\right)\right)+H\left(\pi^{t+1}\lvert \alpha\otimes Q_\ast\right)\\
&=  H\left(\pi^{t+1}\lvert \alpha \otimes Q_\ast\right)\\ 
&= \inf_{\tilde{\pi}\in\Pi\left(P_{\theta^t},Q_\ast\right)} H\left(\tilde{\pi}\lvert Q_{\theta^t}^{X,Y}\right)\\
& = \Ll(\theta^t).\end{align*}
In the second-to-last equality, we used that $\pi^{t+1}$ solves the previous E step, and in the last, we used the definition of $\Ll(\theta^t)$. To conclude, we note that the left-hand side above also expresses as $H\left(\tilde{\pi}^{t+1}\lvert Q^{X,Y}_{\theta^{t+1}}\right)$ but the coupling $\pi^{t+1}$ may not be optimal for the problem defining $\Ll(\theta^{t+1})$, so 
$$ \Ll(\theta^{t+1})\leq H\left(\tilde{\pi}^{t+1}\lvert Q^{X,Y}_{\theta^{t+1}}\right) \leq m(\theta^{t+1},\pi^{t+1})+H\left(\pi^{t+1}\lvert \alpha\otimes Q_\ast\right)\leq \Ll(\theta^t).$$

We now show the inequality is strict if $\theta^t$ is not a stationary point. Note that $\Ll(\theta)=m(\theta,\pi_{\theta})$ where $\pi_\theta$ minimizes $m(\theta,\cdot)$. By the chain rule and optimality of $\pi_\theta$, and using the already established differentiability of $\pi_\theta$ (Lemma \ref{lemma:alphader}) we have
\begin{equation}\label{eq:sta}\nabla_\theta \Ll(\theta)=\nabla_\theta  m (\theta,\pi_\theta) + \nabla_\pi m(\theta,\pi_\theta) \nabla_\theta  \pi_\theta=\nabla_\theta m (\theta,\pi_\theta).\end{equation}

Since $\theta^t$ is not a stationary point for $\Ll$ the above implies that $\nabla_\theta  m(\theta^t,\pi_{\theta^t})\neq 0$. Therefore, $\theta^t$ cannot globally minimize the (negative of) the function defining the M-step $m(\cdot,\pi_{\theta^t})=m(\cdot,\pi^{t+1})$, implying that the $M$ step leads to a strict decrease of this function, i.e.
$m(\theta^{t+1},\pi^{t+1}) <  m(\theta^{t},\pi^{t+1})$. By the same argument as above this implies $\Ll(\theta^{t+1})<\Ll(\theta^{t})$.

It only remains to show that the sequence $\theta^t$ converges to a stationary point. By inspecting the proof of \cite{Wu1983}, it suffices to show that $\Ll(\theta)$ satisfies a set of conditions. Most of these conditions are required to ensure non-degeneracy and are trivially satisfied in our case; the only condition that requires some care is showing that $\Ll$ is coercive, which we already established in Proposition \ref{prop:L}(f). The proof is concluded.
\subsection{Proof of Proposition \ref{prop:convalpha}}
\begin{proof}

Let's compute the first and second derivatives of $\Ll(\theta,\alpha)$ with respect to $\alpha$. We proceed similarly as in the proof of Proposition \ref{prop:L}(e), by using the fact that $\Ll(\theta,\alpha)=\tilde{\Ll}(\theta,\alpha,\omega(\theta,\alpha))$ (here, we express $\alpha$ as an additional parameter to distinguish it from non-weight parameters $\theta$ such as location or scales.

We already showed that 
$\nabla_\alpha \Ll(\theta,\alpha)= \nabla_\alpha \tilde{\Ll}\left(\theta,\alpha,\omega(\theta,\alpha)\right)$ (see the proof of Proposition \ref{prop:L}(d). Also,
it is clear to see that from \eqref{eq:tildeL} that

\begin{equation} \label{eq:nablaLalpha} \nabla_\alpha \tilde{\Ll}\left(\theta,\alpha,\omega\right) = \omega - \tilde{D}^{-1}(\alpha)\mathbb{E}\left(\Psi(Y,\theta,\alpha,\omega)\right),\end{equation}
where $\tilde{D}(\alpha)$ is the diagonal matrix with entries $\tilde{D}(\alpha)(k,k)=\alpha_k$. In particular, if $\omega=\omega(\theta,\alpha)$ we have by \eqref{eq:respsinkhorn} that $\mathbb{E}\left(\Psi(Y,\theta,\alpha,\omega(\theta,\alpha)\right)=\alpha$ and so 
$$\nabla_\alpha \Ll(\theta,\alpha) = \omega(\theta,\alpha) -\tilde{D}^{-1}(\alpha) \alpha= \omega(\theta,\alpha)-1_K,$$
We must now evaluate the second derivatives. We will use the formula
\begin{equation}\label{eq:nabla2alpha} \nabla^2_{\alpha\alpha} \Ll(\theta,\alpha) = \nabla^2_{\alpha\alpha} \tilde{\Ll}(\theta,\alpha,\omega(\theta,\alpha))-\nabla^2_{\alpha\omega}\tilde{\Ll}(\theta,\alpha,\omega(\theta,\alpha))(\nabla^2_{\omega\omega}\tilde{\Ll}(\theta,\alpha,\omega(\theta,\alpha)))^{\dagger}\nabla^2_{\alpha\omega}\tilde{\Ll}(\theta,\alpha,\omega(\theta,\alpha)).\end{equation}
Where $A^\dagger$ represents the Moore-Penrose pseudo-inverse of $A$. Note that a similar expression was already established in the Proof of Proposition \ref{prop:L}(e), equation \eqref{eq:sec3}. The only difference is that here we use the pseudo-inverse as an alternative to having to deal with a degenerate Hessian $\nabla^2_{\omega\omega}\tilde{\Ll}$ due to multiple $\omega$ achieving the maximum in the definition of $\omega(\theta)$. This is a standard equivalence (see, for example, \cite{golub1973differentiation}), so we skip the details. 

Suppose that we are able to show that $\nabla^2_{\alpha \alpha}\tilde{\Ll}(\theta,\alpha,\omega(\theta,\alpha))$ is positive definite. Then, from the fact that $\nabla^2_{\omega\omega} \tilde{\Ll}(\theta,\alpha,\omega)$ is negative definite (Lemma \ref{lemma:hessian}) in addition to \eqref{eq:nabla2alpha} we would conclude that $\Ll(\theta,\cdot)$ is convex. Let's show that this is the case: by differentiating \eqref{eq:nablaLalpha} with respect to $\alpha$, we obtain
$$\nabla^2_{\alpha \alpha}\tilde{\Ll}(\theta,\alpha,\omega) = \tilde{D}^{-1}(\alpha) \mathbb{E}\left(\Psi(Y,\theta,\alpha,\omega),\Psi(Y,\theta,\alpha,\omega)^\top\right) \tilde{D}^{-1}(\alpha).$$
This term is clearly positive definite, since for $x\in\mathbb{R}^d\setminus\{0\}$, $$x^\top \nabla^2_{\alpha \alpha}\tilde{\Ll}(\theta,\alpha,\omega)x = \mathbb{E}\left(\lVert \Psi^\top(Y,\theta,\alpha,\omega)\tilde{D}^{-1}(\alpha)x \rVert^2 \right)>0,$$
as all diagonal entries in $\tilde{D}(\alpha)$ are strictly positive, and since $\Psi$ is also strictly positive.

To establish the convergence of the iterates \eqref{eq:updatealpha} we use a standard result for Bregman divergence-based proximal methods: specifically, by Theorem 4.1 in \cite{bolte2018first} specializing to the entropy function $h(x)=x\log x$, it suffices to show that for some  $L>0$, $\nabla^2_{\alpha,\alpha} \Ll(\theta,\alpha)-\nabla^2_{\alpha\alpha} h(\alpha)=L \nabla^2_{\alpha,\alpha} \Ll(\theta,\alpha)-\tilde{D}^{-1}(\alpha)$ is negative semi definite. We will show this to be the case under assumption \eqref{eq:condkappa}. 
Indeed, differentiating \eqref{eq:nablaLalpha} with respect to $\omega$ we obtain
$$\nabla^2_{\alpha\omega} \tilde{\Ll}(\theta,\alpha,\omega)= I_K - \tilde{D}^{-1}(\alpha)\left(D(\theta,\alpha,\omega)-\mathbb{E}\left(\Psi(Y,\theta,\alpha,\omega)\Psi(Y,\theta,\alpha,\omega)^\top\right)\right),$$
where $D(\theta,\alpha,\omega)$ is the diagonal matrix with entries $D_{k,k}(\theta,\alpha,\omega)=\mathbb{E}\left(\Psi_k(Y,\theta,\alpha,\omega)\right)$.  
Therefore, using the fact that at $\omega=\omega(\theta,\alpha)$ we have $D(\omega,\alpha,\omega(\theta,\alpha))=\tilde{D}(\alpha)$ we get  $$\nabla^2_{\alpha\omega} \tilde{\Ll}(\theta,\alpha,\omega(\theta,\alpha))=\tilde{D}^{-1}(\alpha)\mathbb{E}\left(\Psi(Y,\theta,\alpha,\omega(\theta,\alpha))\Psi(Y,\theta,\alpha,\omega(\theta,\alpha))^\top\right)$$
Moreover, we already showed in the proof of Lemma \ref{lemma:hessian}
that \begin{eqnarray*}
\nabla^2_{\omega\omega} \tilde{\Ll}(\theta,\alpha,\omega(\theta,\alpha))&=&-\left(D(\theta,\alpha,\omega(\theta,\alpha))-\mathbb{E}\left(\Psi(Y,\theta,\alpha,\omega(\theta,\alpha))\Psi(Y,\theta,\alpha,\omega(\theta,\alpha))^\top\right)\right)\\&=&
-\left(\tilde{D}(\alpha)-\mathbb{E}\left(\Psi(Y,\theta,\alpha,\omega(\theta,\alpha))\Psi(Y,\theta,\alpha,\omega(\theta,\alpha))^\top\right)\right).
\end{eqnarray*}
Consistent with our previous notation, we write $\Psi(Y,\theta,\alpha,\omega(\theta,\alpha))$ simply as $\Psi(Y,\theta,\alpha(\theta)))$. Putting all together, and denoting $W(\theta,\alpha)=\mathbb{E}\left(\Psi(Y,\theta,\alpha(\theta))\Psi(Y,\theta,\alpha(\theta))^\top\right)$ we have established that
\begin{eqnarray}\nonumber \nabla^2_{\alpha\alpha} \Ll(\theta,\alpha) &=& \tilde{D}^{-1}(\alpha) W(\theta,\alpha) \tilde{D}^{-1}(\alpha) +\tilde{D}^{-1}(\alpha)  W(\theta,\alpha)\left(\tilde{D}(\alpha)-W(\theta,\alpha)\right)^\dagger W(\theta,\alpha) \tilde{D}^{-1}(\alpha)\\ && \label{eq:nabla2alpha}
\end{eqnarray}
It then remains to bound each of the terms above by $\kappa D^{-1}(\alpha)$ in the Loewner order. First, following our previous calculation, note that for $x\in\mathbb{R}^d\setminus\{0\}$, 
\begin{eqnarray*} x^\top\tilde{D}^{-1}(\alpha) W(\theta,\alpha) \tilde{D}^{-1}(\alpha)x&=&\mathbb{E}\left(\lVert \Psi^\top(Y,\theta,\alpha(\theta))\tilde{D}^{-1}(\alpha)x \rVert^2 \right)\\
&=& \mathbb{E}\left(\left(\sum_{k=1}^K \frac{\Psi_k(Y,\theta,\alpha(\theta))x_k}{\alpha_k}\right)^2\right)\\
&=& \mathbb{E}\left(\left(\sum_{k=1}^K \frac{\sqrt{\Psi_k(Y,\theta,\alpha(\theta))}x_k}{\sqrt{\alpha_k}} \frac{\sqrt{\Psi_k(Y,\theta,\alpha(\theta))}}{\sqrt{\alpha}_k}\right)^2\right)\\
&\leq & \mathbb{E}\left(\sum_{k=1}^K \frac{\Psi_k(Y,\theta,\alpha(\theta))x^2_k} {\alpha_k}\sum_{k=1}^K\frac{\Psi_k(Y,\theta,\alpha(\theta))}{\alpha_k}\right)\\ &\leq & \sum_{k=1}^K \frac{x^2_k} {\alpha_k} \mathbb{E}\left(\sum_{k=1}^K\frac{\Psi_k(Y,\theta,\alpha(\theta))}{\alpha_k}\right)
\\
&=& \sum_{k=1}^K \frac{x^2_k} {\alpha_k} \mathbb{E}\left(\sum_{k=1}^K\frac{\alpha_k}{\alpha_k}\right)\\
&=& Kx^\top \tilde{D}^{-1}(\alpha)x.
\end{eqnarray*}
In the above chain of inequalities, we used Cauchy-Schwarz, the fact that $\Psi(Y,\theta,\omega)\leq 1$, and that $\mathbb{E}\left(\Psi_k(Y,\theta,\alpha(\theta)\right)=\alpha_k$. It then suffices to establish a similar bound for the second term in \eqref{eq:nabla2alpha}. To do so, Denote $\tilde{W}(\theta,\alpha)= \tilde{D}^{-1/2}(\alpha)W(\theta,\alpha)\tilde{D}^{-1/2}$. Then, 

$$\tilde{D}^{-1}(\alpha)  W(\theta,\alpha)\left(\tilde{D}(\alpha)-W(\theta,\alpha)\right)^\dagger W(\theta,\alpha) \tilde{D}^{-1}(\alpha)= \tilde{D}^{-1/2} \tilde{W}(\theta,\alpha) \left(I_K-\tilde{W}(\theta,\alpha)\right)^\dagger \tilde{W}(\theta,\alpha)\tilde{D}^{-1/2}(\alpha).$$

By a simple diagonalization argument, eigenvalues of $\tilde{W}(\theta,\alpha)\left(I_K-\tilde{W}(\theta,\alpha)\right)^\dagger \tilde{W}(\theta,\alpha)$  $\lambda_k(\alpha)^2/(1-\lambda_k(\alpha))$ if $\lambda_k(\alpha)\neq 1$ and $0$ if $\lambda_k(\alpha)=1$, where $\lambda_k(\alpha)$ are the eigenvalues of $\tilde{W}(\theta,\alpha)$ (not in order). Then, for $x\in\mathbb{R}^d\setminus\{0\}$,
\begin{eqnarray*} x^\top \tilde{D}^{-1}(\alpha)  W(\theta,\alpha)\left(\tilde{D}(\alpha)-W(\theta,\alpha)\right)^\dagger W(\theta,\alpha) \tilde{D}^{-1}(\alpha)x &\leq& \max_{k: \lambda_k(\alpha)\neq 1} \left\{\frac{\lambda_k(\alpha)^2}{1-\lambda_k(\alpha)}\right\} x^\top\tilde{D}^{-1/2}(\alpha)\tilde{D}^{-1/2}(\alpha)x \\ &\leq&  \kappa  x^\top\tilde{D}^{-1}(\alpha)x,
\end{eqnarray*}
where $\kappa$ is independent of $\alpha$, by virtue of \eqref{eq:condkappa}. Above, we have used that all eigenvalues of $\tilde{W}(\theta,\alpha)$ are upper bounded by one, and that since the graph induced by the weight matrix $W(\theta,\alpha)$ is connected (all entries are positive), then the second largest eigenvalue must be smaller than one.

We have established the convergence of iterates \eqref{eq:updatealpha} if $\eta$ is small enough (this value should be chosen as a function of the above convexity bound). It only remains to show that the block-coordinate descent described in Proposition \ref{prop:convalpha} converges to a stationary point. This follows directly from Theorem 2 in \cite{razaviyayn2013unified}. Indeed, it suffices to note that the Sinkhorn-EM step maximizes an upper bound $u_1(\cdot,\theta^t,\alpha)$ to $\Ll(\cdot,\alpha)$ that satisfies the conditions in that Theorem. These conditions are 1) $u_1(\cdot,\theta^t,\alpha)$ is convex, 2) $u_1(\theta^t,\theta^t,\alpha)=\Ll(\theta^t,\alpha)$ and 3) $\nabla_\theta u_1(\theta^t,\theta^t,\alpha)=\nabla_\theta \Ll(\theta,\alpha)$.  In our case, this function is $m(\theta,\pi(\theta^t)$ where $m$ is defined in the proof of Theorem \ref{teo:em}. This function is clearly quasi convex over $\theta$ under our concavity assumption \ref{C7}, and it also clearly satisfies conditions 2) and 3) by construction. Regarding an upper bound for $\Ll(\theta,\cdot)$, we can use the function itself, which we indeed minimize using updates \eqref{eq:updatealpha}.
\end{proof}

\section{Supplementary material for Section \ref{sec:local}}\label{sec:suplocal}
\subsection{Supplementary discussion}

The existence of local-minima structures for $\ell$ has been recently studied in detail in \cite{chen2020likelihood}, showing that local minima of the negative log-likelihood (with equal weights and fixed variance). They show that if all components are all well separated from each other (i.e. if $\Delta_{min}:=\min_{k_1,k_2\leq n} \norm{\theta_{k_1}^\ast-\theta_{k_2}^\ast}$ is large enough) then the components of any spurious local minima $\theta$ of $\ell$ partition into groups forming either a \textit{many-fit-one} configuration or a \textit{one-fits-many} configuration, where a $\theta_k$ is placed near to the \textit{average} of a group of true $\theta^\ast_k$. 

Although appealing, the result of \cite{chen2020likelihood} suffers from three drawbacks: first, the required lower bound on $\Delta_{min}$ is too stringent to provide any insight in practical setups ($\Delta_{min}\geq 18(\sqrt{2\pi}+1)K^5\sigma$). Moreover, it provides only necessary conditions for local optimality; it does not indicate which such configurations are ultimately realized as local minima. Third, other components of any such local minimum may satisfy a degenerate so-called ``near empty association'' condition distinct from the above \textit{many-fit-one} and \textit{one-fit-many}. Even if a complete characterization of local minima of $\ell$ is far from complete, in the main text we take the above characterization as a starting point to frame our Theorem \ref{theo:local}.

\subsection{On local minima of $\Ll$ and $\ell$}

From Proposition \ref{prop:L}(c), every stationary point of $\Ll$ is a stationary point for the log-likelihood on a GMM for some \textit{tilted} weights $\alpha(\theta)$. Then, we have the following corollary. 
\begin{corollary}\label{cor:local}
    If $\theta$ is a stationary point for $\Ll$, then it is a stationary point for $\ell$ on the model \eqref{eq:gmm} with weights $\alpha(\theta)$. If, under these weights, $\theta$ is a local minimum for $\ell$, then it also must be a local minimum for $\Ll$.
\end{corollary}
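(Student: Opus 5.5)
The plan is to read both claims off the representation $\Ll(\theta)=\sup_{\omega}\tilde{\Ll}(\theta,\omega)$ from \eqref{eq:tildeL}, together with the decomposition \eqref{eq:tildeLell}. That decomposition says that, for a \emph{fixed} $\omega$, the map $\theta'\mapsto\tilde{\Ll}(\theta',\omega)$ is the tilted negative log-likelihood $\ell(\cdot,\alpha(\omega))$ plus a constant that does not depend on $\theta'$.

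For the first claim I would invoke Proposition \ref{prop:L}(d) directly. It gives $\nabla_\theta\Ll(\theta)=\nabla_\theta\ell(\theta,\alpha(\theta))$, where the gradient on the right is taken in $\theta$ with the weights held at the tilted vector $\alpha(\theta)$. Therefore $\nabla_\theta\Ll(\theta)=0$ says exactly that $\theta$ is a stationary point of the GMM log-likelihood \eqref{eq:gmm} with weights $\alpha(\theta)$.

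For the second claim I would freeze the potential at $\bar\omega:=\omega(\theta)$, so that $\alpha(\bar\omega)=\alpha(\theta)$. For any $\theta'$, optimality of $\omega(\theta')$ in the semi-dual \eqref{eq:semidual} gives $\Ll(\theta')\ge\tilde{\Ll}(\theta',\bar\omega)$. By \eqref{eq:tildeLell} and the computation in Appendix \ref{sub:semidual}, this lower bound equals $H(\alpha\mid\alpha(\theta))+\ell(\theta',\alpha(\theta))$. At $\theta'=\theta$ the bound is attained, so $\Ll(\theta)=H(\alpha\mid\alpha(\theta))+\ell(\theta,\alpha(\theta))$ by \eqref{eq:semidual22}. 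Now suppose $\theta$ is a local minimum of $\ell(\cdot,\alpha(\theta))$ on a neighborhood $U$. Then for every $\theta'\in U$,
\begin{equation*}
\Ll(\theta')\ge H(\alpha\mid\alpha(\theta))+\ell(\theta',\alpha(\theta))\ge H(\alpha\mid\alpha(\theta))+\ell(\theta,\alpha(\theta))=\Ll(\theta),
\end{equation*}
so $\theta$ is a local minimum of $\Ll$.

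The only point needing care is bookkeeping. The relative entropy term and the normalization $\log\sum_k\alpha_ke^{\omega_k}$ both depend only on $\omega$, and since $\omega$ is frozen they are genuinely constant in $\theta'$. Also, $\tilde{\Ll}(\theta',\bar\omega)$ is well defined for all $\theta'$ near $\theta$; this follows from Lemma \ref{lemma:three}. Beyond that there is no real obstacle: the argument is the standard fact that a pointwise supremum of functions, each minorizing $\Ll$ and touching it at $\theta$, inherits local minima from the touching function.
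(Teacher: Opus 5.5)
Your first claim is handled exactly as in the paper, via Proposition \ref{prop:L}(d). For the second claim your argument is correct but takes a different route, and it is the more robust one.

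\textbf{The paper's route.} The paper argues at second order. It uses the Hessian identity \eqref{eq:sec3} and notes that the Schur-complement term $-\nabla^2_{\theta\omega}\tilde{\Ll}\,(\nabla^2_{\omega\omega}\tilde{\Ll})^{-1}\nabla^2_{\omega\theta}\tilde{\Ll}$ is positive semidefinite, because $\nabla^2_{\omega\omega}\tilde{\Ll}$ is negative definite (Lemma \ref{lemma:hessian}). This gives $\nabla^2_{\theta\theta}\Ll(\theta)\succeq\nabla^2_{\theta\theta}\ell(\theta,\alpha(\theta))$. To conclude, it asserts that a local minimum of $\ell(\cdot,\alpha(\theta))$ has a positive definite Hessian. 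That is only guaranteed for nondegenerate minima. At a degenerate minimum the argument yields only a positive semidefinite Hessian for $\Ll$, which does not imply a local minimum.

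\textbf{Your route.} Your zeroth-order minorant argument writes $\Ll=\sup_\omega\tilde{\Ll}(\cdot,\omega)$, freezes $\bar\omega=\omega(\theta)$, and uses a minorant that touches $\Ll$ at $\theta$. This sidesteps the issue above and buys several things:
\begin{itemize}
\item It proves the statement as written for every local minimum.
\item It transfers strictness of the minimum.
\item It needs neither the implicit function theorem nor any second derivatives.
\item It recovers the paper's curvature comparison for free. If everything is $C^2$, the nonnegative function $\Ll(\cdot)-\tilde{\Ll}(\cdot,\bar\omega)$ vanishes at $\theta$, so its gradient there is zero (which also gives (d)) and its Hessian is positive semidefinite.
\end{itemize}
What the paper's computation adds is the explicit form of the excess curvature, in parallel with Proposition \ref{prop:L}(e).

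\textbf{A sign to fix.} The constant in your minorant is
\[
\sum_k\alpha_k\bar\omega_k-\log\sum_k\alpha_k e^{\bar\omega_k}=-H(\alpha\mid\alpha(\theta)),
\]
not $+H(\alpha\mid\alpha(\theta))$. You inherited a sign slip from \eqref{eq:semidual22}. Your argument only uses that this term does not depend on $\theta'$, so it is unaffected.
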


\begin{proof}
Let $\theta$ be a stationary point of $\Ll$. By Proposition \ref{prop:L}(d) it satisfies $\nabla_\theta \ell(\theta,\alpha(\theta))=0$. In the GMM of equation \eqref{eq:gmm}, this means that $\theta$ is a stationary point on a GMM with tilted weights $\alpha(\theta)$ instead of $\alpha$. This statement is equivalent to saying $$\nabla_\theta \tilde{\Ll}(\theta,\omega(\theta))=\nabla_\theta  \ell(\theta,\alpha(\theta)).$$ Likewise, by taking derivatives in the definition of $\tilde{\Ll}$ we can verify that
$$\nabla^2_{\theta\theta} \tilde{\Ll}(\theta,\omega(\theta))=\nabla^2_{\theta\theta} \ell(\theta,\alpha(\theta)),$$
and by \eqref{eq:sec3} this means that
$$\nabla^2_{\theta\theta} \Ll(\theta) =\nabla^2_{\theta\theta} \ell(\theta,\alpha(\theta))  - \nabla^2_{\theta\omega}  \tilde{\Ll}(\theta, \omega(\theta))\left(\nabla^2_{\omega\omega}  \tilde{\Ll}  (\theta, \omega(\theta))
\right)^{-1} \nabla^2_{\omega\theta}  \tilde{\Ll}(\theta, \omega(\theta)).$$
So, if $\theta$ is a local minimum for $\ell$ then $\nabla^2_{\theta\theta} \ell(\theta,\alpha(\theta))$ is positive definite. By the same arguments as in the proof of Proposition \eqref{prop:L}(d), the second term on the right-hand side above is positive semidefinite. Therefore, in that case, $\nabla^2_{\theta\theta} \Ll(\theta)$ is positive definite, i.e., $\theta$ is a local minimum for $\Ll$.  
\end{proof}
\subsection{Proof of Theorem \ref{theo:local}}

Let us write $\Psi_k(Y,\theta,\alpha(\theta))$  \eqref{eq:respsinkhorn} in this setup: $$\Psi_k(Y,\theta,\alpha(\theta))=\frac{\alpha_k(\theta)e^{-\frac{1}{2\sigma^2}||Y-\theta_k||^2}}{\sum_{k=1}^K\alpha_k(\theta)e^{-\frac{1}{2\sigma^2}||Y-\theta_k||^2}},$$
where $\alpha(\theta)$ is the vector of weights arising in the semi-dual optimal transport formulation. By definition of $\alpha(\theta)$, $\mathbb{E}(\Psi_k(Y,\theta,\alpha(\theta)))=\frac{1}{K}$. The first-order optimality conditions for $\theta$ read 
$$\mathbb{E}((Y-\theta_k)\Psi_k(Y,\theta,\alpha(\theta))=0,$$
or equivalently, \begin{equation}\label{eq:first} \mathbb{E}(Y\Psi_k(Y,\theta,\alpha(\theta)))=\frac{1}{K}\theta_k.\end{equation} We now show that this implies that the sum of the true weights must equal the sum of the model weights:
\begin{equation}
\label{eq:balance}
     \sum_{k=1}^K {\theta_k^\ast} = \sum_{k=1}^K {\theta_k}.
\end{equation}
Indeed, since $\sum_{k=1}^K \Psi_k(Y,\theta,\alpha(\theta))=1$, by adding over all components in \eqref{eq:first} we obtain
$$
    \frac{1}{K}\sum_{k=1}^K {\theta_k^\ast} =\mathbb{E}(Y)=\mathbb{E}\left(Y\left( \sum_{k=1}^K \Psi_k(Y,\theta,\alpha(\theta))\right)\right)= \frac{1}{K}\sum_{k=1}^K {\theta_k}.
$$
First-order optimality also implies a lower bound for averages of groups of $\theta_k$. Indeed, let $S\subseteq[K]$ be an arbitrary set of indices. Equation \eqref{eq:first} implies that
\begin{equation}\label{eqfirstgroup} 
    \frac{1}{K}\sum_{k\in S} {\theta_k} = \mathbb{E}\left(Y\left( \sum_{k\in S} \Psi_k(Y,\theta,\alpha(\theta))\right)\right).
\end{equation}

We will prove an equivalent version of Theorem \ref{theo:local}.As in the original statement, we define $G_1\subseteq K$ as the set of $k_1$ smallest coordinates and $G_2$ as the remaining $k_2=K-k_1$, and define the separation
$$\Delta = \inf_{g_1\in G_1,g_2\in G_2}|\theta^\ast_{g_1}-\theta^\ast_{g_2}|=\theta^\ast_{(k_1+1)}-\theta^\ast_{(k_1)}>0.$$

We will show that if $I$ is a set of indices with $I>k_2$, and such that $I(\theta)$ that covers $G_2(\theta^\ast)$ for some $\delta>0$. Then, for any $0<\gamma<1-k_2/|I|$, if 
 \begin{equation}\Delta \geq \frac{K\sigma}{\sqrt{2\pi}\left((1-\alpha)|I|-k_2\right)},\end{equation}

then $\theta$ cannot be a stationary point of $\Ll$ if $\delta<\gamma \Delta$.

The right-hand side of \eqref{eqfirstgroup} can be written as $\mathbb{E}(Yf(Y))$ for some $f(\cdot)\in[0,1]$. Although the above relations are multi-dimensional, hereafter, we only need to look at the first coordinate, which w.l.o.g. is our direction of interest, since the stationary points of $\Ll_R$ on a rotated GMM are the rotations of the stationary points of $\Ll$ for the original model.  Likewise, we can assume that the rightmost component of $G_1(\theta^\ast)$ equals 0 so that $G_1$ is the group of negative components, $\theta^\ast_k\leq 0$ for all $k\in G_1$, and that $\theta^\ast_k\geq 0$ for $k\in G_2$. 
If $0\leq\phi(\cdot)\leq \frac{1}{\sqrt{2\pi}}$ and $0\leq\Phi(\cdot)\leq 1$ are the standard Gaussian pdf. and cdf., respectively, and if $\mathbb{E}_k$ denotes expectation under the $k$-th component (i.e. $\mathcal{N}(\theta_k^\ast,\sigma^2)$), we have:
\begin{align} \label{eq:trunc}  \mathbb{E}(Yf(Y)) &\geq  \mathbb{E}(Y 1_{Y\leq 0}) \\ \nonumber &=\frac{1}{K}\sum_{k=1}^K \mathbb{E}_k(Y 1_{Y\leq 0}) \\ \nonumber &= \frac{1}{K}\left(\sum_{k=1}^K \theta^\ast_k\Phi\left(-\frac{\theta^\ast_k}{\sigma}\right)-\sigma\phi\left(\frac{\theta^\ast_k}{\sigma}\right)\right) \\ \nonumber &\geq 
\frac{1}{K}\left(\sum_{k=1}^K \theta^\ast_k\Phi\left(-\frac{\theta^\ast_k}{\sigma}\right)-\frac{\sigma}{\sqrt{2\pi}}\right) \\ \nonumber &\geq 
\frac{1}{K}\sum_{i\in G_1} \theta^\ast_k-\frac{\sigma}{\sqrt{2\pi}},\end{align}
where in the third line, we used simple properties of the truncated Gaussian distribution, and in the fourth, we have used that $0\leq \Phi(\cdot)\leq 1$ and that $\theta^\ast_k\geq 0$ in $G_2(\theta^\ast)$. By combining  \eqref{eqfirstgroup} and \eqref{eq:trunc}  we obtain
\begin{equation}\label{boundfirstorder}\frac{1}{K}\sum_{k\in S} {\theta_k} \geq \frac{1}{K}\sum_{k\in G_1} \theta^\ast_k-\frac{\sigma}{\sqrt{2\pi}}.\end{equation}

To conclude, we will use the above relations to obtain a lower bound on $\delta$. Take $S=I^c$. Recall that by assumption $I(\theta)$ is a $\delta$-covering of $G_2(\theta^\ast)$. We have by \eqref{eq:trunc}
\begin{equation*} \frac{1}{K}\sum_{k\in I} {\theta_k}+\frac{1}{K}\sum_{k\in I^c} {\theta_k} \geq \frac{1}{K}\sum_{k\in I} \theta_k +\frac{1}{K}\sum_{k\in G_1} \theta^\ast_k  -\frac{\sigma}{\sqrt{2\pi}}.\end{equation*}
By the above inequality, and \eqref{eq:balance}:
\begin{equation*} \frac{1}{K}\sum_{k\in G_1} {\theta^\ast_k}+\frac{1}{K}\sum_{k\in G_2} {\theta^\ast_k} \geq \frac{1}{K}\sum_{k\in I} \theta_k +\frac{1}{K}\sum_{k\in G_1} \theta^\ast_k  -\frac{\sigma}{\sqrt{2\pi}},\end{equation*}
so that 
\begin{equation*} \frac{1}{K}\sum_{k\in I} \theta_k  -\frac{\sigma}{\sqrt{2\pi}} \leq \frac{1}{K}\sum_{k\in G_2} {\theta^\ast_k}.\end{equation*}
This implies that for any sequence $\tilde{\theta}_k$, 
\begin{equation*} \frac{1}{K}\sum_{k\in I} (\tilde{\theta}^\ast_k-\theta_k)    \geq \frac{1}{K}\sum_{k\in I} \tilde{\theta}^\ast_k -\frac{1}{K}\sum_{k\in G_2} {\theta^\ast_k}-\frac{\sigma}{\sqrt{2\pi}} .\end{equation*}
Let's now use the fact that $I(\theta)$ covers $G_2(\theta^\ast)$: for each $\theta_k$ in $I(\theta)$ call $\tilde{\theta}^\ast_k$ the component in $G_2(\theta^\ast)$ that is $\delta$-close to $\theta_k$. 

\begin{equation*} \delta|I|\geq \sum_{k\in I} |\tilde{\theta}^\ast_k-\theta_k| \geq \sum_{k\in I} \tilde{\theta}^\ast_k-\theta_k   \geq \sum_{k\in I} \tilde{\theta}^\ast_k -\sum_{k\in G_2} {\theta^\ast_k} -\frac{K\sigma}{\sqrt{2\pi}}.\end{equation*}
Note that the defined $\tilde{\theta}^\ast_k$ exhaust the set $G_2(\theta^\ast)$, and must give rise to $(|I|-k_2)$ duplicate indexes that we denote $\tilde{G}_2$. With this observation, we can write (abusing notation) the difference of the sums in the right-hand side above in terms of these duplicates:
\begin{equation*} \delta\tilde{k}\geq \sum_{k\in I} (\tilde{\theta}^\ast_k-\theta_k)  \geq \sum_{k\in \tilde{G}_2} \tilde{\theta}^\ast_k -\frac{K\sigma}{\sqrt{2\pi}}.\end{equation*}
Now, since each of the duplicates is a member of $G_2(\theta^\ast)$, we have $\tilde{\theta}^\ast_k\geq \theta^\ast_l$ where $\theta^\ast_l=\theta^\ast_{(k_1+1)}$ is the smallest component in $G_2(\theta^\ast)$, which is at least at a distance $\Delta$ above $\theta^\ast_r=0=\theta^\ast_{(k_1)}$, the largest component in $G_1(\theta^\ast)$. Therefore, 
\begin{equation*} \delta|I| \geq (|I|-k_2)\Delta -\frac{K\sigma}{\sqrt{2\pi}}.\end{equation*}
Under the separation condition on $\Delta$, we have that
$$\frac{K\sigma}{\sqrt{2\pi}}<((1-\gamma)|I|-k_2)\Delta.$$
The previous two displays then imply that
$$\delta |I|\geq (|I|-k_2)\Delta -((1-\gamma)|I| -k_2)\Delta \geq |I|\gamma \Delta,$$
and so the proof is concluded. To revert the statement as it appears in the main text, it suffices to flip the signs of $\theta^\ast$ and argue by symmetry.

\subsection{Proof of Proposition \ref{prop:bad}}
The proof is essentially the same as the one for Theorem 1 in \cite{jin2016local}, but a slightly more careful argument is needed for the two-dimensional computations. The negative log-likelihood  $\ell$ here writes ($\theta_k^j$ represents the $j-$th coordinate of $\theta_k$):
$$\ell(\theta)=-\mathbb{E}\left(\log\left(\sum_{k=1}^3e^{-\frac{(x-\theta^1_k)^2}{2}}e^{-\frac{(y-\theta_k^2)^2}{2}}\right)\right)+\log\left(6\pi\right).$$
Define  $\tilde{\theta}_1=(0,0)$ and $\tilde{\theta}_2=\tilde{\theta}_3=(R,0)$. Clearly, $\theta=(\tilde{\theta}_1,\tilde{\theta}_2,\tilde{\theta}_3)$ is in the interior of $\mathcal{R}^\varepsilon$ for every $\varepsilon>0$. Let's compute the likelihood of this configuration when $R\rightarrow\infty$:
$$m_0:=\lim_{R\rightarrow \infty}\ell(\tilde{\theta})=1+\frac{D^2}{3}+\log\left(6\pi\right)-\frac{\log 2}{3}.$$

For a fixed vector of second coordinates $\bar{\theta}^2$ we consider the regions in $R^6$
\begin{align*}R_1(\bar{\theta}^2):=\{\theta: \theta^1_1=\frac{R}{3},\theta_2^1\geq\frac{2R}{3},\theta_3^1\geq \frac{2R}{3},\theta^2=\bar{\theta}^2\},\\
\mathcal{R}_2(\bar{\theta}^2):=\{\theta: \theta^1_1\leq\frac{R}{3},\theta_2^1=\frac{2R}{3},\theta_3^1\geq \frac{2R}{3},\theta^2=\bar{\theta}^2\},\\
\mathcal{R}_3(\bar{\theta}^2):=\{\theta: \theta^1_1\leq\frac{R}{3},\theta_2^1\geq\frac{2R}{3},\theta_3^1= \frac{2R}{3},\theta^2=\bar{\theta}^2\},\end{align*}

  It is easy to see that
\begin{align*}\lim_{\mathcal{R}\rightarrow \infty}\inf_{\theta\in R_1(\bar{\theta}^2)}\ell(\theta)&=\infty, \\ 
\lim_{R\rightarrow \infty}\inf_{\theta\in R_2(\bar{\theta}^2)}\ell(\theta)& =1+\frac{1}{6}\left((D-\bar{\theta}^2_1)^2+(D+\bar{\theta}^2_1)^2\right)+\frac{\bar{\theta}_3^2}{6}+\log\left(6\pi\right),\\
\lim_{R\rightarrow \infty}\inf_{\theta\in R_3(\bar{\theta}^2)}\ell(\theta)& =1+\frac{1}{6}\left((D-\bar{\theta}^2_1)^2+(D+\bar{\theta}^2_1)^2\right)+\frac{\bar{\theta}_2^2}{6}+\log\left(6\pi\right).
\end{align*}
The first line follows from Jensen's inequality and basic moment relationships for a Gaussian distribution. The second and third limits follow from the fact that asymptotically the infimum is attained at $\theta^1_1=0,\theta^1_2=2R/3,\theta^1_3=R,\theta^2=\bar{\theta}$ for $R_2(\bar{\theta}^2)$ and at $\theta^1_1=0,\theta^1_2=R,\theta^1_3=2R/3,\theta^2=\bar{\theta}$ for $R_3(\bar{\theta}^2)$. \\
\noindent Now, for $\varepsilon>0$ define $\mathcal{R}^\varepsilon_k:=\cup_{||\bar{\theta}||< \varepsilon}\mathcal{R}_k(\bar{\theta})$ for $k=1,2,3$ so that $R^\varepsilon_k$ are the three 5-dimensional faces of $\mathcal{R}^\varepsilon$.
Call $m^\varepsilon_k=\lim_{R\rightarrow \infty}\inf_{\theta\in \mathcal{R}^\varepsilon_k}\ell(\theta)$. Since the above limits are simultaneously minimized with respect to $\bar{\theta}$ if  $\bar{\theta}^2=0$ we have that $$m_1^\varepsilon=\infty, m_2^\varepsilon=m^\varepsilon_3=1+\frac{L}{3}+\log\left(6\pi\right).$$  Therefore, $m_0<\min\{m^\varepsilon_1,m^\varepsilon_2,m^\varepsilon_3\}$ so, as in \cite{jin2016local} we conclude the existence of a local minimum $\theta^\varepsilon$ in the interior of $\mathcal{R}^\varepsilon$, whenever $R=R(\varepsilon)$ is sufficiently large. By the same continuity argument as in \cite{jin2016local}, this minimum has a smaller likelihood than the global maximizer $\theta^\ast$.

\subsection{Proof of Corollary \ref{cor:statio}}
\label{app:statio}

Define new rotated axes as described in Fig. \ref{fig:bad}B, with the origin at $\theta_2^\ast$. The new coordinates are given by
$$\theta^{\ast,new}_{1,x}=-\frac{2D^2}{\sqrt{R^2+D^2}}, \theta^{\ast,new}_{2,x}=0, \theta^{\ast,new}_{3,x}=\sqrt{R^2+D^2}-\frac{2D^2}{\sqrt{R^2+D^2}}.$$
We split these components in the two leftmost $\theta^{\ast,new}_1,\theta^{\ast,new}_2$ and the rightmost $\theta^{\ast,new}_3$. The minimum separation between groups is given by $$\Delta(R)=\theta^{\ast,new}_{3,x}-\theta^{\ast,new}_{2,x}=\sqrt{R^2+D^2}-\frac{2D^2}{\sqrt{R^2+D^2}}=D\left(\sqrt{\frac{R^2}{D^2}+1}-\frac{2}{\sqrt{\frac{R^2}{D^2}+1}}\right).$$ Taking $\tilde{k}=2,k_2=1,\alpha=0.45,K=3$ and $\sigma^2=1$ we obtain, by Theorem \ref{theo:local}, that for \begin{equation}\label{eq:rule} \Delta(R)\geq \frac{3}{\sqrt{2\pi}((1-0.45)\times 2-1)}\geq \frac{30}{\sqrt{2\pi}}\geq 5\end{equation} we can rule out stationary points for $\Ll$ with $\delta\geq 0.45\Delta(R)$. Note also that since $$x-\frac{2}{x}\geq \frac{4}{5}x$$ whenever $x^2\geq 10$ we have that
$$\Delta(R)\geq \frac{4}{5}D\sqrt{\frac{R^2}{D^2}+1}$$ holds if $R^2\geq 9 D^2$. Therefore, $$\delta\geq \frac{9}{25}D\sqrt{\frac{R^2}{D^2}+1}$$ for a stationary point if $R^2\geq 9D^2$ and if \eqref{eq:rule} holds. To ensure \eqref{eq:rule} we can additionally impose that $D\geq \frac{5\sqrt{5}}{4\sqrt{2}}>2 $. 
We must now go back to the original coordinates: this $\delta$ breaks down into $\delta_x$ and $\delta_y$ in the original coordinates, with $$\delta_x=\frac{R}{\sqrt{R^2+D^2}}\delta\geq \frac{9}{25} \frac{R}{\sqrt{R^2+D^2}} D\sqrt{\frac{R^2}{D^2}+1} \geq \frac{9}{25}R> \frac{R}{3}$$ for a stationary point of $\Ll$. In contrast, Proposition \ref{prop:bad} anticipates a bad local optimum for the log-likelihood with $\delta_x \leq R/3$ if $R$ is sufficiently large.

\section{Proof of results in Section \ref{sec:twogaussians}} 
The proof of Theorem \ref{teo:mixtureloss} relies on an analysis of the functions $\Ll,\ell$ and their derivatives. In this Appendix we will write $\Ll_{\alpha^\ast}$ $\ell_{\alpha^\ast}$ to emphasize the dependence on the true mixture weights $\alpha^\ast$. Additionally, we write $q_{\theta,\alpha}(y)$ to denote the density of the distribution of a mixture of two unit-variance Gaussians with centers $(\theta,-\theta)$ and weights $(\alpha,1-\alpha)$, i.e.,
$$q_{\theta,\alpha}(y)= \alpha \mathcal N(y; \theta, 1) + (1- \alpha) \mathcal N(y; -\theta, 1).$$

Fig.~\ref{fig:1} depicts the main properties of the functions that will be used in the proofs.
The first row shows $\Ll_{\alpha^*}(\theta)\geq \ell_{\alpha^*}(\theta)$, which follows from Proposition \ref{prop:L}(b).
The second through fourth rows illustrate the behavior of the derivatives $\Ll'$ and $\ell'$.
We show in Proposition \ref{prop:deri} that $\Ll'_{\alpha^*}(\theta)\geq \ell_{\alpha^*}'(\theta)$ for all $\theta < 0$, which is clearly visible in the second and fourth row. In the third row, we plot the absolute values of the derivatives, with stationary points visible as cusps.
In the last row, we plot an important auxiliary function described in more detail below.

As mentioned in the main text, we assume $\alpha^*\geq 0.5$ by a simple symmetry argument. The fourth column in Fig.~\ref{fig:1} illustrates this symmetry. 

We make several additional definitions for the proof of Theorem~\ref{teo:mixture}.
Let $\omega=\omega(\theta)$ be the semi-dual weight defined in \eqref{eq:semidual}. The first-order optimality conditions for $\omega$ read
\begin{equation}\label{eq:optimalpha} \alpha^*=\int \frac{e^{\omega_1}\alpha^*e^{-(\theta-y)^2/2}}{e^{\omega_1}\alpha^*e^{-(\theta-y)^2/2}+e^{\omega_2}(1-\alpha^*)e^{-(\theta+y)^2/2}}q_{\theta^*,\alpha^\ast}(y)\dd y.\end{equation}
The above condition can be expressed in terms of the \textit{tilted} $\alpha(\theta)$ introduced in \eqref{eq:alphatheta}: $\alpha(\theta)$ the unique number in $[0,1]$ satisfying
\begin{equation}\label{eq:optimalpha2} \alpha^*=G(\theta,\alpha(\theta)),\end{equation} 
with $G(\theta,\alpha)$ defined as
\begin{align}\label{eq:optimalpha3} G(\theta,\alpha)&:=\int \frac{\alpha e^{-(\theta-y)^2/2}}{\alpha e^{-(\theta-y)^2/2}+(1-\alpha)e^{-(\theta+y)^2/2}}q_{\theta^*,\alpha^\ast}(y)\dd y \\
&= \int \frac{\alpha e^{\theta y}}{\alpha e^{\theta y }+(1-\alpha)e^{-\theta y}}q_{\theta^*,\alpha^\ast}(y)\dd y\\&=
\mathbb{E}\left(\Psi_1(Y,\theta,\alpha)\right). \end{align}

We plot the tilting $\alpha(\theta)$ in the last row of Fig.~\ref{fig:1}.

To analyze the behavior of Sinkhorn-EM and vanilla EM, we also introduce the auxiliary function $F(\theta, \alpha)$ defined by
\begin{equation}\label{eq:def_f}
F(\theta, \alpha) := \int y \frac{\alpha e^{\theta y} - (1-\alpha) e^{-\theta y}}{\alpha e^{\theta y} + (1-\alpha) e^{-\theta y}} q_{\theta^*,\alpha^\ast}(y) \dd y\,.
\end{equation}

With this notation, the updates of SEM satisfy
\begin{equation*}
\theta^{t+1}_{SEM} = F(\theta^t_{SEM}, \alpha(\theta^t_{SEM}))\,,
\end{equation*}
where $\alpha(\theta)$ is defined in~\eqref{eq:optimalpha2}. 
On the other hand, the updates of EM satisfy
\begin{equation*}
\theta^{t+1}_{EM} = F(\theta^t_{EM}, \alpha^*).
\end{equation*}

\subsection{Proof of Theorem \ref{teo:mixtureloss}}
Note first that we can assume $\alpha^* > 0.5$. The case $\alpha^*=0.5$ is trivial since the entropic OT loss coincides with the negative log-likelihood (last column of Fig.~\ref{fig:1}) and SEM and EM define the same algorithm. Indeed, in this case, by Lemma \ref{lemma:alpha12}, $\alpha(\theta)=1/2$ so that $F(\theta,\alpha(\theta))=F(\theta,1/2)$.

First, we show that $\Ll_{\alpha^\ast}$ never has spurious stationary points on $(0, \infty)$. Suppose there was such $\theta'>0$. Then, the SEM algorithm initialized at that value would remain there, by virtue of Theorem \ref{teo:em}. Since Theorem~\ref{teo:mixture} guarantees that SEM converges to $\theta^*$ for any positive initialization, this implies $\theta'=\theta^\ast$.


We now show that if $\Ll_{\alpha^*}$ has a spurious stationary point, then so does $\ell_{\alpha^*}$.
Suppose that $\Ll_{\alpha^*}$ has a stationary point $\theta \in (-\infty, 0]$.
In Proposition \ref{prop:deri} we show that if $\theta \leq 0$, then $\Ll'_{\alpha^*}(\theta)>\ell'_{\alpha^*}(\theta)$.
Therefore, if $\theta$ is stationary point of $\Ll_{\alpha^*}$, then $\ell'_{\alpha^*}(\theta) < 0$.
Since $\ell_{\alpha^*}$ is continuously differentiable and $\ell'_{\alpha^*}(0) = (2{\alpha^*}^2 - 1)^2 > 0$, there must be a $\theta' \in (\theta, 0)$ such that $\ell'_{\alpha^*}(\theta')  = 0$.
Therefore, $\ell_{\alpha^*}$ also has a spurious stationary point.

Finally, to show that the set of $\alpha^*$ for which $\ell_{\alpha^*}$ has a spurious stationary point is strictly larger than the corresponding set for $\Ll_{\alpha^*}$, we note that the arguments in the proof of Theorem 1 and Lemma 4 in~\citep{Xu2018} establish that there is $\delta > 0$ such that if $\alpha^* = 0.5 + \delta$ then $\ell_{\alpha^*}$ has a single spurious stationary point on $(-\infty, 0)$, and if $\alpha^* > 0.5 + \delta$, then $\ell_{\alpha^*}$ does not have any spurious stationary points. Sine $\ell'_{\alpha^*}(\theta)$ is a continuous function of $\alpha^*$, this implies that $\ell'_{0.5 + \delta}$ is nonnegative for all $\theta < 0$.
Since $\Ll'_{0.5 + \delta}(\theta)>\ell'_{0.5 + \delta}(\theta)$ for all $\theta < 0$, we obtain that $\Ll'_{0.5 + \delta}$ has no spurious stationary points.

\begin{figure}[H]
\hspace{-1.5cm}
\includegraphics[width=1.2\textwidth]{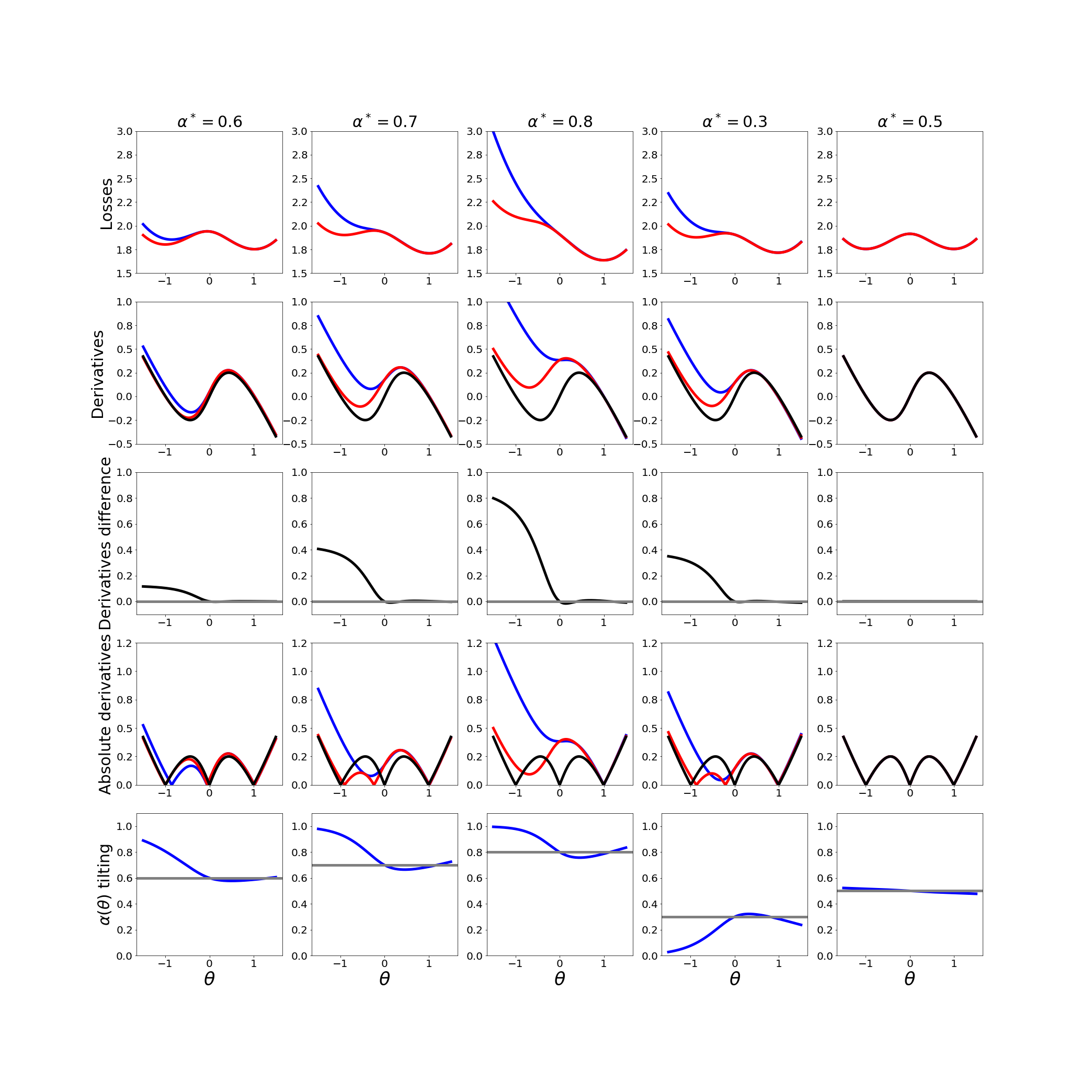}.
\caption{Behavior of $\Ll$, $\ell$ and their derivatives for different values of $\alpha^*$.  Black lines correspond to the reference $\alpha=0.5$ (also in the last column). \textbf{First row} entropic OT ($\Ll$, blue) and negative log likelihood $\ell$ (red). \textbf{Second row} derivatives of $\Ll$ and $\ell$. \textbf{Third row} difference between the derivatives $\Ll$ and $\ell$. \textbf{Fourth row} absolute value of the derivatives. \textbf{Fifth row} optimal $\alpha(\theta)$ from the semi-dual entropic OT formulation. }
\label{fig:1}
\end{figure}

\subsection{Proof of Theorem \ref{teo:mixture}}
\subsubsection{Proof of equation \eqref{emgeom}}
Let us fix $\alpha^* > 0.5$ (otherwise the proof is trivial since Sinkhorn-EM and EM iterations coincide). We first recall the results of \cite[Theorem 1]{DasTzaZam17}, where the bound \eqref{emgeom} is stated for the EM algorithm in the symmetric mixture ($\alpha^*=0.5$). Let us denote $\theta^{t}_{EM_0}$ for the iterates of EM on the symmetric mixture, initialized at $\theta^0 > 0$.
We write $\theta^t_{SEM}$ for the iterates of SEM on the \emph{asymmetric} mixture.
We will show that, for all $t \geq 0$, $\theta^t_{SEM}$ and $\theta^{t}_{EM_0}$ satisfy
\begin{align}
\theta^* \leq \theta^t_{SEM} & \leq \theta^t_{EM_0} \quad \text{if $\theta^0 \geq \theta^*$,} \label{monotone>theta}\\
\theta^* \geq \theta^t_{SEM} & \geq \theta^t_{EM_0} \quad \text{if $0 < \theta^0 \leq \theta^*$.} \label{monotone<theta}
\end{align}
This will then prove the claim since it implies
\begin{equation*}
|\theta_{SEM}^{t} - \theta^*| \leq |\theta_{EM_0}^{t} - \theta^*| \leq \rho^{t} |\theta_{0} - \theta^*|.
\end{equation*}

It remains to prove~\eqref{monotone>theta} and~\eqref{monotone<theta}.
Recall the function $F$ defined in~\eqref{eq:def_f}.
We first show that 
\begin{equation}\label{eq:fcases}
F(\theta,\alpha(\theta)) \begin{cases} \leq \theta^*, &
  0< \theta< \theta^*  \\ = \theta^*,  &
  \theta =\theta^* \\ \geq \theta^*, &
\theta>\theta^*  \end{cases}.\end{equation}
This implies the first inequalities of ~\eqref{monotone>theta} and~\eqref{monotone<theta}.
To show \eqref{eq:fcases}, notice first that clearly $F(\theta^*,\alpha(\theta^*))=F(\theta^*,\alpha^*)=\theta^*$. 
Therefore, it is enough to establish that $\theta \mapsto F(\theta,\alpha(\theta))$ is non-decreasing.
Let us define $f(\theta)=F(\theta,\alpha(\theta))$. We then have
 \begin{equation} \label{eq:partialf}f'(\theta)=\frac{\partial F}{\partial \theta}(\theta,\alpha(\theta))+\frac{\partial F}{\partial \alpha}(\theta,\alpha(\theta))\alpha'(\theta),\end{equation} 
and
 \begin{eqnarray}\label{eq:partialF1}
 \frac{\partial F}{\partial \theta}(\theta,\alpha)&=&4\alpha(1-\alpha)\int y^2\frac{q_{\theta^*,\alpha^\ast}(y)}{\left(\alpha e^{\theta y }+(1-\alpha)e^{-\theta y}\right)^2}\dd y \geq 0,\\ \label{eq:partialF2}
  \frac{\partial F}{\partial \alpha}(\theta,\alpha)&=&2\int y\frac{q_{\theta^\ast,\alpha^\ast}(y)}{\left(\alpha e^{\theta y }+(1-\alpha)e^{-\theta y}\right)^2}\dd y.
 \end{eqnarray}
 Additionally, by taking derivatives with respect to $\theta$ in  \eqref{eq:optimalpha2} we have
 \begin{equation}\label{eq:alphaprime}
\alpha'(\theta)=-\frac{\partial G}{\partial \alpha}(\theta,\alpha(\theta))^{-1}\frac{\partial G}{\partial \theta}(\theta,\alpha(\theta)),
\end{equation}
and likewise, 
 \begin{eqnarray}\label{eq:partialG1}
 \frac{\partial G}{\partial \theta}(\theta,\alpha)&=&2\alpha(1-\alpha)\int y\frac{q_{\theta^*,\alpha^\ast}(y)}{\left(\alpha e^{\theta y }+(1-\alpha)e^{-\theta y}\right)^2}\dd y ,\\ \label{eq:partialG2}
  \frac{\partial G}{\partial \alpha}(\theta,\alpha)&=&\int \frac{q_{\theta^*,\alpha^\ast}(y)}{\left(\alpha e^{\theta y }+(1-\alpha)e^{-\theta y}\right)^2}\dd y> 0.
 \end{eqnarray}
The conclusion follows by replacing \eqref{eq:partialF1},\eqref{eq:partialF2},\eqref{eq:alphaprime},\eqref{eq:partialG1} and \eqref{eq:partialG2} in \eqref{eq:partialf} and invoking the Cauchy-Schwarz inequality.

We now show the second inequalities in~\eqref{monotone>theta} and~\eqref{monotone<theta}.
To this end, we will first show
\begin{equation} \label{eq:Falphacases} F(\theta,\alpha(\theta))\begin{cases}\geq F(\theta,0.5) & 0\leq \theta\leq \theta^*, \\ \leq F(\theta,0.5) & \theta\geq \theta^*.\end{cases}\end{equation}

Let $\phi$ denote the density of a standard Gaussian random variable.
We can write
\begin{align*}\frac{F(\theta,\alpha)-F(\theta,0.5)}{2\alpha-1} &= \int y \cdot \frac{\left(\alpha^*e^{\theta^* y}+(1-\alpha^*)e^{-\theta^* y}\right)}{\left(e^{\theta y}+e^{-\theta y}\right)\left(\alpha e^{\theta y}+(1-\alpha)e^{-\theta y}\right)}\phi(y)e^{-{\theta^*}^2/2}\dd y.\\
&=: \int y  \cdot \rho_{\theta, \alpha}(y) \dd y\,.
\end{align*}
It is straightforward to verify that for $\alpha, \alpha^* \geq 1/2$, if $\leq \alpha \leq \alpha^*$ and $\theta \leq \theta^*$, then
\begin{equation*}
\rho_{\theta, \alpha}(y) \geq \rho_{\theta, \alpha}(-y) \quad \forall y \geq 0.
\end{equation*}
On the other hand, if $\alpha \geq \alpha^*$ and $\theta \geq \theta^*$, then
\begin{equation*}
\rho_{\theta, \alpha}(y) \leq \rho_{\theta, \alpha}(-y) \quad \forall y \geq 0.
\end{equation*}
In particular, this yields that for $\alpha, \alpha^* \geq 1/2$,
\begin{equation*}
\frac{F(\theta,\alpha)-F(\theta,0.5)}{2\alpha-1} \begin{cases} \geq 0 & \text{if $\alpha \leq \alpha^*$ and $0 \leq \theta \leq \theta^*$} \\
\leq 0 & \text{if $\alpha \geq \alpha^*$ and $\theta \geq \theta^*$.}
\end{cases}
\end{equation*}
To complete the proof of~\eqref{eq:Falphacases}, we used the facts, proved in Lemma \ref{lemma:atheta} that $\alpha(\theta) \geq 1/2$ and that $\alpha(\theta) \leq \alpha^*$ if $0 \leq \theta \leq \theta^*$ and $\alpha(\theta) \geq \alpha^*$ if $\theta \geq \theta^*$.

Now, we use the fact that the iterates $\theta_{EM_0}^t$ satisfy (see for example \cite{DasTzaZam17})
\begin{equation}\label{eq:em0_iterates}
\theta_{EM_0}^{t+1} = F(\theta_{EM_0}^{t}, 0.5)\,.
\end{equation}


With this, we can now show the second two inequalities in~\eqref{monotone>theta} and~\eqref{monotone<theta}.
We proceed by induction.
Let's first suppose $\theta^0 \geq \theta^*$.
Then indeed for $t = 0$, we have $\theta^* \leq \theta^t_{SEM} \leq \theta^t_{EM_0}$.
If this relation holds for some $t$, then we have
\begin{align*}
\theta^{t+1}_{SEM} & = F(\theta^{t}_{SEM}, \alpha(\theta^{t}_{SEM})) \\
& \leq F(\theta^{t}_{SEM}, 0.5) \\
& \leq F(\theta^{t}_{EM_0}, 0.5) \\
& = \theta^{t+1}_{EM_0}\,,
\end{align*}
where the first inequality uses~\eqref{eq:Falphacases}, the second uses the fact that $F$ is an increasing function in its first coordinate~\eqref{eq:partialF1}, and the final equality is~\eqref{eq:em0_iterates}.
The proof of the second inequality in~\eqref{monotone<theta} is completely analogous.
\subsubsection{Proof of equation \eqref{eq:fast}}
Again, we can assume that $\alpha^\ast>0.5$. Suppose first that $\theta^0> \theta^*$.
  In this case, it suffices to show that \begin{equation}
      \label{eq:falpha}
 F(\theta, \alpha) \leq F(\theta, \alpha^*), \text{ for all } \alpha \geq \alpha^* \text{ and } \theta \geq \theta^*
  \end{equation}
  Indeed, in that case we can appeal to a similar inductive argument as in the proof of Theorem~\ref{teo:mixture}, equation~\eqref{emgeom}, to compare the iterates of SEM  $\theta_{SEM}^{t+1} = F(\theta_{SEM}^{t}, \alpha(\theta_{SEM}^{t}))$ to those of EM, $\theta_{EM}^{t+1} = F(\theta_{EM}^{t}, \alpha^*)$.
  Specifically, if \eqref{eq:fast} holds for some $t$, then, since $\theta^t_{SEM}\geq \theta^*$ we have, by Lemma \ref{lemma:atheta}, that $\alpha(\theta^t_{SEM})\geq \alpha^*$. Then,
  \begin{eqnarray*}
\theta^{t+1}_{SEM}&=&F(\theta^t_{SEM},\alpha(\theta^t_{SEM}))\\
&\leq & F(\theta^t_{SEM}, \alpha^*)\\
&\leq & F(\theta^t_{EM}, \alpha^\ast)\\
&= & \theta^{t+1}_{EM}.
  \end{eqnarray*}
  Above, in the first line, we used \eqref{eq:falpha}, and in the second, that $F$ is increasing on its first argument. 
  Suppose now that $\theta^0\in [\theta_{\text{fast}},\theta^*]$, for some $0<\theta_{\text{fast}}<\theta^*$, and  assume that 
  \begin{equation}
      \label{eq:falpha2}
 F(\theta, \alpha) \leq F(\theta, \alpha^*), \text{ for all } \frac{1}{2} < \bar{\alpha} \leq \alpha \leq \alpha^* \text{ and } \theta_{\text{fast}}\leq \theta \leq \theta^*.
  \end{equation}
Where $\bar{\alpha}=\inf_{\theta\in\mathbb{R}} \alpha(\theta)$.  The fact that $\bar{\alpha}>0.5$ follows from Lemma \ref{lemma:atheta}: since $\alpha(\theta)$ is increasing if $\theta>\theta^*$ and decreasing if $\theta<0$ and $\alpha(0)=\alpha(\theta^*)= \alpha^*$, the minimum of $\alpha(\theta)$ must occur on $[0,\theta^*]$. But since $\alpha(\theta)$ is continuous \textcolor{red}{proove}, such a minimum must be achieved, contradicting that $\alpha(\theta)>1/2$.
Assuming \eqref{eq:falpha2}, the same argument yields $\theta_{\text{fast}}\leq \theta^t_{EM}\leq \theta^t_{SEM} \leq \theta^*$.  
  Let's then show \eqref{eq:falpha} and \eqref{eq:falpha2}. We have that
  \begin{align*}
 \nonumber \frac{F(\theta,\alpha)-F(\theta,\alpha^*)}{2(\alpha-\alpha^*)}=\frac{1}{2(\alpha-\alpha^*)}&\int y \left[\frac{\alpha e^{\theta y} - (1-\alpha) e^{-\theta y}}{\alpha e^{\theta y} + (1-\alpha) e^{-\theta y}} -\frac{\alpha e^{\theta y} - (1-\alpha) e^{-\theta y}}{\alpha e^{\theta y} + (1-\alpha) e^{-\theta y}} \right] q_{\theta^*,\alpha^\ast}(y)\dd y \\
 \frac{1}{2(\alpha-\alpha^*)}&\int y \frac{2(\alpha(1-\alpha^*)-(1-\alpha)\alpha^*)e^{\theta y}e^{-\theta y} q_{\theta^*,\alpha^\ast}(y)}{\left(\alpha e^{\theta y} + (1-\alpha) e^{-\theta y}\right)\left(\alpha^* e^{\theta y} + (1-\alpha^*) e^{-\theta y}\right)}\dd y\\
 =&\int y\frac{q_{\theta^*,\alpha^\ast}(y)}{\left(\alpha e^{\theta y} + (1-\alpha) e^{-\theta y}\right)\left(\alpha^* e^{\theta y} + (1-\alpha^*) e^{-\theta y}\right)}\dd y \\  \label{eq:fdif}=& \int_{y\geq 0} f_\theta(y)\dd y,
  \end{align*}
where \begin{align*}f_\theta(y):=y\frac{g_\theta(y)\phi(y) e^{-{\theta^*}^2/2 }}{\left(\alpha e^{\theta y} + (1-\alpha) e^{-\theta y}\right)\left(\alpha^* e^{\theta y} + (1-\alpha^*) e^{-\theta y}\right)\left(\alpha e^{-\theta y} + (1-\alpha) e^{\theta y}\right)\left(\alpha^* e^{-\theta y} + (1-\alpha^*) e^{\theta y}\right)},\end{align*}
and \begin{align*}
g_\theta(y):=&\left(\alpha e^{-\theta y} + (1-\alpha) e^{\theta y}\right)\left(\alpha^* e^{-\theta y} + (1-\alpha^*) e^{\theta y}\right)q_{\theta^*}(y) \\ &-\left(\alpha e^{\theta y} + (1-\alpha) e^{-\theta y}\right)\left(\alpha^* e^{\theta y} + (1-\alpha^*) e^{-\theta y}\right)q_{\theta^*}(-y)\\ 
= & L \left(e^{y\left(2\theta-\theta^*\right)}-e^{-y\left(2\theta-\theta^*\right)}\right)+M\left(e^{y\theta^*}-e^{-y\theta^*}\right)+N \left(e^{y\left(2\theta+\theta^*\right)}-e^{-y\left(2\theta+\theta^*\right)}\right)
\end{align*}
with
\begin{eqnarray*}
L &=&(1-\alpha^*)^2(1-\alpha)- \alpha {\alpha^*}^2,\\
M &=& (2\alpha^*-1)(\alpha+\alpha^*-2\alpha\alpha^*), \\
N  &= & \alpha^*(1-\alpha^*)(1-2\alpha).
\end{eqnarray*}
Notice that for $y\geq 0$ and $\theta>\theta^*$ the three above differences of exponentials are positive, and that $e^{y\left(2\theta-\theta^*\right)}-e^{-y\left(2\theta-\theta^*\right)}\geq e^{y\theta^*}-e^{-y\theta^*}$. Moreover, if $1/2<\alpha^*,\alpha < 1$, then $N<0$ and $M>0$. Additionally, 
$L+M$ is negative if $\alpha,\alpha^*\in(1/2,1]$ since
\begin{eqnarray*}M+L&=& (1-\alpha^*)^2(1-\alpha)- \alpha {\alpha^*}^2+ (2\alpha^*-1)(\alpha+\alpha^*-2\alpha\alpha^*)\\
&=&1-2\alpha^*+{\alpha^*}^2-\alpha+2\alpha^*\alpha-\alpha{\alpha^*}^2-\alpha{\alpha^*}^2+2\alpha^\ast\alpha+2{\alpha^*}^2-4\alpha{\alpha^*}^2-\alpha-\alpha^*+2\alpha\alpha^*\\
&=&6\alpha\alpha^*-6\alpha{\alpha^*}^2+3{\alpha^*}^2-3\alpha^*-2\alpha+1\\
&=&
(1-2\alpha)(1+3\alpha^*(\alpha^*-1))<0,
    \end{eqnarray*}
    and $1+3\alpha^*(\alpha^*-1)>1/4$ if $\alpha^*\in(1/2,1]$.
Therefore,  
\begin{eqnarray}\nonumber g_\theta(y)< -|L+M|\left(e^{y\theta^*}-e^{-y\theta^*}\right)\leq 0. \end{eqnarray}
This proves that when $\alpha \geq \alpha^* > 1/2$, we have $F(\theta, \alpha) \leq F(\theta, \alpha^*)$, as claimed, so \eqref{eq:falpha} is proven.

Let's now show \eqref{eq:falpha2}.
It suffices to show that for $\theta  \in [\theta_{\text{fast}}, \theta^*]$, $1/2<\bar{\alpha}\leq \alpha\leq \alpha^*$ and $y>0$ we have
$g_\theta(y) \leq 0$.
First, note that, since $N < 0$, for any $\theta > \theta^*/2$, the term $N \left(e^{y\left(2\theta+\theta^*\right)}-e^{-y\left(2\theta+\theta^*\right)}\right)$ is always eventually dominant, so there exists a $y^*>0$ such that
\begin{equation*}
g_\theta(y) <0 \quad \forall \theta > \theta^*/2, y > y^*\,.
\end{equation*}
It therefore suffices to focus on the compact interval $[0, y^*].$ To proceed, let us consider what happens when $\theta = \theta^*$.
As before, we have that if $\alpha>1/2$
\begin{equation*}
g_{\theta^*}(y) < -|L+M| (e^{y \theta^*} - e^{- y \theta^*}) \quad \forall y \geq 0,
\end{equation*}

Let us now examine the derivative $\frac{\partial }{\partial \theta} g_\theta(y)$:
\begin{equation*}
\frac{\partial }{\partial \theta} g_\theta(y) = 2 y L(e^{y\left(2\theta-\theta^*\right)}+e^{-y\left(2\theta-\theta^*\right)}) + 2y N(e^{y\left(2\theta+\theta^*\right)}+e^{-y\left(2\theta+\theta^*\right)})\,.
\end{equation*}
Which is a negative quantity since $L,N<0$. We can bound its absolute value as follows:
\begin{eqnarray*}
\left|\frac{\partial }{\partial \theta} g_\theta(y)\right| \leq 2|y^*|\left(2|L|e^{y\theta^*}+|N|\left(e^{3y\theta^*}+e^{-y\theta^*}\right)\right)\leq
8|y^*||L+N|e^{3y\theta^*}.
\end{eqnarray*}
Note that the above bound is uniform in $\theta<\theta^*$ and $y\in[0,y^\ast]$. Then, for any $\theta<\theta^*$ we can write
\begin{eqnarray*}
g_{\theta}(y) &=& g_{\theta^*}(y) - \int_{\theta}^{\theta^*} \frac{\partial }{\partial \theta} g_\theta(y) \dd \theta \leq g_{\theta^*}(y) +  \sup_{\theta\in [\theta,\theta^\ast], y \in[0,y^*]}\left|\frac{\partial }{\partial \theta} g_\theta(y)\right| \cdot (\theta^* - \theta) \\
 &\leq & g_{\theta^*}(y) + 8|y^*||L+N|e^{3y\theta^*}(\theta^* - \theta).
\end{eqnarray*}
We want to show that we can choose $\theta'=\theta_{\text{fast}}$ close enough to $\theta^*$ such that the above sum is negative. To do so, first note that there is some $K_\varepsilon>0$ such that we can bound
$e^{3y\theta^*}\leq K_\varepsilon(e^{y\theta^*}-e^{-y\theta^*})$ on $y\in[\varepsilon,y^*]$. Indeed, such $K_\varepsilon$ is given by
$$K_\varepsilon:=\sup_{y\in[\varepsilon,y^*]} \frac{e^{2\theta^*y}}{1-e^{2\theta^*y}}.$$

Then, we can pick $\theta^0_{\text{fast}}$ such that $$\theta^*-\theta_{\text{fast}}<\frac{|L+M|}{8 |y^*| |L+N| K_\varepsilon},$$
and consequently, for $\theta\in [\theta^0_{\text{fast}},\theta]$ and $y\in[\varepsilon,y^\ast]$ 
$$g_\theta(y)\leq  g_{\theta^*}(y) + |L+M|\left(e^{y\theta^*}-e^{-y\theta^*})\right)\leq g_{\theta^*}(y)-g_{\theta^*}(y)\leq 0.$$
Therefore, it only remains to study the behavior of $g_\theta(y)$ when $y\leq \varepsilon$. To do so, let's rewrite $g_\theta(y)$ using hyperbolic sine
\[
g_\theta(y)
=2\Big(
L\sinh(y(2\theta-\theta^*))
+M\sinh(y\theta^*)
+N\sinh(y(2\theta+\theta^*))
\Big).
\]

The hyperbolic sine admits the Taylor expansion
\[
\sinh t = t + O(t^3)
\quad\text{as } t\to 0.
\]

Substituting into the expression for \(g_\theta(y)\) gives
\[
\begin{aligned}
g_\theta(y)
&=2\Big(
L[y(2\theta-\theta^*)+O(y^3)]
+M[y\theta^*+O(y^3)]
+N[y(2\theta+\theta^*)+O(y^3)]
\Big)\\
&=2y\Big(
L(2\theta-\theta^*)+M\theta^*+N(2\theta+\theta^*)
\Big)+O(y^3).
\end{aligned}
\]

Now, for $\theta \in [\theta^0_{\text{fast}},\theta^*]$ we can write $\theta=\theta^*-\delta$ with $ 0<\delta<\theta^*-\theta^0_{\text{fast}}$. Then
\[
g_\theta(y)
=2y\Big(
\theta^*(L+M+3N)-2\delta(L+N)
\Big)+O(y^3).
\]

One checks directly that
\[
L+M+3N = 1-2\alpha,
\qquad
L+N = 1-\alpha-\alpha^*.
\]
Hence
\[
g_\theta(y)
=2y\Big(
\theta^*(1-2\alpha)+2\delta(\alpha+\alpha^*-1)
\Big)+O(y^3).
\]
Since \(\alpha^*\ge \alpha>\bar{\alpha}>1/2\) and \(\theta^*>0\), we have \(1-2\alpha<0\) and
\(\alpha+\alpha^*-1>0\). Now, choose
$$0<\delta< \delta_0:=\frac{\theta^*(2\bar{\alpha}-1)}{4(\alpha+\alpha^*-1)},$$
so that
\[
g_\theta(y)
\leq y(1-2\bar{\alpha})\theta^*+O(y^3).
\]
Therefore, we can take $\theta_{\text{fast}}=\max\{\theta^0_{\text{fast}}, \theta^*-\delta_0\}$. By the above expansion, we have guaranteed that if we take $\varepsilon$ small enough, $g_\theta(y)\leq 0$ for $\theta\in [\theta_{\text{fast}},\theta^*]$ and $y\leq \varepsilon$.



\subsection{Intermediate results for Theorem \ref{teo:mixtureloss} and \ref{teo:mixture}}
\begin{proposition}\label{prop:deri}
For the asymmetric mixture of two Gaussians \eqref{eq:mixtureg} we have that for all $\theta<0$
\begin{equation} \label{eq:ineqLl}\Ll'_{\alpha^*}(\theta)>\ell'_{\alpha^*}(\theta)\end{equation} 
\end{proposition}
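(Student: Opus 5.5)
The proof reduces $\Ll'_{\alpha^*}(\theta)-\ell'_{\alpha^*}(\theta)$ to a difference $F(\theta,\alpha(\theta))-F(\theta,\alpha^*)$. Its sign then follows from two facts: $\alpha(\theta)>\alpha^*$ for $\theta<0$, and a certain weighted first moment is positive. Throughout, $\alpha^*>1/2$ and $\theta^*>0$.

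\textbf{Sign convention.} The inequality can only hold under a particular convention, which I state explicitly. For $\ell_{\alpha^*}(\theta)=-\mathbb{E}\log q_{\theta,\alpha^*}(Y)$, differentiating $\log q_{\theta,\alpha}(y)=-\tfrac{y^2+\theta^2}{2}+\log(\alpha e^{\theta y}+(1-\alpha)e^{-\theta y})+\mathrm{const}$ gives
\[
\partial_\theta\bigl[-\mathbb{E}\log q_{\theta,\alpha}(Y)\bigr]=\theta-F(\theta,\alpha),
\]
with $F$ as in \eqref{eq:def_f}. The proof of Theorem \ref{teo:mixtureloss} uses $\ell'_{\alpha^*}(0)>0$, whereas $\theta-F(\theta,\alpha^*)$ at $\theta=0$ equals $-F(0,\alpha^*)=-(2\alpha^*-1)^2\theta^*<0$. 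Hence the derivatives in this section must be read in the ascent convention
\[
\ell'_{\alpha^*}(\theta)=F(\theta,\alpha^*)-\theta,\qquad \Ll'_{\alpha^*}(\theta)=F(\theta,\alpha(\theta))-\theta.
\]
The second formula follows from Proposition \ref{prop:L}(d). I adopt this convention. Under the literal definition $\ell=-\mathbb{E}\log q$, the argument below proves the reverse inequality. The claim therefore becomes
\[
F(\theta,\alpha(\theta))>F(\theta,\alpha^*)\quad\text{for }\theta<0.
\]

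\textbf{Step 1: reduce to a sign condition.} The computation already carried out in the proof of \eqref{eq:fast} gives
\[
F(\theta,\alpha)-F(\theta,\alpha^*)=2(\alpha-\alpha^*)\,J(\theta,\alpha),\qquad J(\theta,\alpha):=\int y\,\frac{q_{\theta^*,\alpha^*}(y)}{D_\alpha(y)D_{\alpha^*}(y)}\,\dd y,
\]
where $D_\alpha(y)=\alpha e^{\theta y}+(1-\alpha)e^{-\theta y}$. It therefore suffices to show $\alpha(\theta)>\alpha^*$ and $J(\theta,\alpha(\theta))>0$.

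\textbf{Step 2: a symmetry lemma.} Let $\theta\le 0$, let $\alpha,\beta\ge 1/2$, and let $w(y)=1/(D_\alpha(y)D_\beta(y))$. I will show
\[
\int y\,q_{\theta^*,\alpha^*}(y)\,w(y)\,\dd y>0.
\]
Pairing $y$ with $-y$ for $y>0$, the integral equals $\int_{y>0}y\,[q(y)w(y)-q(-y)w(-y)]\,\dd y$. Two facts control the bracket.
\begin{itemize}
\item[(i)] $D_\alpha(y)-D_\alpha(-y)=(2\alpha-1)(e^{\theta y}-e^{-\theta y})\le 0$ for $y>0$, because $\theta\le0$. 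Hence $w(y)\ge w(-y)>0$.
\item[(ii)] $q(y)-q(-y)\propto(2\alpha^*-1)\sinh(\theta^*y)\,e^{-y^2/2}>0$ for $y>0$.
\end{itemize}
So the bracket is strictly positive and the integral is strictly positive. The case $\theta=0$ is included: there $w\equiv 1$ and the integral is $\mathbb{E}Y=(2\alpha^*-1)\theta^*>0$.

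\textbf{Step 3: $\alpha(\theta)>\alpha^*$ for $\theta<0$.} Recall $\alpha(\theta)$ is defined by $G(\theta,\alpha(\theta))=\alpha^*$. At $\theta=0$ we have $\Psi_1\equiv\alpha$, so $G(0,\alpha^*)=\alpha^*$. By \eqref{eq:partialG1},
\[
\partial_\theta G(s,\alpha^*)=2\alpha^*(1-\alpha^*)\int y\,q/D_{\alpha^*}^2\,\dd y,
\]
which is strictly positive for every $s\le0$ by Step 2 with $\alpha=\beta=\alpha^*$. Integrating over $[\theta,0]$ gives $G(\theta,\alpha^*)<\alpha^*$. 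By \eqref{eq:partialG2}, $G(\theta,\cdot)$ is strictly increasing. Since $G(\theta,\alpha(\theta))=\alpha^*>G(\theta,\alpha^*)$, we conclude $\alpha(\theta)>\alpha^*>1/2$.

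\textbf{Step 4: conclude.} Because $\alpha(\theta)\ge1/2$, Step 2 applies with $\alpha=\alpha(\theta)$ and $\beta=\alpha^*$, so $J(\theta,\alpha(\theta))>0$. Combining with Step 3 and the identity in Step 1,
\[
F(\theta,\alpha(\theta))-F(\theta,\alpha^*)=2(\alpha(\theta)-\alpha^*)J(\theta,\alpha(\theta))>0,
\]
which is \eqref{eq:ineqLl} in the stated convention.

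\textbf{Main obstacle.} The mathematical content is Steps 2 and 3: the pairing $y\leftrightarrow -y$ together with the monotonicity of $G$ in $\alpha$. The symmetry argument requires $\theta\le0$, and this is exactly why the inequality is confined to $\theta<0$. The delicate point is the sign bookkeeping between $-\log q$ and the ascent map. As checked above, the stated inequality, and its use in Theorem \ref{teo:mixtureloss}, hold only with $\ell'=F-\theta$.
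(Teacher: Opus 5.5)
Your proof is correct and follows the paper's route. You reduce to $F(\theta,\alpha(\theta))>F(\theta,\alpha^*)$, use $\alpha(\theta)>\alpha^*$, and get positivity of a weighted first moment by pairing $y$ with $-y$; the paper pairs for $\partial F/\partial\alpha=2\int y\,q_{\theta^*,\alpha^*}(y)/D_\alpha(y)^2\,dy$ and you for the exact secant with weight $1/(D_\alpha D_{\alpha^*})$. The one real difference is your Step 3, which proves $\alpha(\theta)>\alpha^*$ directly (and strictly) by comparison at fixed $\alpha^*$, instead of citing Lemma \ref{lemma:atheta}, whose proof runs through the implicit-function formula for $\alpha'(\theta)$ and Lemma \ref{lemma:h}(b), itself borrowing the $\partial F/\partial\alpha>0$ computation from this very proof.

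Your sign remark is also right: \eqref{eq:partiall} should read $\partial_\theta\ell(\theta,\alpha)=\theta-F(\theta,\alpha)$. So with $\ell=-\mathbb{E}\log q_{\theta,\alpha^*}(Y)$ taken literally the displayed inequality reverses, and what both you and the paper actually establish is $F(\theta,\alpha(\theta))>F(\theta,\alpha^*)$ for $\theta<0$.
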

\begin{proof}
Let us write $\ell(\theta, \alpha) = - \mathbb{E} \log q_{\theta,\alpha}(Y)$, the expected negative log-likelihood function in the model \eqref{eq:mixtureg} (that is, we will deem the mixture weights as a parameter).
In our notation, $\ell(\theta,\alpha^\ast)=\ell(\theta)=\ell_{\alpha^\ast}(\theta)$.
We have
\begin{equation}\label{eq:partiall} \frac{\partial}{\partial \theta} \ell(\theta, \alpha) = \int y\left[\frac{\alpha e^{-(\theta-y)^2/2}-(1-\alpha) e^{-(\theta+y)^2/2}}{\alpha e^{-(\theta-y)^2/2}+(1-\alpha) e^{-(\theta+y)^2/2}}\right]q_{\theta^*,\alpha^*,}(y)\dd y-\theta = F(\theta, \alpha) - \theta\,,\end{equation}
where $F$ is defined in~\eqref{eq:def_f}.

We have $\ell'_{\alpha^*}(\theta) = \frac{\partial}{\partial \theta} \ell(\theta, \alpha^*) = F(\theta, \alpha^*) - \theta$.
Likewise, recalling that $\Ll_{\alpha^*}(\theta)=\tilde{\Ll}(\theta,\omega(\theta))$ where  $\omega(\theta)$ solves \eqref{eq:semidual}, then we have
\begin{equation}\label{eq:partialLmix} \Ll'_{\alpha^*}(\theta) =\frac{\partial \tilde{\Ll}}{\partial \theta}(\theta,\omega(\theta))=\frac{\partial }{\partial \theta}\ell(\theta,\alpha(\theta)) = F(\theta, \alpha(\theta)) - \theta.
\end{equation}

Then, to establish \eqref{eq:ineqLl} it suffices to show that for each $\theta<0$,
\begin{equation*}
F(\theta, \alpha^*) < F(\theta, \alpha(\theta))\,.
\end{equation*}

Moreover, since Lemma \ref{lemma:atheta} shows that $\alpha(\theta)> \alpha^*>0.5$, it suffices to show that $F(\theta, \alpha)$ is a strictly increasing function of $\alpha$ for $\alpha \geq 0.5$ and an arbitrary $\theta\leq 0$.

Recall~\eqref{eq:partialF2}, which shows
\begin{equation*}
\frac{\partial F}{\partial \alpha}(\theta,\alpha)=2\int y\frac{q_{\theta^*,\alpha^\ast}(y)}{\left(\alpha e^{\theta y }+(1-\alpha)e^{-\theta y}\right)^2}\dd y\,.
\end{equation*}

Since $\alpha^* > 0.5$, we have $q_{\theta^*,\alpha^\ast}(y) > q_{\theta^*,\alpha^\ast}(-y)$ for all $y > 0$.
Furthermore, for $\theta \leq 0$ and $\alpha \geq 0.5$, it holds
\begin{equation*}
\frac{1}{\left(\alpha e^{\theta y }+(1-\alpha)e^{-\theta y}\right)^2} \geq \frac{1}{\left(\alpha e^{-\theta y }+(1-\alpha)e^{\theta y}\right)^2} \quad \forall y > 0\,.
\end{equation*}
Therefore
\begin{align*}
\frac{\partial F}{\partial \alpha}(\theta,\alpha)& =2\int y\frac{q_{\theta^*,\alpha^\ast}(y)}{\left(\alpha e^{\theta y }+(1-\alpha)e^{-\theta y}\right)^2}\dd y \\
& = 2\int_{y > 0} y\frac{q_{\theta^*,\alpha^\ast}(y)}{\left(\alpha e^{\theta y }+(1-\alpha)e^{-\theta y}\right)^2}\dd y + 2\int_{y > 0} (-y)\frac{q_{\theta^*,\alpha^\ast}(-y)}{\left(\alpha e^{-\theta y }+(1-\alpha)e^{\theta y}\right)^2}\dd y > 0\,,
\end{align*}
which proves the claim.
\end{proof}

\begin{lemma}\label{lemma:atheta}
Suppose $\alpha^*>1/2$
\begin{itemize}
\item[(a)]for any $\theta$, $\alpha(\theta)> 1/2.$

\item[(b)] $\alpha(\cdot)$ is decreasing (increasing) whenever $\theta<0$ ($\theta>\theta^*$) and in either case $\alpha(\theta)\geq \alpha^*$. Moreover, $\lim_{\|\theta\|\rightarrow \infty}\alpha(\theta)=1$.
\item[(c)]$\alpha(\theta)\leq \alpha^*$ whenever $0\leq \theta\leq \theta^*$.

\end{itemize}\end{lemma}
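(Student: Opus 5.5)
The whole argument runs through the implicit characterization \eqref{eq:optimalpha2}: $\alpha(\theta)$ is the unique root in $(0,1)$ of $\alpha^*=G(\theta,\alpha)$. By \eqref{eq:partialG2}, $G$ is strictly increasing in $\alpha$. So each comparison $\alpha(\theta)\gtrless c$ reduces to the sign of $G(\theta,c)-\alpha^*$, since $\alpha(\theta)>c$ iff $G(\theta,c)<\alpha^*$. Monotonicity in $\theta$ comes from \eqref{eq:alphaprime}: the sign of $\alpha'(\theta)$ is minus the sign of $\partial_\theta G(\theta,\alpha(\theta))$, and by \eqref{eq:partialG1} the latter has the sign of $J(\theta,\alpha):=\int y\, q_{\theta^*,\alpha^*}(y)\big(\alpha e^{\theta y}+(1-\alpha)e^{-\theta y}\big)^{-2}\dd y$.

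\textbf{Part (a).} I will show $G(\theta,1/2)<\alpha^*$ for every $\theta$. Write $q_{\theta^*,\alpha^*}(y)=\tfrac12[\mathcal N(y;\theta^*,1)+\mathcal N(y;-\theta^*,1)]+(\alpha^*-\tfrac12)[\mathcal N(y;\theta^*,1)-\mathcal N(y;-\theta^*,1)]$. The integrand $\Psi_1(y,\theta,1/2)=(1+e^{-2\theta y})^{-1}$ satisfies $\Psi_1(-y)=1-\Psi_1(y)$. Hence the symmetric part of the density contributes exactly $1/2$. The antisymmetric part contributes $(\alpha^*-\tfrac12)\int \tanh(\theta y)\,\mathcal N(y;\theta^*,1)\dd y$, whose absolute value is strictly less than $\alpha^*-\tfrac12$ because $|\tanh|<1$. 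Therefore $G(\theta,1/2)<\alpha^*$, which gives $\alpha(\theta)>1/2$.

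\textbf{Parts (b) and (c).} First I record the anchor values. At $\theta=0$, $G(0,\alpha)=\alpha$, so $\alpha(0)=\alpha^*$. At $\theta=\theta^*$, $G(\theta^*,\alpha^*)=\alpha^*$ as in Proposition~\ref{prop:L}(c), so $\alpha(\theta^*)=\alpha^*$. Next I sign $J$ along the curve $(\theta,\alpha(\theta))$, using the same pairing $y\leftrightarrow -y$ as in the proof of Proposition~\ref{prop:deri}, so that $J=\int_{y>0}y\,[q(y)w(y)-q(-y)w(-y)]\dd y$ with $w(y)=(\alpha e^{\theta y}+(1-\alpha)e^{-\theta y})^{-2}$.

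For $\theta<0$ and $\alpha\ge 1/2$, we have $w(y)\ge w(-y)$ and $q(y)>q(-y)$ for $y>0$. Thus $J>0$, so $\alpha'(\theta)<0$ on $(-\infty,0)$, i.e.\ $\alpha$ is decreasing there. Together with $\alpha(0)=\alpha^*$, this gives $\alpha(\theta)\ge\alpha^*$ for $\theta<0$.

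For $\theta>\theta^*$, I want $J<0$. This is the same sign pattern as the $\rho_{\theta,\alpha}$ comparison in the proof of \eqref{eq:Falphacases}, where the weight $w$ now dominates the density asymmetry. A cleaner route is to use $g_\theta$-type expansions: write $q(y)w(y)-q(-y)w(-y)$ over a common denominator as a combination of odd $\sinh$ terms, as done for $g_\theta$. I then show the coefficient combination is negative when $\theta\ge\theta^*$ and $\alpha\ge\alpha^*$. A continuity/bootstrapping argument starting from $\alpha(\theta^*)=\alpha^*$ keeps $\alpha\ge\alpha^*$ on $(\theta^*,\infty)$ while the sign is established.

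For (c), the sign of $J$ on $(0,\theta^*)$ is mixed, so I argue directly by contradiction. If $\alpha(\theta)>\alpha^*$ at some $\theta\in(0,\theta^*)$, then $G(\theta,\alpha^*)<\alpha^*$. I will instead prove $G(\theta,\alpha^*)\ge\alpha^*$ for $0\le\theta\le\theta^*$. Heuristically, a tempered tilt $\theta y$ with $\theta$ below the true separation under-resolves the components and pulls responsibilities toward the prior $\alpha^*$ from above. Concretely, I differentiate $h(\theta)=G(\theta,\alpha^*)$ and show $h$ is concave (or unimodal) on $[0,\theta^*]$ with $h(0)=h(\theta^*)=\alpha^*$.

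\textbf{Limit and main obstacle.} For the limit $\alpha(\theta)\to1$ as $|\theta|\to\infty$, I argue by contradiction. If $\alpha(\theta_n)\le c<1$ along a sequence, then $\Psi_1(y,\theta_n,\alpha(\theta_n))\to\mathbf 1_{\{\theta_n y>0\}}$ pointwise. Dominated convergence then gives $G(\theta_n,\alpha(\theta_n))\to P(Y>0)$ or $P(Y<0)$. Both differ from $\alpha^*$: $P(Y<0)<1/2$, and $P(Y>0)=\alpha^*\Phi(\theta^*)+(1-\alpha^*)\Phi(-\theta^*)$ lies strictly between $1/2$ and $\alpha^*$. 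This contradicts \eqref{eq:optimalpha2}. The step I expect to be hardest is (c) together with the sign of $J$ for $\theta>\theta^*$, because both require genuinely quantitative control and not just symmetry. My first attempt will be the explicit $\sinh$-coefficient bookkeeping modeled on the $L,M,N$ computation above.
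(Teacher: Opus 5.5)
You have correctly proved (a), the $\theta<0$ half of (b), and the limit. Your limit argument, via $\Psi_1(y,\theta_n,\alpha(\theta_n))\to\mathbf{1}\{\theta_n y>0\}$ and dominated convergence, is cleaner than the paper's truncation argument. It does rely on $\alpha(\theta_n)>1/2$ from (a), so that $(1-\alpha)/\alpha$ stays bounded.

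Your sketch for $\theta>\theta^*$ also closes, and more directly than the paper's route through $F(\theta,\alpha)\le F(\theta,\alpha^*)$ and Lemma~\ref{lemma:h}. After pairing $y\leftrightarrow -y$, on $y>0$ the integrand of $J(\theta,\alpha)$ is a positive multiple of $L'\sinh((2\theta-\theta^*)y)+M'\sinh(\theta^* y)+N'\sinh((2\theta+\theta^*)y)$. Here $L'=(1-\alpha^*)(1-\alpha)^2-\alpha^*\alpha^2<0$, $M'=2\alpha(1-\alpha)(2\alpha^*-1)$ and $N'=\alpha^*(1-\alpha)^2-(1-\alpha^*)\alpha^2<0$. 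Moreover $L'+M'<0$ for $\alpha\ge\alpha^*$. Hence $J<0$ whenever $\theta\ge\theta^*$ and $\alpha\ge\alpha^*$.

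The genuine gap is (c). Your reduction to $h(\theta):=G(\theta,\alpha^*)\ge\alpha^*$ on $[0,\theta^*]$ is right. But the mechanism you propose, concavity or unimodality of $h$, is only announced, and concavity actually fails in general. For large $\theta^*$ and $\theta$ well inside $(0,\theta^*/2)$, one finds $h(\theta)-\alpha^*\approx(2\alpha^*-1)e^{-2\theta(\theta^*-\theta)}$, which is convex in $\theta$.

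Contrary to your expectation, no quantitative control is needed: the same symmetric pairing settles (c) exactly. Set $A(y)=\alpha^* e^{\theta y}+(1-\alpha^*)e^{-\theta y}$. Then $\Psi_1(y,\theta,\alpha^*)-\alpha^*=2\alpha^*(1-\alpha^*)\sinh(\theta y)/A(y)$. Also $q_{\theta^*,\alpha^*}(y)A(-y)-q_{\theta^*,\alpha^*}(-y)A(y)=2(2\alpha^*-1)\sinh((\theta^*-\theta)y)\,\phi(y)e^{-{\theta^*}^2/2}$. Therefore
\[
G(\theta,\alpha^*)-\alpha^*=4\alpha^*(1-\alpha^*)(2\alpha^*-1)\,e^{-{\theta^*}^2/2}\int_{y>0}\frac{\sinh(\theta y)\,\sinh((\theta^*-\theta)y)}{A(y)\,A(-y)}\,\phi(y)\,dy .
\]
The integrand is positive for $0<\theta<\theta^*$, which gives (c) immediately. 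It is negative for $\theta<0$ and for $\theta>\theta^*$, which also gives $\alpha(\theta)\ge\alpha^*$ in (b) without any bootstrapping.

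This is the identity the paper uses for (c). There it is phrased as: $\theta=0$ and $\theta=\theta^*$ are the only roots of $\alpha(\theta)=\alpha^*$, together with $\alpha'(0)<0$.
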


\begin{proof}
We begin by recalling the function $G$, defined in~\eqref{eq:optimalpha3}.
By~\eqref{eq:partialG2}, this function is a strictly increasing function of $\alpha$ and $\alpha(\theta)$ is defined as the unique number in $[0, 1]$ satisfying
\begin{equation*}
G(\theta, \alpha(\theta)) = \alpha^*,
\end{equation*}
and $\alpha(\theta) > p$ if and only if $G(\theta, p) < \alpha^*$.
Let us first prove (a).
It suffices to show that $G(\theta, 1/2) < \alpha^*$.
Write $\phi_{\theta^*}$ for the density of $\cN(\theta^*, 1)$.
We then have
\begin{equation}
\label{eq:Gtheta12} G(\theta, 1/2) = \int \frac{e^{\theta y}}{e^{\theta y} + e^{-\theta y}}q_{\theta^*,\alpha^\ast}(y) \dd y= \int \frac{\alpha^* e^{\theta y} + (1-\alpha^*) e^{-\theta y}}{e^{\theta y} + e^{-\theta y}} \phi_{\theta^*}(y) \dd y\,.
\end{equation}
But if $\alpha^* > 1/2$, then $\frac{\alpha^* e^{\theta y} + (1-\alpha^*) e^{-\theta y}}{e^{\theta y} + e^{-\theta y}} < \alpha^*$ for all $\theta$ and $y$.
Since $\phi_{\theta^*}(y)$ is a probability density, we obtain that $G(\theta, 1/2) < \alpha^*$, as desired.

It is straightforward to see that $\alpha(0) = \alpha(\theta^*) = \alpha^*$ (see the proof of Proposition \ref{prop:L}).
To show monotonicity,  we rely on the formula \eqref{eq:alphaprime} for $\alpha'(\theta)$. 
If $\theta<0$, the conclusion is a direct consequence of \eqref{eq:alphaprime} and Lemma \ref{lemma:h}(b). 
If $\theta > \theta^*$, the conclusion follows similarly from Lemma \ref{lemma:h}(c), but the argument is more delicate, as applying this lemma requires that $\alpha(\theta)\geq \alpha^*$. 
Suppose that there exists a $\theta > \theta^*$ for which $\alpha'(\theta) < 0$.
Let us denote by $\theta_0$ the infimum over all such $\theta$.
By \eqref{eq:alphaprime}, $\frac{\partial{G}}{\partial \theta}(\theta_0,\alpha(\theta_0))$ must be therefore nonnegative, which by Lemma~\ref{lemma:h}(c) implies that $\alpha(\theta_0) < \alpha^* = \alpha(\theta^*)$. But since $\alpha'(\theta) \geq 0$ for all $\theta \in [\theta^*, \theta_0)$, this is a contradiction.
Therefore $\alpha'(\theta) \geq 0$ for all $\theta \geq \theta^*$, as claimed.

Finally, to show the limit statement, note first that $\alpha(\theta)$ is increasing if $\theta>\theta^*$, and it is also monotonic; therefore, it must have a limit. We will show that the limit must be one. Indeed, suppose that $\lim_{\theta\to\infty}\alpha(\theta)<1$. Then, there is a subsequence $\theta_n\to\infty$ such that $\alpha(\theta_n)<1-\varepsilon$ for $\varepsilon>0$. From the condition $\alpha^*=G(\theta,\alpha(\theta))$, we conclude that it must be the case that for all $\theta\in\mathbb{R}$
\begin{align*}
\alpha^* &  = \mathbb{E}\left(g(\theta,Y)\right),\quad \text{ with } g(\theta,Y)= \frac{\alpha(\theta) e^{\theta Y}}{\alpha(\theta) e^{\theta Y} + (1 - \alpha(\theta))e^{-\theta Y}}
\end{align*}
where the expectation is taken with respect to the true mixture centered at $\theta$ and $-\theta$ and weights $\alpha^*$ and $1-\alpha^*$. Note that by definition, $g(\theta,Y)<1$. Let $m>0$ be an arbitrary constant. Then, we can bound, along this subsequence 
\begin{eqnarray*} \mathbb{E}\left(g(\theta_n,Y)\right)&=&\mathbb{E}\left(g(\theta_n,Y)1_{Y>-m}\right)+\mathbb{E}\left(g(\theta_n,Y)1_{Y\leq-m}\right)\\
&\leq & \mathbb{P}\left(Y>-m\right)+\mathbb{E}\left(e^{-2m\theta_n}\frac{\alpha(\theta_n)}{1-\alpha(\theta_n)}\right)\\
&\leq &
\mathbb{P}\left(Y>-m\right)+e^{-2m\theta_n}\frac{1-\varepsilon}{\varepsilon}
\end{eqnarray*}

Now, note that $\mathbb{P}(Y\leq 0)= \alpha p+(1-\alpha)(1-p)=1-\alpha^*+p(2\alpha^*-1)$ where $p=\Phi(\theta^*)<1$.  Since $\alpha^*>1/2$, by continuity and mononicity of $m\to\mathbb{P}(Y\leq -m)$, we can choose $m>0$ so that $\mathbb{P}(Y\leq -m)>1-\alpha^*+p(2\alpha^*-1)$. Thus, for such a choice, 

\begin{eqnarray*} \mathbb{E}\left(g(\theta_n,Y)\right)&\leq &
\alpha^*-p(2\alpha^*-1)+e^{-2m\theta_n}\frac{1-\varepsilon}{\varepsilon}.
\end{eqnarray*}
As $\theta_n\to \infty$, and since the term containing the exponential above vanishes, we can find $n$ large enough so that $e^{-2m\theta_n}(1-\varepsilon)/\varepsilon<p(2\alpha^*-1)/2$. For such an $n$, 
\begin{eqnarray*} \mathbb{E}\left(g(\theta_n,Y)\right)&\leq &
\alpha^*+p(2\alpha^*-1)-\frac{p}{2}(2\alpha^*-1)<\alpha^*-\frac{p}{2}(2\alpha^*-1)<\alpha^*,
\end{eqnarray*}
This contradicts that $\alpha^*   = \mathbb{E}\left(g(\theta_n,Y)\right)$. The case $\theta\to-\infty$ is analogous, but now we work with the right tail of $Y$.

Let's now prove (c). Notice it suffices to show that (i) $\alpha'(0)<0$ and (ii) the only solutions to the equation $\alpha(\theta)=\alpha^*$ are $\theta=0$ and $\theta=\theta^*$. 
The first claim is a simple consequence of \eqref{eq:alphaprime} and Lemma~\ref{lemma:h}(b).

The second claim is a bit more involved. Suppose $\alpha(\theta)=\alpha^*$. By simple algebra (as in the proof of theorem \ref{teo:mixture}), it can be shown that the following relation holds
$$\int_{y\geq 0} \frac{2\alpha^*(1-\alpha^*)(2\alpha^*-1)e^{-{\theta^*}^2}\left(e^{2\theta y}-1\right)\left(e^{2\theta^*y} -e^{2\theta y}\right)}{e^{(\theta^*+2\theta)y}\left(\alpha^* e^{\theta y}+(1-\alpha^*)e^{-\theta y}\right)\left((1-\alpha^*) e^{\theta y}+\alpha^* e^{-\theta y}\right)}\phi(y)\dd y=0.$$ 
The integral above can only be zero if $\theta=0$ or $\theta=\theta^*$; otherwise, the integrand is either positive or negative for each value of $y\geq0$. This concludes the proof.
\end{proof}

\begin{lemma}\label{lemma:h} Suppose $\alpha^*> 0.5$.
Let
$$G_\theta(\theta,\alpha):=\frac{1}{2\alpha(1-\alpha)}\frac{\partial{G}}{\partial \theta}(\theta,\alpha)=\int \frac{y}{\left(\alpha e^{\theta y}+(1-\alpha)e^{-\theta y}\right)^2}q_{\theta^*,\alpha^\ast}(y)\dd y.$$
Then,
\begin{itemize}
\item[(a)] For each $\theta\geq 0$, $G_\theta$ is a decreasing as function of $\alpha$. Conversely, for each $\theta\leq 0$, $G_\theta$ is an increasing function of $\alpha$.
\item[(b)]$G_\theta(\theta,\alpha)\geq 0$ if $\theta\leq 0$ and $\alpha > 1/2$.
\item[(c)]$G_\theta(\theta,\alpha)\leq0$ if $\theta\geq \theta^*$ and $\alpha\geq \alpha^*$.
\end{itemize}
\end{lemma}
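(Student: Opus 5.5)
The three claims all come from the explicit form
$$G_\theta(\theta,\alpha)=\int \frac{y\,q_{\theta^*,\alpha^*}(y)}{D_{\theta,\alpha}(y)^2}\dd y,\qquad D_{\theta,\alpha}(y):=\alpha e^{\theta y}+(1-\alpha)e^{-\theta y}.$$
For (b) and (c) I would fold the integral onto $y>0$, pairing $y$ with $-y$, as in the proof of Proposition \ref{prop:deri}. Then it is enough to compare the integrands at $y$ and $-y$ pointwise.

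For (a), I differentiate under the integral sign, which is justified by Gaussian tails and $D_{\theta,\alpha}\ge\min(\alpha,1-\alpha)e^{-|\theta y|}$. This gives
$$\partial_\alpha G_\theta(\theta,\alpha)=-2\int \frac{y\,(e^{\theta y}-e^{-\theta y})\,q_{\theta^*,\alpha^*}(y)}{D_{\theta,\alpha}(y)^3}\dd y .$$
The quantity $y(e^{\theta y}-e^{-\theta y})=2y\sinh(\theta y)$ has the sign of $\theta$ for every $y$. The rest of the integrand is positive. Hence the derivative is $\le 0$ when $\theta\ge0$ and $\ge0$ when $\theta\le 0$, which is (a).

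For (b), I write
$$G_\theta=\int_{y>0} y\left[\frac{q_{\theta^*,\alpha^*}(y)}{D_{\theta,\alpha}(y)^2}-\frac{q_{\theta^*,\alpha^*}(-y)}{D_{\theta,\alpha}(-y)^2}\right]\dd y .$$
The integrand has two ingredients.
\begin{itemize}
\item Since $\alpha^*>1/2$ and $\theta^*>0$, we have $q_{\theta^*,\alpha^*}(y)>q_{\theta^*,\alpha^*}(-y)$ for $y>0$.
\item A direct computation gives $D_{\theta,\alpha}(-y)-D_{\theta,\alpha}(y)=(2\alpha-1)(e^{-\theta y}-e^{\theta y})$. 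This is $\ge0$ when $\theta\le0$, $y>0$ and $\alpha>1/2$, so $1/D_{\theta,\alpha}(y)^2\ge 1/D_{\theta,\alpha}(-y)^2$.
\end{itemize}
Together these make the bracket positive, so $G_\theta\ge0$.

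For (c), I first use (a): since $\theta\ge\theta^*>0$, $G_\theta$ is decreasing in $\alpha$. It therefore suffices to treat $\alpha=\alpha^*$. The key observation is the identity
$$q_{\theta^*,\alpha^*}(y)=e^{-{\theta^*}^2/2}\phi(y)\,D_{\theta^*,\alpha^*}(y),$$
where $\phi$ is the standard normal density and $\phi(y)=\phi(-y)$. After folding, the sign of the integrand at $y>0$ is the sign of
$$\frac{D_{\theta^*,\alpha^*}(y)}{D_{\theta^*,\alpha^*}(-y)}-\left(\frac{D_{\theta,\alpha^*}(y)}{D_{\theta,\alpha^*}(-y)}\right)^2 .$$
Set $a=\alpha^*$ and $v=e^{\theta y}$. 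Then
$$\frac{D_{\theta,a}(y)}{D_{\theta,a}(-y)}=r(v):=\frac{a v^2+1-a}{a+(1-a)v^2}.$$
Since $a>1/2$, $r$ is increasing on $v\ge1$ and $r(1)=1$. Let $u=e^{\theta^*y}$. For $y>0$ and $\theta\ge\theta^*$ we have $v\ge u\ge 1$, so
$$r(v)^2\ge r(u)^2\ge r(u)\ge1.$$
This shows the displayed difference is $\le0$, hence the integrand is $\le 0$ and $G_\theta\le0$.

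The main obstacle is spotting this factorization in (c). Expanding the products in exponentials instead, as in the $g_\theta$ computation, leads to sign bookkeeping that is harder to control. The remaining steps (differentiating under the integral, verifying $r'>0$ for $a>1/2$) are routine.
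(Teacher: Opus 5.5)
Your proof is correct. Parts (a) and (b) follow the paper's arguments exactly. For (a) you use the same $\alpha$-derivative under the integral. For (b) you use the same folding comparison, which the paper imports from the proof of Proposition \ref{prop:deri} via $G_\theta=\tfrac12\,\partial F/\partial\alpha$.

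Part (c) takes a genuinely different route. Both proofs use (a) to reduce to $\alpha=\alpha^\ast$. After that, the paper borrows inequality \eqref{eq:falpha} from the proof of Theorem \ref{teo:mixture}. There it expands the folded numerator $g_\theta$ into three $\sinh$ terms with coefficients $L,M,N$. It checks the signs of $L$, $M$, $N$ and $L+M$ to get $F(\theta,\alpha)\le F(\theta,\alpha^\ast)$ for all $\alpha\ge\alpha^\ast$, $\theta\ge\theta^\ast$. It then reads off $\partial F/\partial\alpha(\theta,\alpha^\ast)\le 0$ from the one-sided difference quotient.

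You instead use the factorization $q_{\theta^\ast,\alpha^\ast}=e^{-{\theta^\ast}^2/2}\phi\,D_{\theta^\ast,\alpha^\ast}$. At $\alpha=\alpha^\ast$ this reduces the sign question to $r(u)\le r(v)^2$ for a single increasing ratio $r$, with $u=e^{\theta^\ast y}$ and $v=e^{\theta y}$. What your route buys:
\begin{itemize}
\item It is shorter and avoids all the coefficient bookkeeping.
\item It keeps the lemma independent of the proof of Theorem \ref{teo:mixture}, which itself relies on this lemma through Lemma \ref{lemma:atheta}. The paper has no real circularity, since \eqref{eq:falpha} is proved directly, but your dependency structure is cleaner.
\end{itemize}
The paper's route, in exchange, produces the stronger finite-difference inequality it needs anyway for \eqref{eq:fast}.

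Your device recovers that inequality too. Write $r_\alpha(v)=\frac{\alpha v^2+1-\alpha}{\alpha+(1-\alpha)v^2}$. Then $\partial_\alpha r_\alpha(v)\ge 0$ for $v\ge 1$, so for $\alpha\ge\alpha^\ast$ we get $r_\alpha(v)\,r_{\alpha^\ast}(v)\ge r_{\alpha^\ast}(u)^2\ge r_{\alpha^\ast}(u)$. That is exactly \eqref{eq:falpha}. The analogous bound $r_\alpha(v)^2\ge r_{\alpha^\ast}(u)$ would even make the appeal to (a) in part (c) unnecessary.
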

\begin{proof}
To see (a), notice that 
$$\frac{\partial G_\theta}{\partial \alpha}(\theta,\alpha)=-2\int  \frac{y\left( e^{\theta y}-e^{-\theta y}\right)}{\left(\alpha e^{y\theta}+(1-\alpha)e^{-y\theta}\right)^3}q_{\theta^*,\alpha^\ast}(y)\dd y.$$
The integrand is either positive (if $\theta>0$) or negative (if $\theta<0$) for each $y$, and the conclusion follows.

To prove (b) and (c), we note that~\eqref{eq:partialF2} implies that
\begin{equation*}
G_\theta(\theta, \alpha) = \frac 12 \frac{\partial F}{\partial \alpha}(\theta, \alpha)\,.
\end{equation*}
But we have already shown in the proof of Proposition~\ref{prop:deri} that $\frac{\partial F}{\partial \alpha}(\theta, \alpha) > 0$ for all $\theta \leq 0$ and $\alpha > 1/2$. This proves (b).

Likewise, the proof of Theorem \eqref{teo:mixture}, equation \eqref{eq:fast}, shows that $F(\theta, \alpha) \leq F(\theta, \alpha^*)$ for all $\theta\geq \theta^*$ and $\alpha\geq \alpha^*$. This proves that $G_\theta(\theta, \alpha^*) = \frac 12 \frac{\partial F}{\partial \alpha}(\theta, \alpha^*) \leq 0$.
To conclude, we appeal to part (a): since $\theta\geq\theta^*>0$, $G_\theta$ is decreasing as a function of $\alpha$, and hence, $G_\theta(\theta,\alpha)\leq G_\theta(\theta,\alpha^*)\leq0$.
\end{proof}
\begin{lemma}\label{lemma:alpha12}
Suppose $\alpha^*=0.5$. Then $\alpha(\theta)=0.5$ for all $\theta\in\mathbb{R}$. 

\end{lemma}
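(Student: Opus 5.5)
The plan is to verify directly that $\alpha=1/2$ satisfies the defining equation \eqref{eq:optimalpha2}, i.e. that $G(\theta,1/2)=1/2$ for every $\theta$ when $\alpha^\ast=1/2$. Uniqueness then gives $\alpha(\theta)=1/2$: by \eqref{eq:partialG2}, $G(\theta,\cdot)$ is strictly increasing in $\alpha$, and Lemma \ref{lemma:alphaprop} already asserts that $\alpha(\theta)$ is uniquely defined.

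To evaluate $G(\theta,1/2)$, use the second form in \eqref{eq:optimalpha3}:
$$G(\theta,1/2)=\int \frac{e^{\theta y}}{e^{\theta y}+e^{-\theta y}}\,q_{\theta^\ast,1/2}(y)\,\dd y .$$
When $\alpha^\ast=1/2$ the density $q_{\theta^\ast,1/2}(y)=\tfrac12\mathcal{N}(y;\theta^\ast,1)+\tfrac12\mathcal{N}(y;-\theta^\ast,1)$ is even in $y$. Write $h(y)=e^{\theta y}/(e^{\theta y}+e^{-\theta y})$. Then
$$h(y)+h(-y)=\frac{e^{\theta y}+e^{-\theta y}}{e^{\theta y}+e^{-\theta y}}=1 .$$
Substitute $y\mapsto -y$ in the integral and use the evenness of $q_{\theta^\ast,1/2}$. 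This shows that $\int h\,q$ equals $\int (1-h)\,q$, so $2\int h\,q=\int q=1$, and hence $G(\theta,1/2)=1/2=\alpha^\ast$.

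Alternatively, one can argue directly from the semi-dual \eqref{eq:semidual}. The map $\omega=(\omega_1,\omega_2)\mapsto(\omega_2,\omega_1)$, combined with $y\mapsto -y$, leaves the objective invariant, because the components $q_{\theta}(y)$ and $q_{-\theta}(y)=q_\theta(-y)$ swap and the data law is symmetric. Strict concavity modulo translations (Lemma \ref{lemma:hessian}) then forces $\omega_1=\omega_2$, so $\alpha(\theta)=1/2$. I would present the $G$-computation as the main argument, since it is the more concrete of the two.

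There is no real obstacle here. The only point needing care is that uniqueness of the root in $\alpha\in(0,1)$ follows from the strict positivity in \eqref{eq:partialG2}; that positivity holds because the integrand is positive.
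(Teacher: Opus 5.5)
Your proposal is correct and takes essentially the paper's route: show $G(\theta,1/2)=1/2$, then conclude by uniqueness of the root of $G(\theta,\cdot)=\alpha^\ast$, which you rightly ground in the strict positivity in \eqref{eq:partialG2}. The only cosmetic difference is that the paper reads off $G(\theta,1/2)=1/2$ from the already-symmetrized form \eqref{eq:Gtheta12}, an integral against $\phi_{\theta^\ast}$, whereas you symmetrize directly via $h(y)+h(-y)=1$ and the evenness of $q_{\theta^\ast,1/2}$; both rest on the same reflection trick.
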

\begin{proof}
By \eqref{eq:Gtheta12}  $$G(\theta, 1/2) = \frac{1}{2} \int \frac{ e^{\theta y} +  e^{-\theta y}}{e^{\theta y} + e^{-\theta y}} \phi_{\theta^*}(y) \dd y\,=\frac{1}{2} \int \phi_{\theta^*}(y)\dd y\ =\frac{1}{2}.$$

Then, since $\alpha(\theta)$ is the only such that $G(\theta, \alpha(\theta))=1/2$, it must be that $\alpha(\theta)=1/2$.
\end{proof}
\section{Additional synthetic experiments}
In this section, we present additional experiments that supplement Sections \ref{sec:experiments} and \ref{sec:coclustering} in the main text.
\subsection{Unknown weights and known $K$ (supplement to Section \ref{sec:experiments}) }
\label{sub:exp2}
The setup is the same as experiment i) in Section \ref{sec:experiments}, but with unknown (and not necessarily uniform) weights. To measure non-uniformity we consider weights sampled from a Dirichlet distribution with concentration parameter $\gamma$ (a constant $K$-dimensional vector).  Smaller values of $\gamma$ indicate a larger deviation from the uniform distribution. Then, for each choice of $\gamma\in\{1,10,20,50,1000\}$, $K\in\{10,20,40\},d\in\{2,5,10,15,20\}$ and $\sigma_k^2=\sigma=0.1$, we ran $n_{exp}=80$ experiments under eight experimental made by the following experimental setups a)whether weights are updated, b)wether variances are updated, c)whether the initial weights are set to the true parameters or the uniform distribution $\alpha_k=1/K$.
We treat $\alpha$ as a parameter and update it using \eqref{eq:updatealpha} by performing coordinate descent on $\theta$ and $\alpha$ until convergence. In detail, to update $\theta$ we apply SEM with current $\alpha^t$ until convergence, leading to $\theta^{t+1}$. To update $\alpha$ we successively apply mirror descent updates \eqref{eq:updatealpha} with current $\theta^t$ until convergence, leading to $\alpha^{t+1}$. On each update, we start with $\eta=1$ and make $\eta\leftarrow \eta/2$ until $\Ll$ decreases with respect to initial $\alpha^t$. We always initialize $\alpha$ as the vector of uniform weights and study performance as a function of deviations from uniformity that we measure with the parameter $\gamma$, the concentration of a Dirichlet distribution. For each $\gamma$, we consider $n_{exp}=200$ experiments where weights are sampled from a Dirichlet distribution with parameters $\gamma/K$. Smaller values of $\gamma$ indicate a larger deviation from the uniform distribution. For comparison, we also include the algorithm that doesn't update weights as a baseline, i.e., we perform inference on a model with misspecified weights $\alpha_k=1/K$.

Results are summarized in Fig. \ref{fig:errorweights1},\ref{fig:errorweights2} and \ref{fig:errorweights3}. Benefits of the SEM algorithm are observed for moderately large values of $\gamma$ (e.g., $\gamma>10$) accross all experimental conditions. However, for smaller values of $\gamma$ performance is similar to comparable to EM, suggesting that the most beneficial regime occurs when the true distribution is close to uniform.

 \begin{figure}[H]
  \begin{center}
    \includegraphics[width=1.0\textwidth]{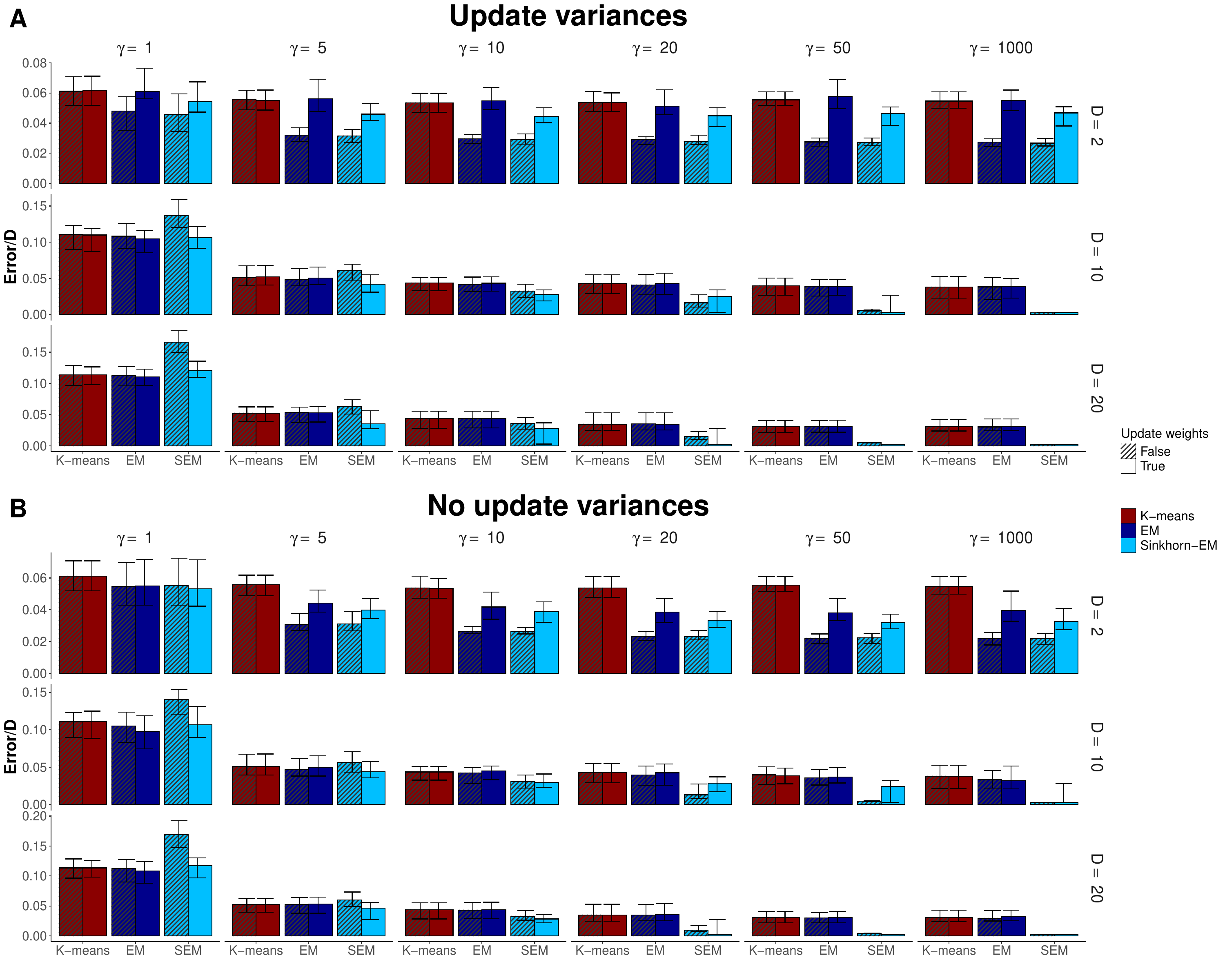}
  \end{center}
  \caption{Results (errors) for the unknown weights case in Section \ref{sub:exp2}, with starting weights set to uniform $\alpha_k=1/K$. A and B show results for the update/no update variance cases, respectively, and different rows show different number of components $K$. Wether weights are updated or not is indicated by shading in bars.Bars indicate the interquartile range. }
  \label{fig:errorweights1}
\end{figure}

 \begin{figure}[H]
  \begin{center}
    \includegraphics[width=1.0\textwidth]{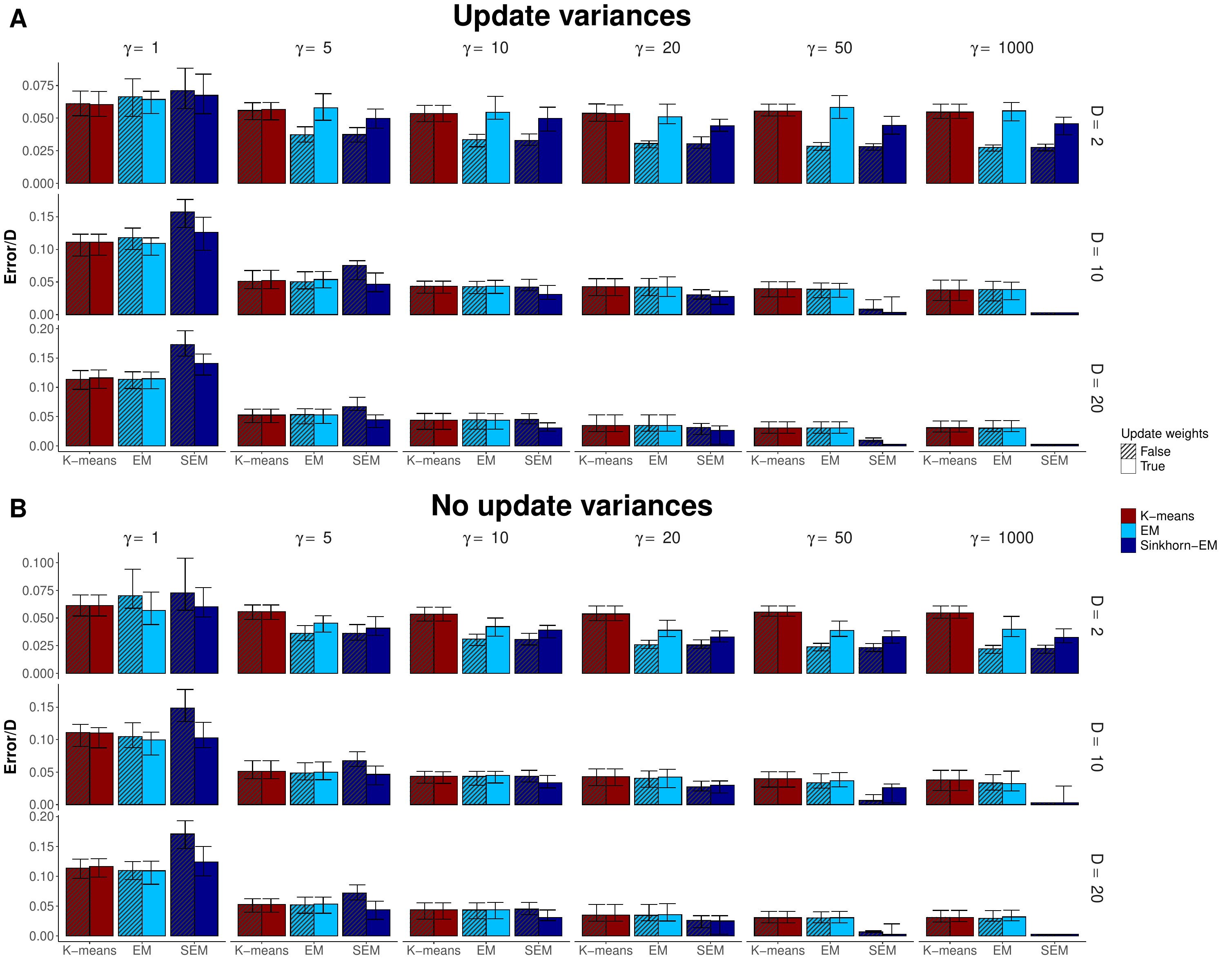}
  \end{center}
  \caption{Results (errors) for the unknown weights case in Section \ref{sub:exp2}, with starting weights set to their true values.  A and B show results for the update/no update variance cases, respectively, and different rows show different number of components $K$. Wether weights are updated or not is indicated by shading in bars.Bars indicate the interquartile range. }
  \label{fig:errorweights2}
\end{figure}

 \begin{figure}[H]
  \begin{center}
    \includegraphics[width=1.0\textwidth]{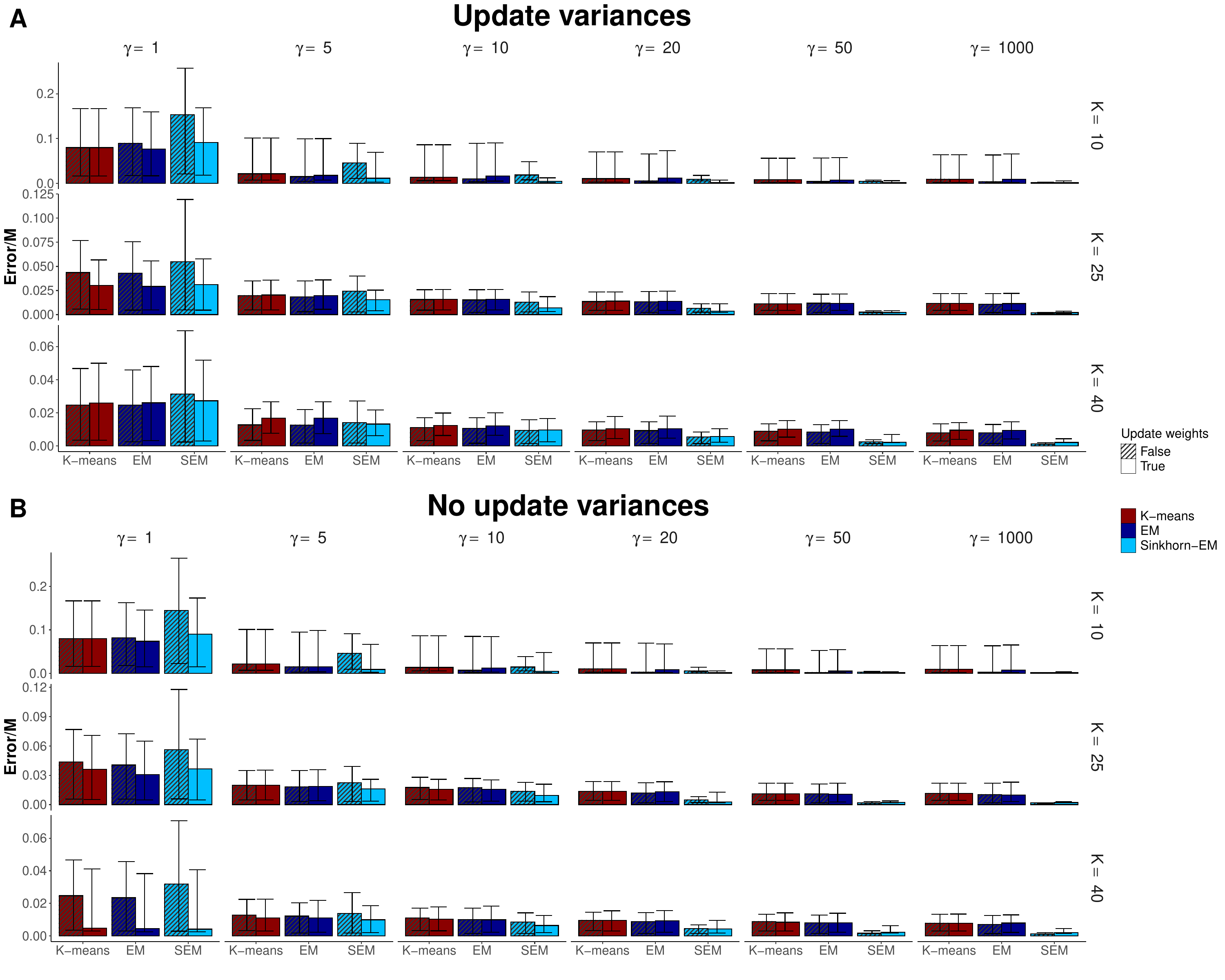}
  \end{center}
  \caption{Same as Fig. \ref{fig:errorweights1} but errors shown as a function of $M$ (different rows) }
  \label{fig:errorweights3}
\end{figure}
\subsection{Time comparisons}
As we discussed in the main text, the computational complexities of each EM and SEM iteration are $O(nK)$ and $\tilde{O}(nK)$, respectively. However, the latter conceils polynomial dependency in the norm $||C||_\infty$ of the cost matrix $C_{i,j}=||Y_i-\theta_k||^2/\sigma^2$  (in the Gaussian case) and inverse polynomial dependence on the error tolerance for the Sinkhorn iterations \citep{Altschuler2017}. Here we empirically quantify the extra computational burden of SEM over simple EM due to these parameters. To do so, we measure execution time of both full EM and SEM runs. Results are shown in Fig. \ref{fig:timesynthetic}. We observe that SEM is typical 10-100 times slower than EM in each experimental condition, with larger number of iteration leading to longer execution times (recall, as stopping criterion, we either stop if the marginals are $\varepsilon$ close to the prescribed ones or if the maximum number of iterations is exceeded). We also note that the computational time increases as both $M$ and $D$ increase, reflecting and increased cost $||C||_\infty$ in both cases. We note that errors are stable as a function of the number of iterations, suggesting that, in practice, we may only need a modest number of maximum Sinkhorn iterations (e.g. 50) to achieve benefits. As argued in these experiments, the extra computational cost is paid off by the ability to substantially decrease estimation errors compared to EM.

 \begin{figure}[H]
  \begin{center}
    \includegraphics[width=1.0\textwidth]{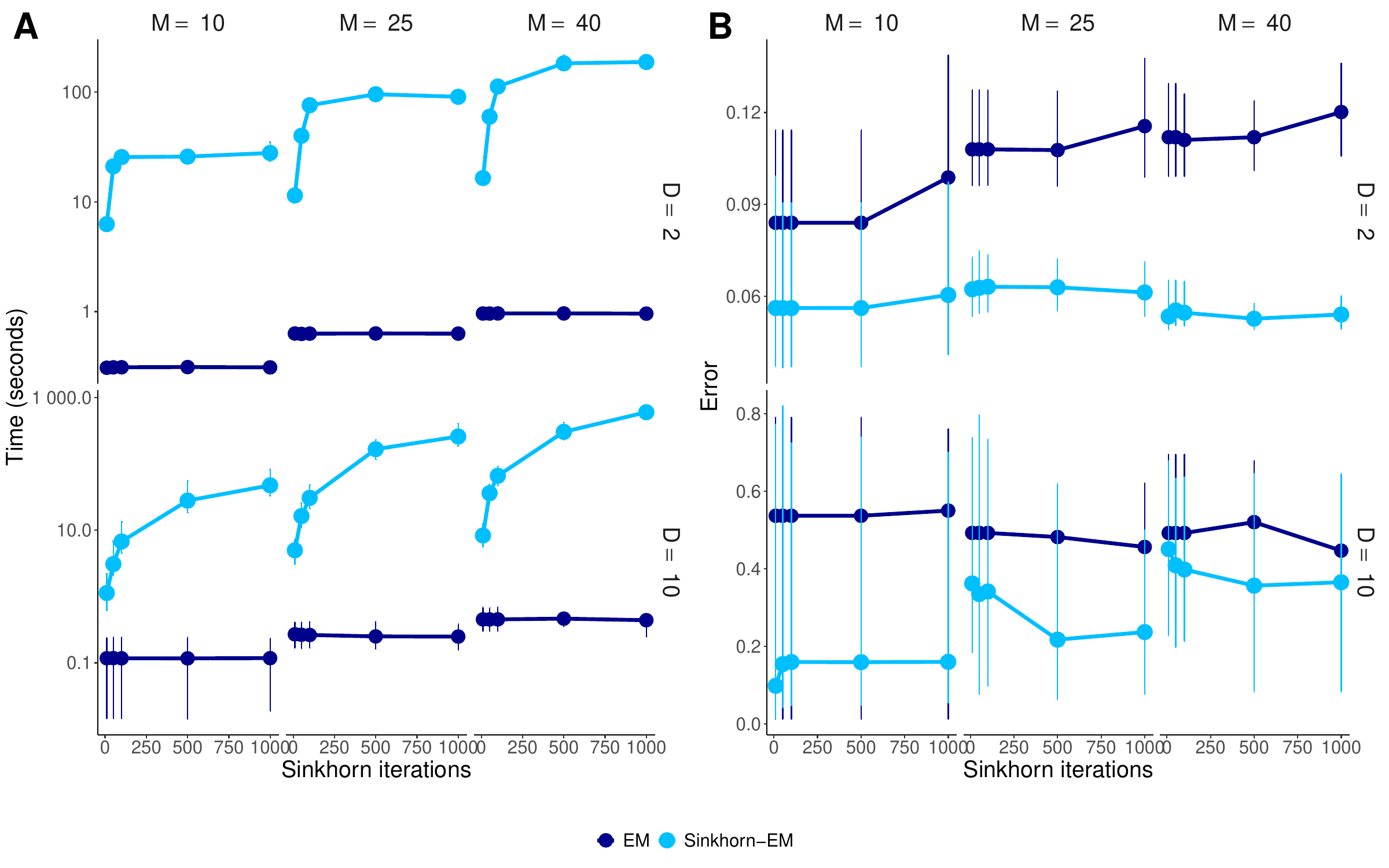}
  \end{center}
  \caption{Time (A) and error (B) comparisons for experiments on simulated data, as a function of number of Sinkhorn iterations, number of clusters $K$, and dimension $d$. Results for the unknown weights case in Section \ref{sub:exp2}.}
  \label{fig:timesynthetic}
\end{figure}

\subsection{Known weights and unknown $K$ (supplement to Section \ref{sec:experiments}) }
\label{sub:exp3}


We treat the unknown number of components case as a model selection problem over $K$. For each true $K$, we fit the model using several candidate models with $K_{model}\in\{K-5,K+5\}$ and estimate $\hat{K}$ as customary, via the Bayesian Information Criterion (BIC) \citep{kass1995reference} (not available for k-means). We quantify the estimation error as the difference $K-\hat{K}$ between the actual and inferred numbers of components. Compared to EM, SEM recovers the true number of components much more often. Moreover, in our experiment, SEM never overestimates the number of components ($\hat{K}>K$). In Appendix \ref{app:details} (Fig. \ref{fig:errorselectionext}) we provide detailed results for different choices of parameters.


\begin{figure}[H]
  \begin{center}
    \includegraphics[width=1.0\textwidth]{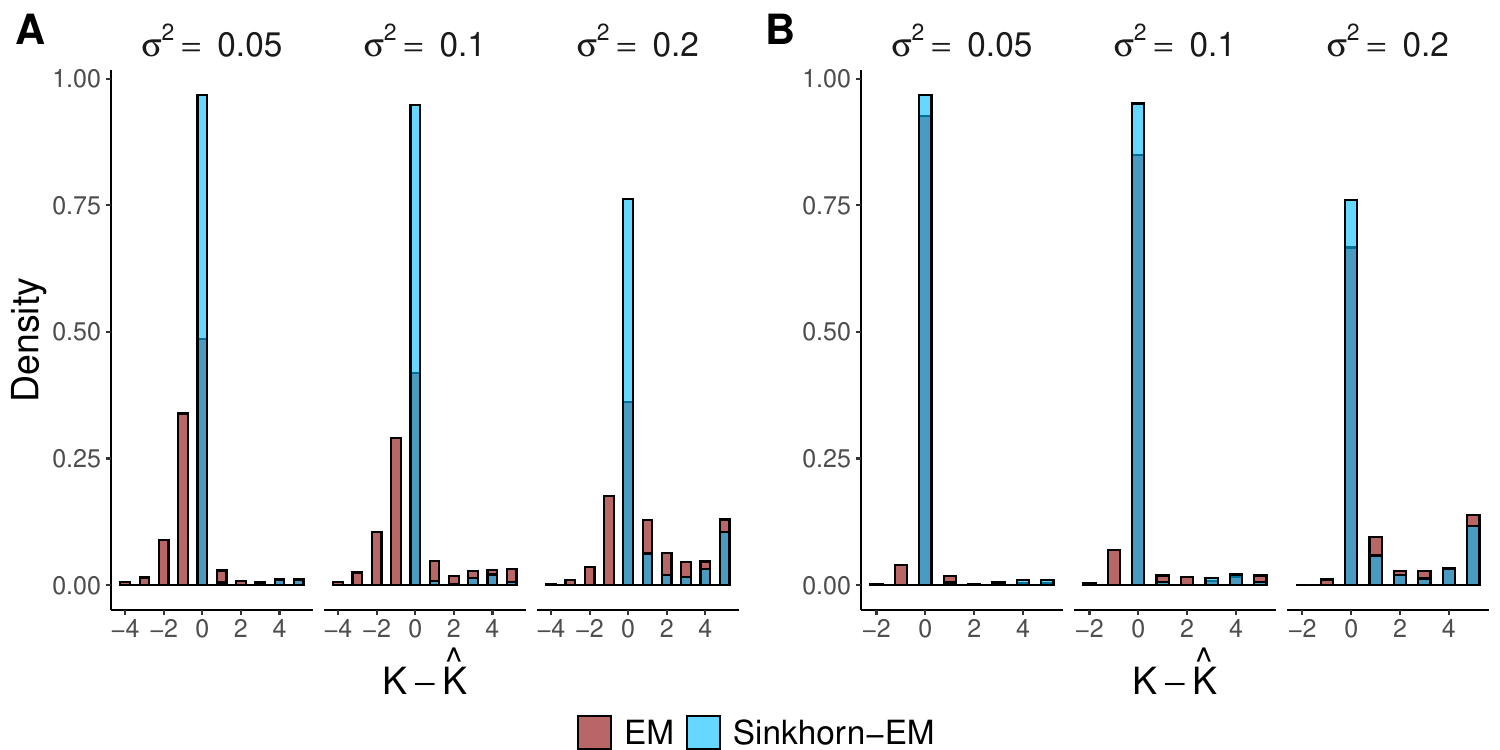}
  \end{center}
  \caption{Number of components estimation error histograms for the experiment in Section \ref{sub:exp3}. \textbf{A}: results for different seeds. \textbf{B}: results for the best seed}
  \label{fig:errorselection}
\end{figure}

\subsection{Experiments on Synthetic Data (supplement to Section \ref{sec:coclustering})}
\label{sub:coclust}
 We compare VEM, Sinkhorn-VEM and the competitive spectral co-clustering as implemented in \cite{pedregosa2011scikit} on simulated data sampled from the simple Gaussian generative model: $$Y_{i,j}|(z_{i,k}=1,w_{j,g}=1,\theta)\sim \mathcal{N}\left(\theta_{k,g},\sigma^2\right).$$
By performing thousands of experiments, we studied differences between estimated and actual $\theta_{k,g}$ for the three methods at different noise levels $\sigma^2$ and number of co-clusters $K^2$. Results are summarized in Fig. \ref{fig:biclustering}, Sinkhorn-VEM vastly outperforms both spectral and VEM co-clustering. Details appear in the Appendix \ref{app:coclust}.
\begin{figure}[H]
  \begin{center}
    \includegraphics[width=1.0\textwidth]{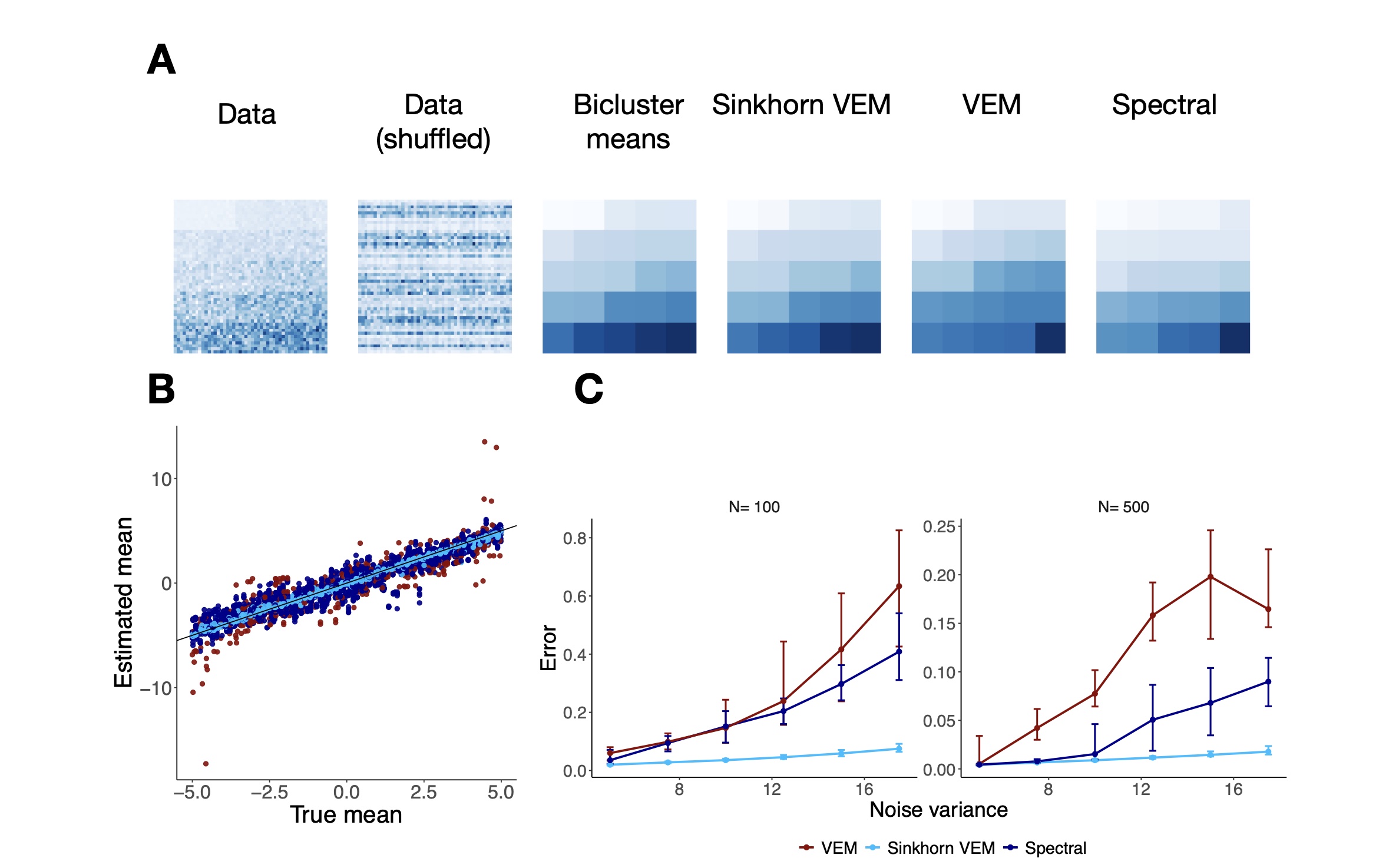}
  \end{center}
  \caption{Comparison of three Co-clustering methods based on simulated data. \textbf{A}: Setup for each experiment. The first inset contains data; each pixel is a noise-corrupted version of the corresponding co-cluster (each sub-square, also shown in the third subplot). Rows and columns of the data matrix are displayed in increasing order in the co-cluster for visualization purposes. The second inset is the same data matrix in the true shuffled order. The third inset shows the (ordered) co-cluster means. The last three columns are co-clusters recovered by each of the three methods. \textbf{B}: Scatterplot of true vs estimated co-cluster of means for three methods over thousands of experimental repetitions. \textbf{C}: Dependence of error on noise parameter $\sigma^2$ and sample sizes $N=M=100,500$ (so that the data matrix $Y$ had dimension $N\times N$). Sinkhorn-VEM consistently produces the most accurate co-clustering estimates.}
  \label{fig:biclustering}
\end{figure}

\section{Experimental details}
\label{app:details}
\subsection{Synthetic experiment details}\label{sub:details}
\sloppy For the four synthetic data experiments of Section \ref{sec:experiments} (main text) (i),(ii),(iii), and (iv), we considered several possible dimension sizes $d=2,5,10,20$, variances $\sigma^2=0.001,0.005,0.01,0.05,0.1,0.2,0.5$, and number of components $K=10,20,30,40$. \citep{brown2021bridges,boerner2023access}. Each true mean $\theta_k$ was sampled from a uniform $U(-1,1)$ distribution. In cases where variances were not spherical, diagonal entries were sampled from a $U(0.5\sigma^2,1.5\sigma^2)$ distribution (and i.i.d across different clusters). For each run of EM and Sinkhorn-EM, we performed at most 100 iterations with a termination criterion of $\varepsilon=10^{-3}$ change in the overall $L_1$ differences between consecutive parameter values (adding mean and variance differences). For each Sinkhorn-EM iteration, we performed at most 1000 iterations with a termination criterion of $\varepsilon=10^{-3}$ for $L_1$ differences between consecutive entropic optimal transport potentials. For k-means, we used the scikit-learn implementation with k-means++ initialization. This initialization was also used on the same seeds for Sinkhorn-EM and EM. 
Figs. \ref{fig:supperrorari1}, \ref{fig:200}, and \ref{fig:1000} supplement the results of the main text regarding experiment (i). In Fig. \ref{fig:supperrorari1} we compare the performance of three algorithms of each individual experiment, both at the level of individual seed or at the level of best seed. Both EM and Sinkhorn-EM outperform k-means and Sinkhorn-EM outperforms EM. In Figs. \ref{fig:200} we show performance of each algorithm as a function of $d$, $\sigma^2$ and $K$ for the two sample sizes we considered, $N=500$ and $N=1000$. These figures point to the same pattern described in the main text.
Figs. \ref{fig:difvar}, \ref{fig:unk1}, and \ref{fig:unk2} are analogs of \ref{fig:supperrorari1} in the main text for experiments (ii) (known unequal diagonal variances), (iii) (unknown spherical variance), and (iv) (unknown unequal variances). Results align with the same pattern, favoring Sinkhorn-EM.

For the unknown weights experiments of Section \ref{sub:exp2}, we considered parameters $d=2$, $K=10,20,30,40$, $N=1000$, $\sigma^2=0.005,0.001,0.01$, and $\gamma=1,10,20,50,100,1000$, and all other setups as in the experiments in Section \ref{sec:experiments} (main text). We jointly optimized over weights $\alpha$ and means $\theta$ by coordinate descent: for optimization over $\theta$ we performed Sinkhorn-EM and for optimization over $\alpha$ we used the mirror descent method described in \ref{sec:alpha}, with $\eta=1$. We iterated these two sub-routines until convergence.
For the model selection experiments of Section \ref{sub:exp3}, we considered several choices of $d$ and $K$ as shown in Fig. \ref{fig:errorselectionext}, which gives a more detailed account than Fig. \ref{fig:errorselection}.

\begin{figure}[H]
  \begin{center}
    \includegraphics[width=1.0\textwidth]{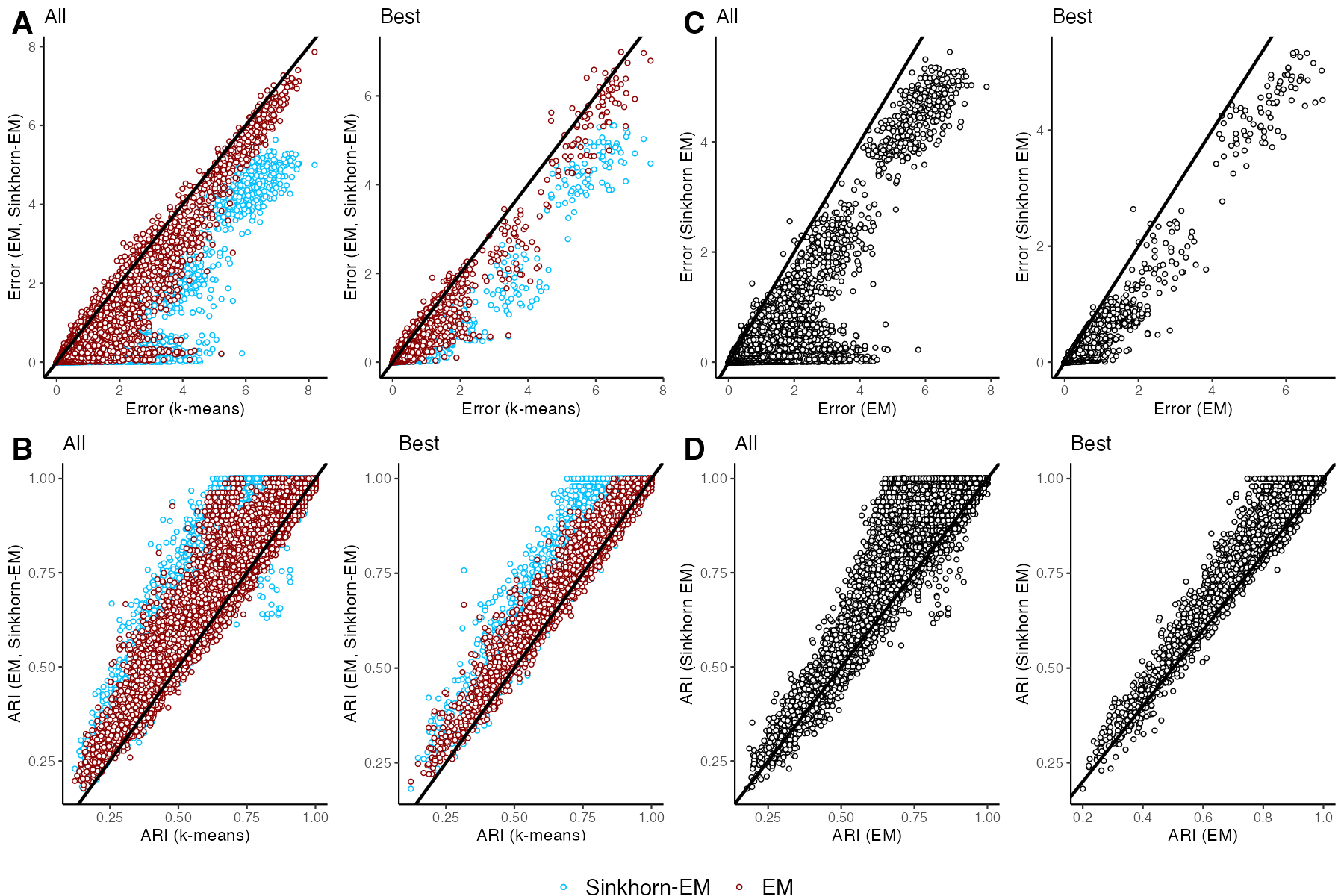}
  \end{center}
  \caption{Performance comparison of three methods in experiment (i) of Section \ref{sec:experiments} (main text). \textbf{A} Error scatterplots comparing the performance of k-means with EM or Sinkhorn-EM (colors). \textbf{B} same as \textbf{A} but with ARI score. \textbf{C,D} same as \textbf{A,B} but comparing EM with Sinkhorn-EM}
  \label{fig:supperrorari1}
\end{figure}

\begin{figure}[H]
  \begin{center}
    \includegraphics[width=1.0\textwidth]{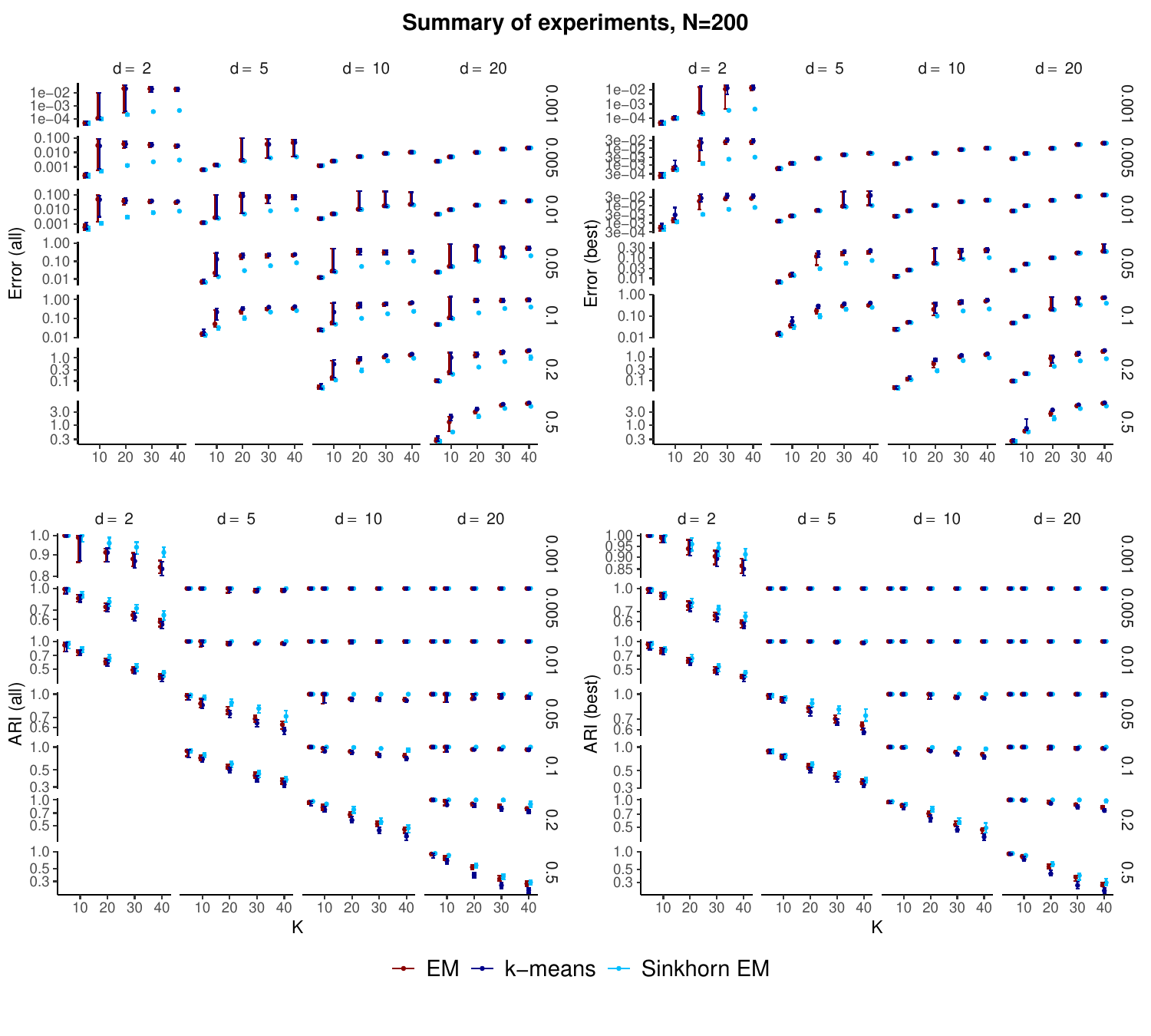}
  \end{center}
  \caption{Performance of three algorithms when the sample size is $N=200$. Each column represents a different dimension $d$ and each row a different noise variance $\sigma^2$.}
  \label{fig:200}
\end{figure}

\begin{figure}[H]
  \begin{center}
    \includegraphics[width=1.0\textwidth]{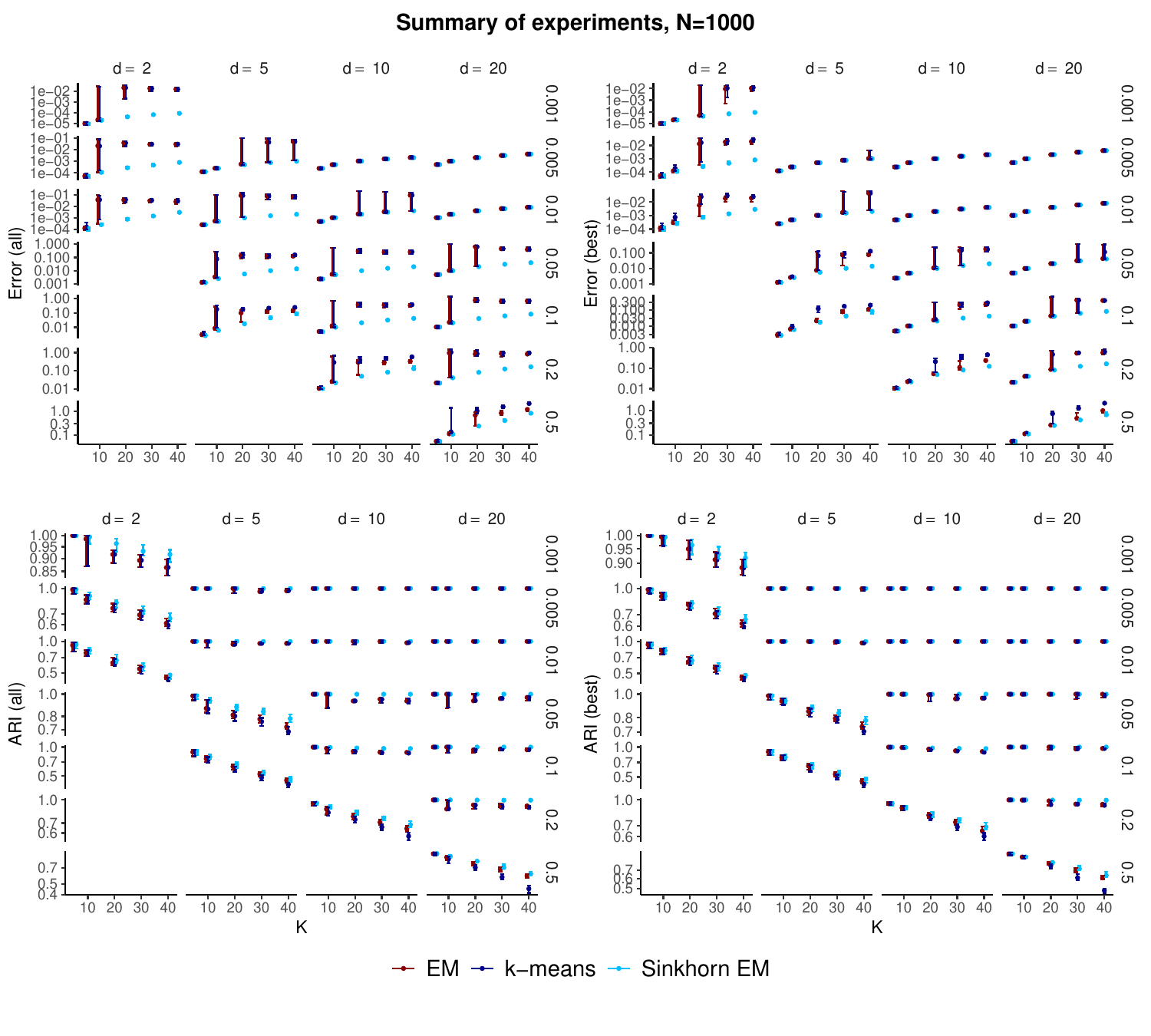}
  \end{center}
  \caption{Performance of three algorithms when the sample size is $N=1000$. Each column represents a different dimension $d$ and each row a different noise variance $\sigma^2$.}
  \label{fig:1000}
\end{figure}

\begin{figure}[H]
  \begin{center}
    \includegraphics[width=1.0\textwidth]{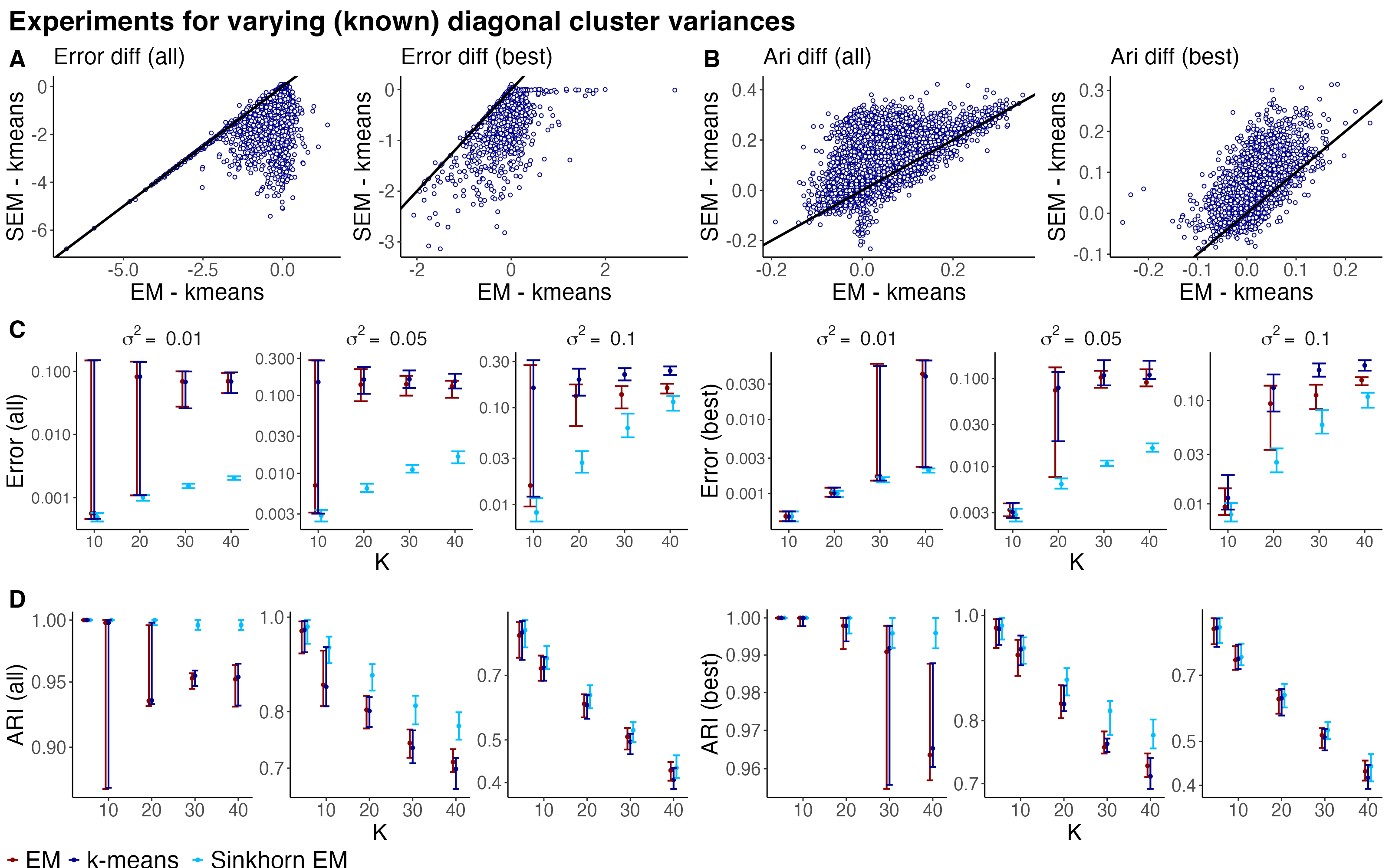}
  \end{center}
  \caption{Results for the experiment (ii) in Section \ref{sec:experiments}: each cluster is characterized by a different diagonal covariance matrix $\sigma^2_kI_d$, and these variances are assumed known (i.e., not estimated in the M-step)}
  \label{fig:difvar}
\end{figure}

\begin{figure}[H]
  \begin{center}
    \includegraphics[width=1.0\textwidth]{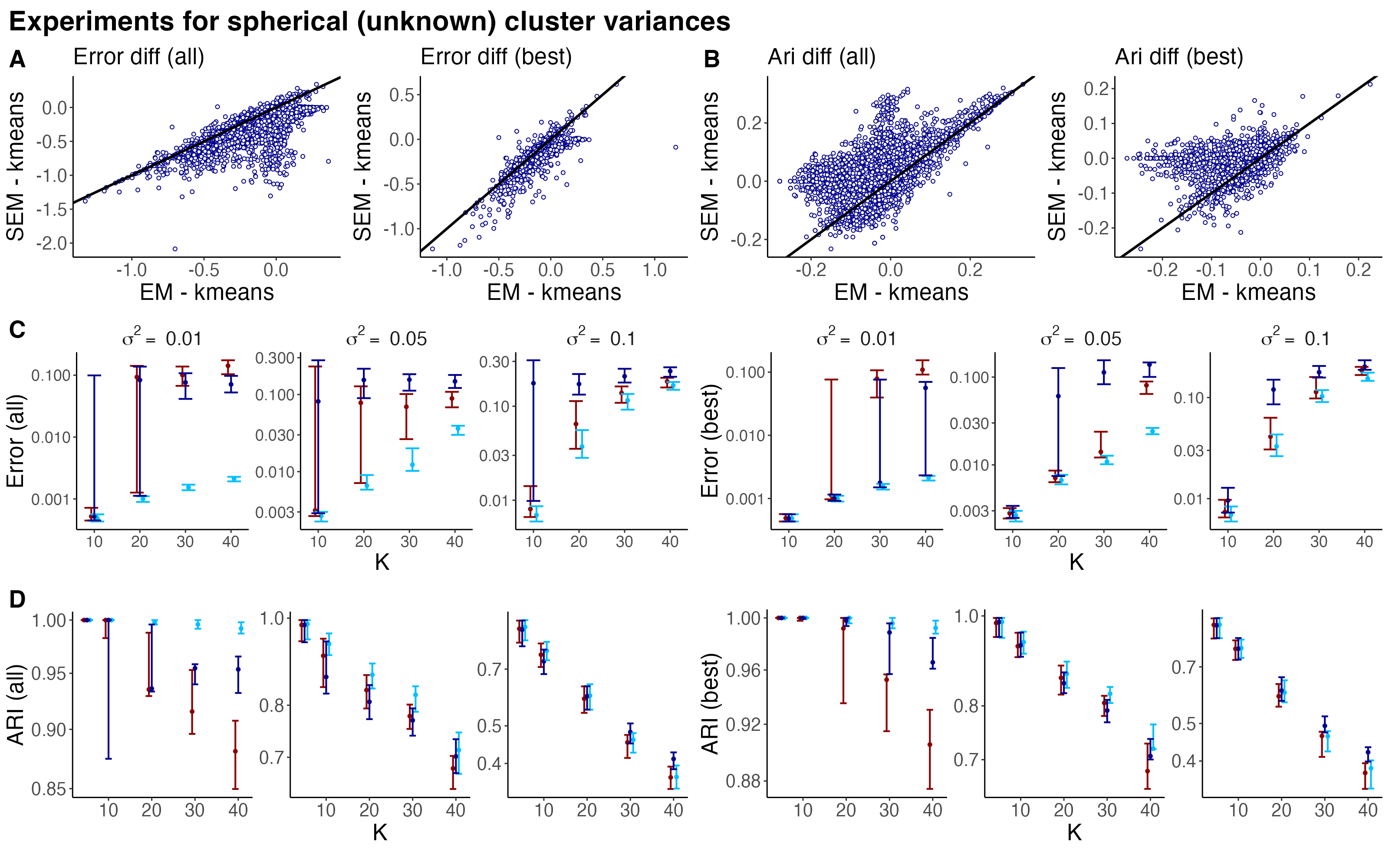}
  \end{center}
  \caption{Results for the experiment (iii) in Section \ref{sec:experiments}: each cluster has the same covariance $\sigma^2I_d$, but $\sigma^2$ is not known and hence must be estimated in the $M$-step.}
  \label{fig:unk1}
\end{figure}

\begin{figure}[H]
  \begin{center}
    \includegraphics[width=1.0\textwidth]{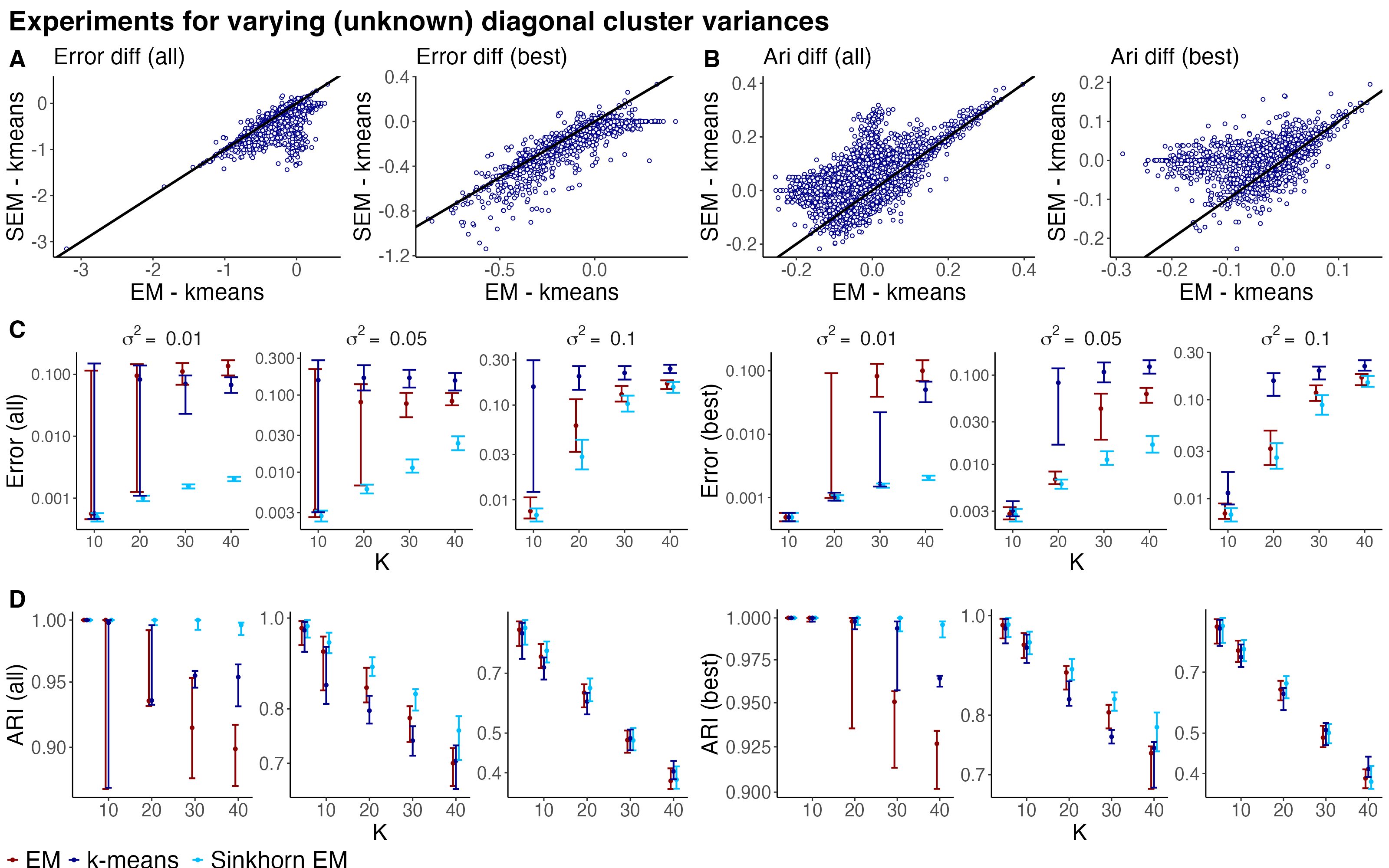}
  \end{center}
  \caption{Results for the experiment (iv) in Section \ref{sec:experiments} (main text): each cluster is characterized by a different diagonal covariance matrix and these variances are not assumed known and hence estimated in the $M$-step.}
  \label{fig:unk2}
\end{figure}

\begin{figure}[H]
  \begin{center}
    \includegraphics[width=1.0\textwidth]{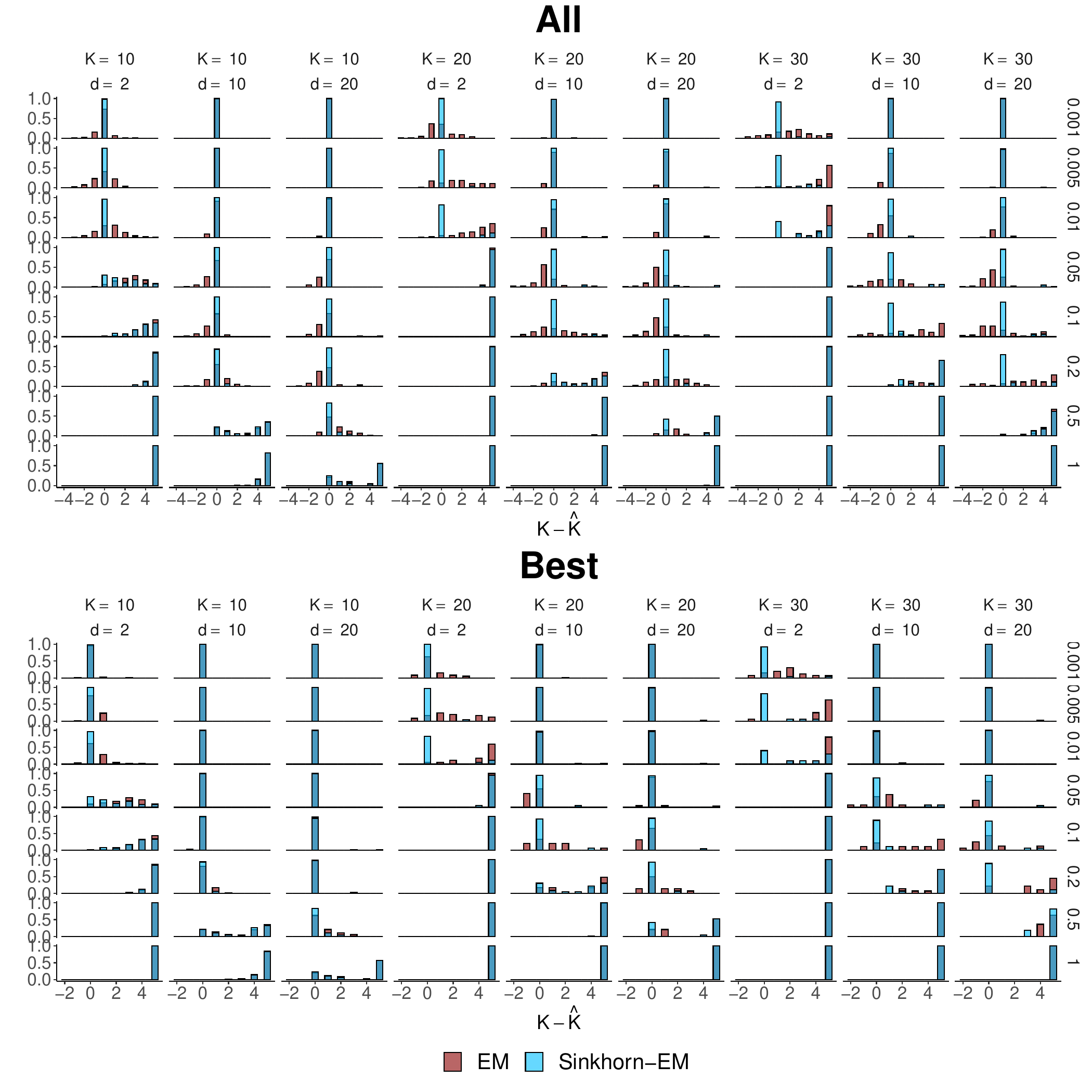}
  \end{center}
  \caption{Number of components estimation error histograms for the experiment in Section \ref{sub:exp3}. Each row represents a different variance $\sigma^2$. \textbf{A}: results for different seeds. \textbf{B}: results for the best seed. }
  \label{fig:errorselectionext}
\end{figure}
\subsection{C.elegans experiment details}
\label{app:eleg}
Results displayed in Fig. \ref{fig:celegans} in Section \ref{sec:celegans} correspond to the following semi-synthetic setup: we considered a vector of true neural locations and colors as described in \cite{nejatbakhsh2020probabilistic}. We considered clustering segmentation tasks in the head (195 neurons) and tail (45 neurons). In each experiment, we sampled 35 neurons at random either from the head or the tail. We sampled covariance matrices whose diagonal entries were drawn from a log-normal distribution with location parameter 1 and scale parameter 0.1. We considered GMM with 35 samples centered at sampled neural locations prescribed covariances and uniform weights, and based our inferences on  $N=5000$ samples from that mixture. In this experiment, we assumed the weights were fixed and well-specified, but we allowed the (diagonal) covariance matrices to be fitted. Results displayed in Fig. \ref{fig:celegans} are summaries of $200\times 2$ (head and tail) experiments; for each experiment, we retained the best of 10 seeds. As in the experiments of Section \ref{sec:experiments} (main text), we compared k-means, EM, and Sinkhorn-EM algorithms, in all cases using the k-means++ initialization method. Examples of segmentation performance in Fig. \ref{fig:celegans}B were plotted with the code released with \cite{nejatbakhsh2020probabilistic}. Finally, in Fig. \ref{fig:timelegans} we include a execution time comparison for the experiment described above. While EM typically takes less than a second, execution times for SEM are in the order of tens of seconds.
\begin{figure}[H]
  \begin{center}
    \includegraphics[width=0.5\textwidth]{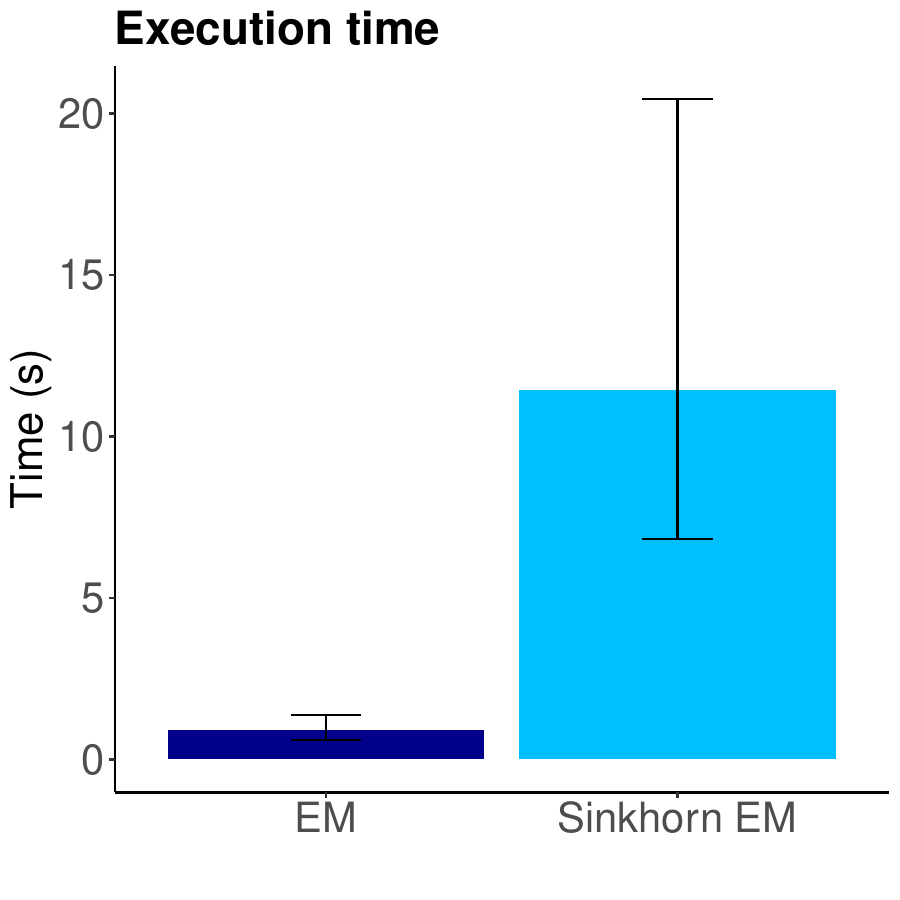}
  \end{center}
  \caption{Execution time comparisons for the C.elegans segmentation task.}
  \label{fig:timelegans}
\end{figure}

\subsection{Co-clustering experiment details}
\label{app:coclust}
\subsubsection{Algorithmic details:}
Suppose we want to perform co-clustering on an $N\times M$ matrix using $K\times G$ co-clusters. We consider row and column responsibility vectors $z\in\mathbb{R}^{N\times K}$  and $w\in\mathbb{R}^{M\times G}$. We briefly describe the VEM and SVEM algorithms. We implement the VEM algorithm as described in Algorithm 5.3 in Chapter 5 of \cite{govaert2013co}, which we reproduce in Algorithm \ref{alg:1}.

\begin{algorithm}[H]
  \caption{Variational EM (VEM) for co-clustering in model \eqref{eq:coclust}}
  \label{alg:1}
  \begin{enumerate}
    \item
    \textbf{Input}: Data matrix $Y$, number of co-clusters $K$ and $G$. Initial values for $z$ and $w$ 
    \item \textbf{Initialization}: compute
    $$\pi_k=\frac{\sum_k z_{i,k}}{N},\rho_g =\frac{\sum_{j}w_{j,g}}{M},\theta_{k,g} = \frac{\sum_{i,j}z_{i,k}Y_{i,j} w_{j,g}}{\sum_i z_{i,k}\sum_j w_{j,g}},\sigma^2_{k,g}=\frac{\sum_{i,j}z_{i,k}Y^2_{i,j} w_{j,g}}{\sum_i z_{i,k}\sum_j w_{j,g}}-\theta_{k,g}$$
    \item
    \textbf{Outer loop: repeat until convergence in $\theta$,$\sigma^2$,$\pi,\rho$)}
     \begin{itemize}
     \item[3.1] \textbf{Define}  $$Y^w_{i,g}=\frac{\sum_j w_{j,g}Y_{i,j}}{\sum_j w_{j,g}},u_{i,g}^w=\frac{\sum_j w_{j,g}Y^2_{i,j}}{\sum_j w_{j,g}}$$

      \item[3.2] \textbf{Inner loop in row ($k$) coordinates until convergence of $\pi,\mu,\sigma^2$}
      \begin{itemize}
      \item[3.2.1] \textbf{Update} $$z_{i,k}\propto \pi_k \exp\left(-\frac{1}{2}\sum_{g} \left(\sum_j w_{j,g}\right)\left(\log \sigma^2_{k,g}+\frac{u_{i,g}^w-2\mu_{k,g}Y^w_{i,g}+\mu_{k,g}^2}{\sigma^2_{k,g}}\right)\right)$$
      \item[3.2.2] \textbf{Update} $$\pi_k=\frac{\sum_{i}z_{i,k}}{N},\mu_{k,g}=\frac{\sum_{i} z_{i,k}Y^w_{i,g}}{\sum_{i}z_{i,k}}, \sigma^2_{k,g}=\frac{\sum_{i} z_{i,k}u^w_{i,g}}{\sum_{i}z_{i,k}}-\mu^2_{k,g}$$
      \end{itemize}
      \item[3.3] \textbf{Define}  $$Y^z_{j,k}=\frac{\sum_i z_{i,k}Y_{i,j}}{\sum_i z_{i,k}},v_{j,k}^z=\frac{\sum_j z_{i,k}Y^2_{i,j}}{\sum_i z_{i,k}}$$
      \item[3.4] \textbf{Inner loop in column ($g$) coordinates until convergence of $\rho,\mu,\sigma^2$}
      \begin{itemize}
      \item[3.4.1] \textbf{Update} $$w_{j,g}\propto \rho_g \exp\left(-\frac{1}{2}\sum_{k} \left(\sum_i z_{i,k}\right)\left(\log \sigma^2_{k,g}+\frac{v_{j,k}^z-2\mu_{k,g}Y^z_{j,k}+\mu_{k,g}^2}{\sigma^2_{k,g}}\right)\right)$$
      \item[3.4.2] \textbf{Update}  $$\rho_g=\frac{\sum_{j}w_{j,g}}{M},\mu_{k,g}=\frac{\sum_{j} w_{j,g}Y^z_{j,k}}{\sum_{j}w_{j,g}}, \sigma^2_{k,g}=\frac{\sum_{j} w_{j,g}v^z_{j,k}}{\sum_{j}w_{j,g}}-\mu^2_{k,g}$$
      \end{itemize}
      \end{itemize}

  \item
  \textbf{Output} $\pi,\rho,\sigma^2,\mu$
  \end{enumerate}
\end{algorithm}
In the case that some parameters (e.g., the weights $\pi,\rho$ or variances $\sigma^2$) are known, then the algorithm can be simplified by simply skipping the steps that implement updates of those parameters. 
The SVEM algorithm differs slightly from the above routine. In the case where $\pi$ and $\rho$ are fixed then we only need to replace steps 3.2.1 and 3.4.1 in Algorithm \ref{alg:1} by the computation of an entropic optimal transport plan between $\pi$ (resp, $\rho$) and a uniform measure with weights $1/N$ (resp, $1/M$) and cost function given by what is inside of the exponential term in 3.2.1 (resp, 3.4.1). This computation can be interpreted as transport towards a weighted version of $Y$, and we omit the details for simplicity. The computed optimal transport plan immediately yields the responsibilities $z$ (resp. $w$) as described in the main text. 

In the case where $\rho$ and $\pi$ are also parameters, we considered a variation over the above scheme where, on a small proportion of times, we carried out usual VEM updates in the inner loops (so that we can update weights) and in the rest of the times we performed the SVEM updates described in the above paragraphs. In our applications, we performed one VEM update every 6 SVEM updates. We could have alternatively considered weight updates as described in \ref{sec:alpha}, but we did not deem this necessary.

\subsubsection{Synthetic experiments:}
\sloppy For the synthetic data experiments in Fig. \ref{fig:biclustering} we considered $K=5,G=5$ or $K=10,G=10$ co-clusters, noise variances $\sigma^2=1.0,2.5$ $,5.0,7.5,10.0,12.5,15.0,17.5$ and sample sizes $N=M=100,500$ (i.e. the data matrix has $N$ rows and $N$ columns). Each co-cluster mean $\theta_{k,g}$ was sampled from a uniform distribution in the interval $[-5,5]$. For each parameter configuration, we considered a total of 40 experiments, and in each experiment, a total of 5 random seeds. To sample the data matrix $Y$, we divided the $N\times M$ entries into subsquares of size $N/K\times N/G$ each. We sampled entries $Y_{i,j}$ as a noisy version of the corresponding co-cluster mean, i.e. $Y_{i,j}=\mu_{k,g}+\mathcal{N}\left(0,\sigma^2\right)$ if where $(k,g)$ are indexes for the subsequent where indexes $(i,j)$ belong to. In this experiment we kept weights $\rho,\pi$ and variances $\sigma^2$ and only optimized over mean parameters $\theta$

As initial values for VEM and Sinkhorn-VEM, we used the solution provided by the spectral method with random initialization as implemented in Scikit-learn \cite{pedregosa2011scikit}. 
\subsubsection{Spatial Transcriptomic experiment details} We used the Prefrontal Dorsolateral Cortex spatial transcriptomic data from \citep{maynard2021transcriptome} available in the R package \cite{spatialibd}. We considered a subset from the original dataset, sample id 151673: we focused on the expression of the top 1000 genes (out of 33538) from the sample. We focused on the region with X coordinates between 139 and 400 (the original ranges were 139 and 498) and Y coordinates between -521 and -292 (the original ranges were -521 and -109). This subregion contains the majority of measurements anatomically associated with Layers 5, 6, and white matter (WM), and we only consider these layers for clustering purposes. In total, we clustered the resulting $N=1473$ spatial measurements out of the original 3639.

We compared the performance of VEM (Algorithm \ref{alg:1}), SVEM, and the spectral algorithm \cite{pedregosa2011scikit} with `scale' initialization. In all cases, we used $K=3$ row (spatial) clusters and $G=8$ column (gene) clusters. As initialization for VEM and SVE,M we used randomly sampled binary assignment matrices $z,w$. For VEM and SVEM algorithms, we optimized over mean $\theta$, variance $\sigma$, and weights $\rho,\pi$. In the case of SVEM, weight updates alternated with 

Since solutions from SVEM seemed to be stable to changes in random seed, to estimate variation in performance, we considered a number of 100 experiments where each time, we subsampled the original setup with a subsampling rate of 0.5. Bar plots in Fig. \ref{fig:spatialtrans}B are averages over these 100 repetitions.

\end{appendices}
\putbib
\end{bibunit}
\end{document}